\documentclass[[letterpaper,12pt]{amsart}

\usepackage[hmargin=2.0cm,vmargin=2.0cm]{geometry}
\usepackage{tikz}
\usetikzlibrary{patterns}
\usetikzlibrary{decorations.pathmorphing}

\usepackage{amsmath,amsfonts,amssymb}

\usepackage{hyperref}
\usepackage{nameref,zref-xr}

\newtheorem{theorem}{Theorem}[section]
\newtheorem{proposition}[theorem]{Proposition}
\newtheorem{lemma}[theorem]{Lemma}
\newtheorem{corollary}[theorem]{Corollary}
\newtheorem{conjecture}[theorem]{Conjecture}

\newtheorem{rmk}[theorem]{Remark}
\newtheorem{observation}[theorem]{Observation}

\theoremstyle{definition}

\newtheorem{definition}[theorem]{Definition}

\newtheorem{remark}[theorem]{Remark}

\renewcommand{\d}{\partial}

\newcommand{\CM}{\mathcal {M}}
\newcommand{\CB}{\mathcal {B}}
\newcommand{\K}{{K}}
\newcommand{\oR}{\mathcal {R}}
\newcommand{\oSR}{\mathcal {SR}}
\newcommand{\mbQ}{\mathbb{Q}}
\newcommand{\oZ}{\overline{Z}}
\newcommand{\id}{\mathrm{id}}
\newcommand{\mcH}{\mathcal H}

\newcommand{\mbR}{\mathbb R}
\newcommand{\R}{{\mathbb R}}
\newcommand{\mbZ}{\mathbb Z}
\newcommand{\Z}{\mathbb Z}
\newcommand{\<}{\left <}
\renewcommand{\>}{\right >}
\newcommand{\mbC}{\mathbb C}
\newcommand{\mbL}{\mathbb{L}}
\newcommand{\CL}{\mathbb{L}}

\newcommand{\tF}{\widetilde F}

\newcommand{\ttau}{\widetilde{\tau}}

\newcommand{\oz}{\overline z}
\newcommand{\mcM}{\mathcal M}
\newcommand{\oM}{\overline{\mathcal M}}
\newcommand{\cM}{\mathcal M}
\newcommand{\oCM}{\overline{\mathcal M}}
\newcommand{\Mat}{\mathrm{Mat}}

\newcommand{\diag}{\mathop{\mathrm{diag}}\nolimits}
\newcommand{\tr}{\mathop{\mathrm{tr}}\nolimits}
\renewcommand{\deg}{\mathop{\mathrm{deg}}\nolimits}
\newcommand{\Faces}{\mathop{\mathrm{Faces}}\nolimits}
\newcommand{\Edges}{\mathop{\mathrm{Edges}}\nolimits}

\newcommand{\Aut}{\mathop{\mathrm{Aut}}\nolimits}
\renewcommand{\Re}{\mathop{\mathrm{Re}}\nolimits}
\renewcommand{\Im}{\mathop{\mathrm{Im}}\nolimits}
\newcommand{\tcR}{\widetilde{\mathcal{R}}}
\newcommand{\ok}{\overline{k}}

\newcommand{\exc}{\mathrm{exc}}
\newcommand{\cR}{\mathcal{R}}
\newcommand{\oU}{\overline{U}}
\newcommand{\toSR}{\widetilde{\oSR}}
\newcommand{\toR}{\widetilde{\oR}}
\newcommand{\tW}{\widetilde{W}}

\def\Tr{{\rm tr}\,}


\numberwithin{equation}{section}

\title{Refined open intersection numbers and the Kontsevich-Penner matrix model}

\author{Alexander Alexandrov}
\address{A.~Alexandrov: \newline Center for Geometry and Physics, Institute for Basic Science (IBS), Pohang 37673, Republic of Korea,
\newline CRM, Universit\'e de Montr\'eal, Montr´eal, Canada,
\newline Department of Mathematics and Statistics, Concordia University, Montreal, Canada, 
\newline and ITEP, Moscow, Russian Federation}
\email{alexandrovsash@gmail.com}

\author{Alexandr Buryak}
\address{A.~Buryak:\newline Department of Mathematics, ETH Zurich, Switzerland}
\email{buryaksh@gmail.com}

\author{Ran J. Tessler}
\address{R.~J.~Tessler:\newline Institute for Theoretical Studies, ETH Zurich, Switzerland}
\email{ran.tessler@eth-its.ethz.ch}

\pagestyle{myheadings}

\begin{document}

\begin{abstract}
A study of the intersection theory on the moduli space of Riemann surfaces with boundary was recently initiated in a work of R. Pandharipande, J.~P. Solomon and the third author, where they introduced open intersection numbers in genus~$0$. Their construction was later generalized to all genera by J.~P.~Solomon and the third author. 
In this paper we consider a refinement of the open intersection numbers by distinguishing contributions from surfaces with different numbers of boundary components, and we calculate all these numbers. We then construct a matrix model for the generating series of the refined open intersection numbers and conjecture that it is equivalent to the Kontsevich-Penner matrix model. An evidence for the conjecture is presented. Another refinement of the open intersection numbers, which describes the distribution of the boundary marked points on the boundary components, is also discussed.
\end{abstract}

\maketitle

\section{Introduction}
A compact Riemann surface is a compact connected complex manifold of dimension~$1$. Denote by~$\mcM_{g,l}$ the moduli space of all compact Riemann surfaces of genus~$g$ with~$l$ marked points. P.~Deligne and D.~Mumford defined a natural compactification $\mcM_{g,l}\subset\oM_{g,l}$ via stable curves (with possible nodal singularities) in~\cite{DM69}. The moduli space $\oM_{g,l}$ is a non-singular complex orbifold of dimension~$3g-3+l$. It is defined to be empty unless the stability condition
\begin{gather}\label{closed stability}
2g-2+l>0
\end{gather}
is satisfied. We refer the reader to~\cite{DM69,HM98} for the basic theory.

In his seminal paper~\cite{Wit91}, E.~Witten initiated new directions in the study of $\oM_{g,l}$. For each marking index~$i$ consider the cotangent line bundle $\mbL_i \rightarrow \oM_{g,l}$, whose fiber over a point $[\Sigma,z_1,\ldots,z_l]\in \oM_{g,l}$ is the complex cotangent space $T_{z_i}^*\Sigma$ of $\Sigma$ at $z_i$. Let $\psi_i\in H^2(\oM_{g,l},\mathbb{Q})$ denote the first Chern class of $\mbL_i$, and write
\begin{gather}\label{eq:products}
\<\tau_{a_1} \tau_{a_2} \cdots \tau_{a_l}\>_g^c:=\int_{\oM_{g,l}} \psi_1^{a_1} \psi_2^{a_2} \cdots \psi_l^{a_l}.
\end{gather}
The integral on the right-hand side of~\eqref{eq:products} is well-defined, when the stability condition~\eqref{closed stability} is satisfied, all the $a_i$ are non-negative integers and the dimension constraint $3g-3+l=\sum a_i$ holds. In all other cases $\<\prod\tau_{a_i}\>_g^c$ is defined to be zero. The intersection products~\eqref{eq:products} are often called descendent integrals or \emph{intersection numbers}. Let $t_i$, $i\geq 0$, be formal variables and let
\begin{align*}
&F^c(t_0,t_2,\ldots):=\sum_{g\ge 0}F_g^c(t_0,t_1,\ldots),\quad\text{where}\\
&F^c_g(t_0,t_1,\ldots):=\sum_{l\ge 1}\sum_{a_1,\ldots,a_l\ge 0}\<\tau_{a_1}\tau_{a_2}\cdots\tau_{a_l}\>^c_g\frac{\prod t_{a_i}}{l!}.
\end{align*}
The generating series $F^c$ is called the \emph{closed free energy}. The exponent $\tau^c:=\exp(F^c)$ is called the \emph{closed partition function}. Witten's conjecture~(\cite{Wit91}), proved by M.~Kontsevich~(\cite{Kon92}), says that the closed partition function~$\tau^c$ becomes a tau-function of the KdV hierarchy after the change of variables~$t_n=(2n+1)!!T_{2n+1}$. Integrability immediately follows \cite{KMMMZ92} from Kontsevich's matrix integral representation
\begin{gather}\label{eq:Kontsmm}
\left.\tau^c\right|_{T_k=\frac{1}{k}\tr\Lambda^{-k}}=c_{\Lambda,M}\int_{\mcH_M}e^{\frac{1}{6}\tr H^3-\frac{1}{2}\tr H^2\Lambda} dH,
\end{gather}
where one integrates over the space of Hermitian $M\times M$ matrices, $\Lambda=\diag(\lambda_1,\ldots,\lambda_M)$ is a diagonal matrix with positive real entries and
$$
c_{\Lambda,M}:=(2\pi)^{-\frac{M^2}{2}}\prod_{i=1}^M\sqrt{\lambda_i}\prod_{1\le i<j\le M}(\lambda_i+\lambda_j).
$$

In~\cite{PST14} the authors started to develop a parallel theory for Riemann surfaces with boundary. A Riemann surface with boundary is a connected $1$~dimensional complex manifold with finite positive number of circular boundaries, each with a holomorphic collar structure. A compact Riemann surface is not viewed here as a Riemann surface with boundary. Given a Riemann surface with boundary $(X,\d X)$, we can canonically construct a double via Schwarz reflection through the boundary. The double $D(X,\d X)$ of $(X,\d X)$ is a compact Riemann surface. The doubled genus of $(X,\d X)$ is defined to be the usual genus of $D(X,\d X)$. On a Riemann surface with boundary~$(X,\d X)$, we consider two types of marked points. The markings of interior type are points of $X\backslash\d X$. The markings of boundary type are points of $\d X$. Let $\mcM_{g,k,l}^\R$ denote the moduli space of Riemann surfaces with boundary of doubled genus $g$ with $k$ distinct boundary markings and $l$ distinct interior markings. The moduli space $\mcM_{g,k,l}^\R$ is defined to be empty unless the stability
condition
$$
2g-2+k+2l>0
$$
is satisfied. The moduli space $\mcM_{g,k,l}^\R$ may have several connected components depending upon the topology of $(X,\d X)$ and the cyclic orderings of the boundary markings. Foundational issues concerning the construction of~$\mcM_{g,k,l}^\R$ are addressed in~\cite{Liu02}. The moduli space $\mcM_{g,k,l}^\R$ is a real orbifold of real dimension $3g-3+k+2l$, it is in general not compact and may be not orientable when $g>0.$

Since interior marked points have well-defined cotangent spaces, there is no difficulty in defining the cotangent line bundles $\mbL_i\to\mcM_{g,k,l}^\R$ for each interior marking, $i=1,\ldots,l$. Naively, one may want to consider a descendent theory via integration of products of the first Chern classes $\psi_i=c_1(\mbL_i)\in H^2(\oM_{g,k,l}^\R,\mbQ)$ over a compactification $\oM_{g,k,l}^\R$ of $\mcM_{g,k,l}^\R$. Namely,
\begin{gather}\label{eq:open products}
\<\tau_{a_1}\tau_{a_2}\cdots\tau_{a_l}\sigma^k\>^o_g:=2^{-\frac{g+k-1}{2}}\int_{\oM_{g,k,l}^\R}\psi_1^{a_1}\psi_2^{a_2}\cdots\psi_l^{a_l},
\end{gather}
when
$$
2\sum a_i=3g-3+k+2l,
$$
and in all other cases $\<\tau_{a_1}\tau_{a_2}\cdots\tau_{a_l}\sigma^k\>^o_g:=0$. Note that, in particular, $g+k$ must always be odd in order to get non-zero numbers. The new insertion $\sigma$ corresponds to the addition of a boundary marking. The coefficient in front of the integral on the right-hand side of~\eqref{eq:open products} appears to be useful for the description of the new intersection numbers, that are called the \emph{open intersection numbers}, in terms of integrable systems.

In genus $0$ the moduli $\oM_{0,k,l}:=\oM_{0,k,l}^\R$ is canonically oriented for $k$ odd, and one can calculate an integral of the form $\int_{\oM_{0,k,l}}\psi_1^{a_1}\psi_2^{a_2}\cdots\psi_l^{a_l},$ given \emph{boundary conditions} for the line bundles~$\mbL_i.$ More precisely, given nowhere vanishing boundary conditions $s\in C^\infty(E\to\partial\oCM_{0,k,l}),$ for $E = \bigoplus \mbL_i^{\oplus a_i},$ one may define the integral (\ref{eq:open products}) by
\begin{gather}\label{eq:open products_0}
\<\tau_{a_1}\tau_{a_2}\cdots\tau_{a_l}\sigma^k\>^o_0:=2^{-\frac{k-1}{2}}\int_{\oM_{0,k,l}}e(E,s),
\end{gather}
where $e(E,s)$ is the \emph{relative Euler class}. The result depends on the boundary conditions.

In \cite{PST14} a family of boundary conditions, called \emph{canonical boundary conditions} for each bundle~$\mbL_i$ is constructed. It is proven that for a generic choice of canonical boundary conditions, $s_{ij}\in C^\infty_m(\mbL_i\to\partial\oM_{0,k,l}),~i\in[l],j\in[a_i]$, the boundary conditions $s=\bigoplus s_{ij}$ is nowhere vanishing along $\partial\oM_{0,k,l},$ assuming $2\sum a_i=3g-3+k+2l$. Here we use the notation~$[l]$ for a set $\{1,2,\ldots,l\}$ and the subscript~$m$ indicates that \emph{multi-valued} section, rather than sections, are used. It is then shown that any two generic choices of canonical boundary conditions give rise to the same integral~\eqref{eq:open products_0}. In \cite{PST14} all open intersection numbers for doubled genus $0$ were calculated, and the authors proposed a conjectural description of the open intersection numbers in all genera. Let~$s$ be a formal variable. Define
\begin{align*}
&F^o(t_0,t_1,\ldots,s):=\sum_{g\ge 0}F^o_g(t_0,t_1,\ldots,s),\quad\text{where}\\
&F^o_g(t_0,t_1,\ldots,s):=\sum_{k,l\ge 0}\sum_{a_1,\ldots,a_l\ge 0}\<\tau_{a_1}\cdots\tau_{a_l}\sigma^k\>^o_g\frac{s^k\prod t_{a_i}}{k!l!}.
\end{align*}
The generating series $F^o$ is called the \emph{open free energy} and the exponent $\tau^o:=\exp(F^o+F^c)$ is called the \emph{open partition function}. The conjecture of R.~Pandharipande, J.~P.~Solomon and the third author~(\cite{PST14}) says that the generating series~$F^o$ satisfies a certain system of partial differential equations that is called in~\cite{PST14} the open KdV equations.

In higher genus the construction of open intersection numbers needs some refinement. Firstly, the moduli space $\cM_{g,k,l}$ is in general non-orientable for $g>0.$ In order to overcome this issue, J.~P.~Solomon and the third author define \emph{graded spin surfaces}, which are open surfaces with a spin structure and some extra structure. In \cite{STa} the moduli of graded spin surfaces $\oM_{g,k,l}$ is defined and is proved to be canonically oriented. When $g=0$ it coincides with $\oM_{0,k,l}^\R.$ Canonical boundary conditions are then constructed for the line bundles~$\mbL_i,$ and again it is proven that one can define
\begin{gather}\label{eq:open products_g}
\<\tau_{a_1}\tau_{a_2}\cdots\tau_{a_l}\sigma^k\>^o_g:=2^{-\frac{g+k-1}{2}}\int_{\oM_{g,k,l}}e(E,s),
\end{gather}
where $e(E,s)$ is the relative Euler with respect to the canonical boundary conditions. As in $g=0,$ generic choices of canonical boundary conditions give rise to the same integrals.
It should be stressed that, although \cite{STa} has not appeared yet, the moduli and boundary conditions mentioned above are fully described in Section~2 of~\cite{Tes15}.

A combinatorial formula for the open intersection numbers was found in~\cite{Tes15}. The conjecture of R.~Pandharipande, J.~P.~Solomon and the third author was proved in~\cite{BT15}. Properties of the open free energy $F^o$ were intensively studied in~\cite{Ale15a,Ale15b,Ale16,Bur15,Bur16,Saf16a}. In particular, in~\cite{Bur15,Bur16} the second author introduced a formal power series $F^{o,ext}(t_0,t_1,\ldots,s_0,s_1,\ldots)$, where $s_0=s$ and $s_1,s_2,\ldots$ are new formal variables. The function~$F^{o,ext}$ is an extension of the open free energy~$F^o$,
$$
\left.F^{o,ext}\right|_{s_{\ge 1}=0}=F^o,
$$
and, therefore, it was called the \emph{extended open free energy}. The exponent $\tau^{o,ext}:=\exp(F^{o,ext}+F^c)$ was called the \emph{extended open partition function}. In~\cite{Bur15,Bur16} the new variables $s_i$, $i\ge 1$, appeared naturally from the point of view of integrable systems. The second author suggested to consider them as descendents of the boundary marked points. A geometric construction of the descendent theory for the boundary marked points, a derivation of the combinatorial formula for it, and a geometric proof of the conjecture of~\cite{Bur15} regarding the extended theory, will appear in~\cite{STb},\cite{Tes}. 

In~\cite{Bur16} the second author found a simple relation of the extended open partition function~$\tau^{o,ext}$ to the wave function of the Kontsevich-Witten tau-function. In~\cite{Ale15b} the first author proved that both extended open partition function and closed partition function belong to the same family of tau-functions, described by the matrix integrals of Kontsevich type. Namely, the Kontsevich-Penner integral
\begin{gather}\label{eq:KPmatrixmod}
\left.\tau_N\right|_{T_k=\frac{1}{k}\tr\Lambda^{-k}}:=c_{\Lambda,M}\int_{\mcH_M}e^{\frac{1}{6}\tr H^3-\frac{1}{2}\tr H^2\Lambda}\frac{\det^N\Lambda}{\det^N(\Lambda-H)}dH
\end{gather}
for $N=0$ coincides with Kontsevich's integral (\ref{eq:Kontsmm}). In~\cite{Ale15b} it was shown that for $N=1$ it describes the extended open partition function. From this matrix integral representation it immediately follows that the extended open partition function is a tau-function of the KP hierarchy, moreover, it is related to the closed partition function $\tau^c$ by equations of the modified KP hierarchy \cite{KMMM93}.  A full set of the Virasoro and W-constrains for the tau-function, described by the Kontsevich-Penner matrix integral (\ref{eq:KPmatrixmod}), was derived in~\cite{Ale15b} for arbitrary $N$. Later these constraints were described by the first author \cite{Ale16} in terms of the so-called free bosonic fields.

\subsection{Refined, very refined and extended refined open intersection numbers}

As we already discussed above, the moduli space $\mcM_{g,k,l}$ may have several components depending upon the topology of Riemann surface with boundary. For $b\ge 1$, denote by $\mcM_{g,k,l,b}$ the submoduli of $\mcM_{g,k,l}$ that consists of isomorphism classes of surfaces with boundary with $b$ boundary components. So we have the decomposition
$$
\mcM_{g,k,l}=\bigsqcup_{\substack{1\le b\le g+1\\b+g=1(\mathrm{mod}\,\,2)}}\mcM_{g,k,l,b}.
$$
We can decompose further. Let $P(k,b)$ be the set of unordered $b$-tuples of non-negative integers $\ok=(k_1,\ldots,k_b)$, $k_i\ge 0$, such that $\sum k_i=k$. For $\ok=(k_1,\ldots, k_b)\in P(k,b)$ let $\mcM_{g,\bar{k},l}\subset\mcM_{g,k,l,b}$ be the submoduli of graded smooth Riemann surfaces with boundary of genus~$g$, with~$l$ internal marked points,~$b$ boundary components and~$k$ boundary marked points distributed on the boundary components according to the $b$-tuple~$\ok$. Clearly,
\[\mcM_{g,k,l,b}=\bigsqcup_{\bar{k}\in P(k,b)}\mcM_{g,\bar{k},l}.\]
It is also easy to see that if we define $\oM_{g,k,l,b}$ as the closure of $\mcM_{g,k,l,b}$ in $\oM_{g,k,l}$ and $\oM_{g,\ok,l}$ as the closure of $\mcM_{g,\ok,l}$ in $\oM_{g,k,l,b}$, then
\begin{align*}
\oM_{g,k,l}=&\bigsqcup_{\substack{1\le b\le g+1\\b+g=1(\mathrm{mod}\,\,2)}}\oM_{g,k,l,b},\\
\oM_{g,k,l,b}=&\bigsqcup_{\bar{k}\in P(k,b)}\oM_{g,\bar{k},l}.
\end{align*}
In \cite{STa} the authors defined open intersection numbers over any connected component of the moduli space~$\oM_{g,k,l}$. To be precise, they proved the following result.
\begin{theorem}\label{thm:int_defined}
Let $a_1,\ldots,a_l,k$ be non-negative integers satisfying $2\sum a_i=3g-3+k+2l,$ and let $E=\sum_{i=1}^k\mbL_i^{\oplus a_i}.$ Then for any connected component $C$ of $\oCM_{g,k,l}$ there exist nowhere vanishing canonical boundary conditions $s$ in the sense of \cite{PST14},\cite{STa}. Thus one may define the integral $\int_Ce(E,s)$. Moreover, any two nowhere vanishing choices of the canonical boundary conditions give rise to the same integral.
\end{theorem}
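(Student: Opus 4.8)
My plan is to deduce the statement for a single component from the corresponding result over the full space $\oCM_{g,k,l}$ established in \cite{PST14,STa}, by checking that every ingredient of that argument is local along the boundary and therefore restricts to $C$. The guiding observation is that a connected component $C$ is simultaneously open and closed in $\oCM_{g,k,l}$, so its boundary as a manifold with corners is exactly $\partial C=C\cap\partial\oCM_{g,k,l}$, a union of codimension-one boundary strata of the full moduli space.

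First I would treat existence. The canonical boundary conditions of \cite{PST14,STa} for each $\mbL_i$ are built stratum by stratum from the geometry of the degenerating surfaces, with no reference to which component a given stratum bounds; restricting them to the strata comprising $\partial C$ therefore yields canonical boundary conditions on $C$ in the sense of the cited works. The statement that a generic choice $s=\bigoplus s_{ij}$ is nowhere vanishing is likewise a transversality statement made stratum by stratum, so a generic choice is already nowhere vanishing along $\partial C$. Since the orientation of \cite{STa} restricts to each component, $C$ is canonically oriented, and with $s$ nowhere vanishing on $\partial C$ the relative Euler class $e(E|_C,s)\in H^{\mathrm{top}}(C,\partial C)$ is defined; pairing it with the fundamental class of $(C,\partial C)$ gives a well-defined number $\int_C e(E,s)$.

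For the independence, given two nowhere-vanishing generic canonical choices $s_0,s_1$ on $\partial C$, I would connect them by a path $s_t$ within the canonical boundary conditions. The space of canonical boundary conditions on $\partial C$ is inherited from the connected space of canonical boundary conditions on the full moduli, and the non-transverse members form a positive-codimension subset once the multivaluedness is resolved on a suitable cover, so a generic path $s_t$ stays nowhere vanishing for every $t$. Homotopy invariance of the relative Euler class under a nowhere-vanishing homotopy of the boundary section then gives $e(E,s_0)=e(E,s_1)$ in $H^{\mathrm{top}}(C,\partial C)$, whence $\int_C e(E,s_0)=\int_C e(E,s_1)$; this is precisely the argument of \cite{PST14,STa} run over $C\times[0,1]$ in place of $\oCM_{g,k,l}\times[0,1]$. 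The main obstacle will be the bookkeeping forced by the multivalued nature of the sections $s_{ij}$: the transversality and codimension counts in both the nowhere-vanishing step and the homotopy step must be carried out on the appropriate cover, and one must verify that the relevant vanishing loci over the deeper corners of $\partial C$ retain codimension at least $\operatorname{rank}E$. Once it is checked that these local transversality statements are insensitive to which component is being bounded, the theorem follows from the localized \cite{STa} argument.
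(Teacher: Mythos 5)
You should know at the outset that the paper contains no proof of this statement: Theorem \ref{thm:int_defined} is quoted as a result of \cite{STa} (listed as ``to appear''), with the remark that the relevant moduli, orientations and boundary conditions are documented in Section~2 of \cite{Tes15}. So there is no internal argument to compare yours against line by line; the only meaningful comparison is between your sketch and the global machinery of \cite{PST14},\cite{STa} that the paper reports. At that level your outline is sound and almost certainly reflects the intended mechanism: the canonical boundary conditions are constructed stratum by stratum along $\partial\oCM_{g,k,l}$ with no reference to which clopen component a stratum bounds, the orientation of \cite{STa} restricts to each component $C$, and hence existence, generic nowhere-vanishing, and the homotopy argument over $C\times[0,1]$ all localize exactly as you describe. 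Your reduction of the per-component statement to the global one is a legitimate argument, though since every substantive ingredient (construction of canonical multisections, transversality for multivalued sections on covers, codimension estimates at corners) is deferred to \cite{PST14},\cite{STa},\cite{Tes15}, your proposal ultimately sits at the same evidentiary level as the paper's citation rather than constituting an independent proof.

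One caveat deserves emphasis, because as phrased it is the weakest link. Your claim that ``a generic path $s_t$ stays nowhere vanishing for every $t$'' is false if read as plain transversality: with $\mathrm{rank}\,E=\dim C=2n$, a generic homotopy of sections over the $2n$-dimensional space $\partial C\times[0,1]$ has isolated zeros, not none. What rescues the independence argument in \cite{PST14} is the canonical structure itself: canonical multisections over boundary strata are pulled back from moduli of strictly smaller dimension, so their vanishing loci have excess codimension even in one-parameter families, and this is what forces the boundary contribution of the homotopy to vanish. You gesture at this in your closing sentence about codimension counts over the corners, but it is the crux of the whole theorem rather than bookkeeping, and a complete write-up would have to carry out that excess-codimension count (on the cover resolving the multivaluedness) explicitly, exactly as \cite{STa} is reported to do.
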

The theorem allows us to define \emph{refined open intersection numbers} as the integrals of monomials in psi-classes over the components $\oM_{g,k,l,b}$ of $\oM_{g,k,l}$ and \emph{very refined open intersection numbers} as the corresponding integrals over the components $\oM_{g,\bar{k},l}$:
\begin{align}
\<\tau_{a_1}\tau_{a_2}\cdots\tau_{a_l}\sigma^k\>^o_{g,b}:=&2^{-\frac{g+k-1}{2}}\int_{\oM_{g,k,l,b}}e(E,s),\label{eq:refined open products}\\
\<\tau_{a_1}\tau_{a_2}\cdots\tau_{a_l}\sigma^{\bar{k}}\>^o_{g}:=&2^{-\frac{g+k-1}{2}}\int_{\oM_{g,\bar{k},l}}e(E,s),\label{eq:very refined open products}
\end{align}
where $a_1,\ldots,a_l,k,E$ are as in Theorem \ref{thm:int_defined} and $s$ is a nowhere vanishing canonical multisection. These new intersection numbers are rational numbers. Let $N$ be a positive integer. Introduce the \emph{refined open free energy}~$F^{o,N}$ by
$$
F^{o,N}(t_0,t_1,\ldots,s):=\sum_{\substack{g,k,l\ge 0\\b\ge 1}}\sum_{a_1,\ldots,a_l\ge 0}\<\tau_{a_1}\cdots\tau_{a_l}\sigma^k\>^o_{g,b}\frac{N^b s^k\prod t_{a_i}}{k!l!}.
$$
Clearly, $F^{o,1}=F^o$. Let $q_0,q_1,\ldots$ be formal variables. Introduce the \emph{very refined open free energy}~$\tF^o$ by
$$
\tF^o(t_0,t_1,\ldots,q_0,q_1,\ldots):=\sum_{\substack{g,k,l\ge 0\\b\ge 1}}\sum_{\substack{\ok=(k_1,\ldots,k_b)\in P(k,b)\\a_1,\ldots,a_l\ge 0}}\<\tau_{a_1}\cdots\tau_{a_l}\sigma^{\ok}\>^o_g\frac{\prod t_{a_i}\prod q_{k_j}}{k!l!}.
$$
Of course, the function $F^{o,N}$ can be easily expressed in terms of the function $\tF^o$:
$$
F^{o,N}=\left.\tF^o\right|_{q_i=N s^i}.
$$
The reason, why we want to consider the refined open free energy $F^{o,N}$ separately, is that it admits a natural extension, while we do not know whether the very refined open free energy~$\tF^o$ can be extended. The exponents $\tau^o_N:=\exp(F^{o,N}+F^c)$ and~$\ttau^o:=\exp(\tF^o+F^c)$ will be called the \emph{refined open partition function} and the \emph{very refined open partition function} respectively.

In this paper we generalize the result of the third author from~\cite{Tes15} and find a combinatorial formula for the very refined open intersection numbers. We also derive matrix models for the refined and the very refined open partition functions. We then show that the form of our matrix model for the refined open partition function~$\tau^o_N$ suggests a natural way to add the variables $s_i$, $i\ge 1$, in it. We denote the extended function by~$\tau^{o,ext}_N$ and call it the \emph{extended refined open partition function}. This function satisfies the properties
$$
\left.\tau^{o,ext}_N\right|_{s_{\ge 1}=0}=\tau^o_N,\qquad \tau^{o,ext}_1=\tau^{o,ext}.
$$
Therefore, it is natural to view the variables $s_i$, $i\ge 1$, in the function $\tau^{o,ext}_N$ as descendents of the boundary marked points in the refined open intersection theory. We also prove that the extended refined open partition function~$\tau^{o,ext}_N$ is related to the very refined open partition function~$\ttau^o$ by a simple transformation. Moreover, we show that this transformation is invertible, so the collection of functions $\tau^{o,ext}_N$, $N\ge 1$, and the function $\ttau^o$ are in a certain sense equivalent. Finally, we conjecture that the function $\tau^{o,ext}_N$ coincides with the tau-function $\tau_N$ given by the Kontsevich-Penner matrix integral~\eqref{eq:KPmatrixmod} and present an evidence for the conjecture. In particular, we derive the string and the dilaton equations for the function $\tau^{o,ext}_N$ and also prove the conjecture in genus~$0$ and~$1$.

\begin{remark}
In~\cite{Saf16a} the author conjectured that there exists a refinement of the extended open partition function~$\tau^{o,ext}$ that distinguishes contributions from Riemann surfaces with different numbers of boundary components and that coincides with the Kontsevich-Penner tau-function~$\tau_N$. Since we construct this refinement, our conjecture can be considered as a stronger version of the conjecture of B.~Safnuk from~\cite{Saf16a}.
\end{remark}

\begin{remark}
Another approach to refined open intersection numbers was recently suggested by B.~Safnuk in~\cite{Saf16b}. His approach is quite different to ours, because, in particular, he does not consider boundary marked points and, moreover, he uses a different compactification of~$\mcM_{g,0,l}$. His intersection numbers are given as integrals of some specific volume forms. B.~Safnuk also has a combinatorial formula for his refined open intersection numbers and it directly gives the Kontsevich-Penner matrix model. It would be interesting to obtain a direct relation between the two approaches.
\end{remark}

\subsection{Organization of the paper}

In Section~\ref{section:refined numbers} we show that the construction of~\cite{STa} admits a refinement that allows to define the products~\eqref{eq:refined open products} and \eqref{eq:very refined open products}. We also prove combinatorial formulas for the refined  and the very refined open intersection numbers. In Section~\ref{section:matrix model} we construct a matrix model for the very refined open partition function~$\ttau^{o}_N$. We then show that the specialization of it, giving the refined open partition function, has a natural extension, where new variables can be interpreted as descendents of boundary marked points. We prove that the extended refined open partition function~$\tau^{o,ext}_N$ is related to the very refined open partition function by a simple transformation. We also prove the string and the dilaton equations for~$\tau^{o,ext}_N$. In Section~\ref{section:main conjecture} we formulate our conjecture about the relation between the function~$\tau^{o,ext}_N$ and the Kontsevich-Penner tau-function~$\tau_N$ and present an evidence for it.

\subsection{Acknowledgements}

We would like to thank Leonid Chekhov and Rahul Pandharipande for useful discussions. The work of A.A. was supported in part by IBS-R003-D1, by the Natural Sciences and Engineering Research Council of Canada (NSERC), by the Fonds de recherche du Qu\'ebec Nature et technologies (FRQNT) and by RFBR grants 15-01-04217 and 15-52-50041YaF.   A. B. was supported by Grant ERC-2012-AdG-320368-MCSK in the group of R.~Pandharipande at ETH Zurich and Grant RFFI-16-01-00409. R.T. is supported by Dr. Max R\"ossler, the Walter Haefner Foundation and the ETH Z\"urich
Foundation.


\section{Very refined open intersection numbers}\label{section:refined numbers}

\subsection{Reviewing the proof of the combinatorial formula of \cite{Tes15}}
In order to prove a combinatorial formula for the refined open intersection numbers, we first review the proof technique in the rather long paper \cite{Tes15}. Throughout this subsection we shall address to places in \cite{Tes15}.

\emph{Step 1}.
The starting point of \cite{Tes15} is the following well known fact. Let $M$ be an orbifold with boundary or even corners, of real dimension $2n.$ Suppose $E\to M$ is a vector bundle of real rank~$2n,$ and $s$ a nowhere vanishing (possibly multi-valued) section of $E\to\partial M.$ Let $\pi:S\to M$ be the sphere bundle associated to $E,~\Phi$ an angular form and $\Omega$ an Euler form on $M.$ In other words, $\Phi$ is a $2n-1$ form on the total space $S$ with
\begin{itemize}
\item
$\int_{\pi^{-1}(p)} \Phi = 1,~\forall p\in M$.
\item
$d\Phi = -\pi^*\Omega.$
\end{itemize}
Then we have
\begin{equation}\label{eq:Euler}\int_M e(E,s)=\int_M\Omega+\int_{\partial M}s^*\Phi.\end{equation}

\emph{Step 2}.
In \cite{Tes15}, Section $4,$ using the theory of Jenkins-Strebel differential \cite{Str84}, with the required modifications for graded surfaces with boundary, a combinatorial stratification of $\CM_{g,k,l}$ is constructed. The stratification, given a choice of positive perimeters $\mathbf{p}=\{p_1,\ldots, p_l\},$ consists of cells parameterized by metric graded ribbon graphs $(G,z).$ These are ribbon graphs with a (positive) metric on edges, $l+b$ holes, where the last $b$ holes, called \emph{boundaries} correspond to boundary components, the $i^{th}$ hole for $1\leq i\leq l$ is called a \emph{face} and is of perimeter $p_i,$ and there are $k$ boundary vertices which correspond to boundary marked points. $z$ is an index for the graded structure, whose description is not important at the moment. The topology of the cells is defined in the natural way using the metric. A cell $\CM_{(G',z')}$ is a face of a cell $\CM_{(G,z)}$ if $G'$ is obtained from $G$ by contracting some edges and $z'$ is the degenerated graded structure. The edge contraction operation allows a compactification of the combinatorial moduli, which is a quotient of $\oCM_{g,k,l},$ generically $1:1.$ Denote this compactification by $\oCM_{g,k,l}^{comb}(\mathbf{p}).$ Write also $\oCM_{g,k,l}^{comb}=\coprod_{p_1,\ldots, p_l>0}\oCM_{g,k,l}^{comb}(\mathbf{p}),$ and endow it with the natural topology and piecewise linear structure obtained by the graphs description. For later uses, write $\CM_{(G',z')}=\partial_e\CM_{(G,z)}$ if $(G',z')$ is the result of contracting the edge $e$ of $G.$

Not only the moduli, but also the $S^1$ bundles associated to the line bundles $\CL_i$ have a combinatorial counterpart, first obtained in \cite{Kon92}. Using these, in \cite{Tes15}, Subsection $4.3, $  a combinatorial~$S^{2n-1}$ bundle $S=S(E)$ is constructed for any vector bundle $E = \bigoplus \CL_i^{a_i},$ where $n=\sum a_i.$ It is then shown, in Proposition $4.39,$ that canonical multisections used to calculate the open intersection numbers can be taken to be pull backs of canonical multisections over $\oCM_{g,k,l}^{comb}.$ Call multisections of $S$ whose pull back is canonical \emph{combinatorial canonical}.
\cite{Tes15}, Lemma $4.42$ says
\begin{lemma}\label{lem:442}
For any $p_1,\ldots, p_l>0,$
\[\int_{\oCM_{g,k,l}}e(E,s) = \int_{\oCM_{g,k,l}^{comb}(\mathbf{p})}e(S,s'),\]
where $s$ is a canonical multisection which is a pull back of the combinatorial canonical multisection~$s'.$
\end{lemma}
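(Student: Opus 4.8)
The plan is to deduce the identity from the naturality of the relative Euler number under the generically $1:1$ quotient map furnished by the Jenkins--Strebel construction. Recall from Step~2 that, together with edge contraction, this construction produces a continuous, piecewise-linear, orientation-preserving map
\[
q\colon \oCM_{g,k,l}\longrightarrow \oCM_{g,k,l}^{comb}(\mathbf{p})
\]
realizing the combinatorial moduli as a quotient of $\oCM_{g,k,l}$ that is $1:1$, hence of degree one, away from a piecewise-linear subset of positive codimension, and that carries $\d\oCM_{g,k,l}$ onto $\d\oCM_{g,k,l}^{comb}(\mathbf{p})$. By the construction of $S=S(E)$ in Subsection~4.3 and by Proposition~4.39, the combinatorial sphere bundle pulls back under $q$ to the sphere bundle of $E=\bigoplus\CL_i^{a_i}$, and the canonical multisection $s$ is precisely the pullback $q^*s'$ of the combinatorial canonical multisection $s'$. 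Thus each side of the asserted equality is the relative Euler number of an oriented sphere bundle equipped with a nowhere vanishing boundary multisection, computed on the source and on the target of a degree-one map.

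First I would make the naturality precise via eq.~\eqref{eq:Euler}. On the combinatorial side choose a combinatorial Euler form $\Omega'$ and angular form $\Phi'$, so that $\int_{\pi^{-1}(p)}\Phi'=1$ and $d\Phi'=-\pi^*\Omega'$. Since $q$ maps each fiber of the geometric sphere bundle isomorphically onto the corresponding combinatorial fiber, the pullbacks $\Omega:=q^*\Omega'$ and $\Phi:=q^*\Phi'$ are again an Euler form and an angular form, and $s^*\Phi=q^*((s')^*\Phi')$ along the boundary. Applying eq.~\eqref{eq:Euler} on both sides, it then suffices to establish
\[
\int_{\oCM_{g,k,l}}q^*\Omega'=\int_{\oCM_{g,k,l}^{comb}(\mathbf{p})}\Omega',
\qquad
\int_{\d\oCM_{g,k,l}}q^*\big((s')^*\Phi'\big)=\int_{\d\oCM_{g,k,l}^{comb}(\mathbf{p})}(s')^*\Phi'.
\]
Both are instances of the change-of-variables formula for a proper, orientation-preserving, degree-one map between oriented orbifolds with corners, applied to top-degree forms on the base and on the boundary respectively.

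The hard part will be controlling the non-injective locus of $q$ and its behaviour at the corners. Concretely I would: (i) identify the subset $Z\subset\oCM_{g,k,l}^{comb}(\mathbf{p})$ over which $q$ fails to be a local homeomorphism, consisting of combinatorial strata where the Jenkins--Strebel graph degenerates (coinciding critical trajectories, vertices of excess valence), and show it is a subcomplex of positive codimension, hence of measure zero for the top-degree forms $\Omega'$ and $(s')^*\Phi'$; (ii) check that the canonical orientations of $\oCM_{g,k,l}$ and $\oCM_{g,k,l}^{comb}(\mathbf{p})$ correspond, so that $q$ is genuinely orientation-preserving; and (iii) verify that the corner strata created by the boundary marked points are matched, so that the boundary term in eq.~\eqref{eq:Euler} transforms correctly. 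Once $Z$ and its preimage are discarded, $q$ restricts to an orientation-preserving homeomorphism of full-measure open subsets and the displayed integrals coincide, giving the lemma. The independence of the right-hand side on the perimeters $\mathbf{p}=\{p_1,\dots,p_l\}$ is then automatic, since the left-hand side does not involve $\mathbf{p}$; alternatively it follows from the piecewise-linear homeomorphism $\oCM_{g,k,l}^{comb}(\mathbf{p})\cong\oCM_{g,k,l}^{comb}(\mathbf{p}')$ given by rescaling edge lengths, under which $S$, $\Omega'$ and $s'$ correspond.
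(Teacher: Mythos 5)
Note first that the paper gives no proof of this lemma at all: it is quoted verbatim as Lemma~4.42 of \cite{Tes15} within the survey of that paper's argument, and the intended proof is exactly the one you outline — by Proposition~4.39 of \cite{Tes15} the sphere bundle and the canonical multisection on $\oCM_{g,k,l}$ are pullbacks along the generically $1{:}1$, orientation-compatible quotient onto $\oCM_{g,k,l}^{comb}(\mathbf{p})$, and the two relative Euler integrals are then identified by a degree-one change of variables, with all identifications confined to cells of positive codimension. Your further reduction through \eqref{eq:Euler} and the explicit measure-zero treatment of the degenerate locus, the orientation check, and the matching of boundary strata supply precisely the details this step requires, so the proposal is correct and follows essentially the same route as the source.
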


\emph{Step 3}.
In \cite{Kon92} a combinatorial angular form $\alpha_i$ and a combinatorial curvature form $\omega_i$ were constructed, and using them a combinatorial formula for the closed numbers was obtained, by integration over highest dimensional cells, those parameterized by trivalent ribbon graph.
The main result of \cite{Tes15}, Section $3$ is an explicit formula for the angular form $\Phi$ of a bundle which is a direct sum of complex line bundles $L_i,$ in terms of their angular forms $\alpha_i$ and curvature forms~$\omega_i,$ such that $d\Phi$ is the pull back of $-\wedge \omega_i.$
Plugging this and (\ref{eq:Euler}) in Lemma \ref{lem:442} we get
\[2^{\frac{g+k-1}{2}}\langle\tau_{a_1}\cdots \tau_{a_l}\sigma^k\rangle_g^o = \int_{\oCM_{g,k,l}^{comb}}\bigwedge_{i=1}^l\omega_i^{a_i}+\int_{\partial\oCM_{g,k,l}^{comb}}(s')^*\Phi,\]
where $\Phi$ is the explicit angular form for $\bigoplus \CL_i^{\oplus a_i}.$

Finally, this equation can be simplified by noting that only highest dimensional cells of the combinatorial moduli and its boundary contribute to the integrals. The highest dimensional cells of $\oCM_{g,k,l}^{comb}$ are those parameterized by trivalent graded ribbon graphs. Denote their set by $\oSR_{g,k,l}^0.$ For any such graph, $(G,z)$ write $Br(G)$ for the set of \emph{bridges}, that is, edges which are either internal edges between two boundary vertices or boundary edges between boundary marked points. The highest dimensional cells in $\partial\oCM_{g,k,l}^{comb}$ are exactly those obtained from contracting a bridge in a cell of $\oSR_{g,k,l}^0.$ Putting all together we obtain (\cite{Tes15},Lemma 4.45)
\begin{equation}\label{eq:it0}
2^{\frac{g+k-1}{2}}\langle\tau_{a_1}\cdots \tau_{a_l}\sigma^k\rangle = \sum_{(G,z)\in\oSR_{g,k,l}^0}\int_{\CM_{(G,z)}(\mathbf{p})}\bigwedge_{i=1}^l\omega_i^{a_i}+
\sum_{\substack{(G,z)\in\oSR_{g,k,l}^0\\e\in Br(G)}}\int_{\CM_{\partial_e(G,z)}(\mathbf{p})}(s')^*\Phi,
\end{equation}
where $s'$ is combinatorial canonical.

\emph{Step 4}.
The expression (\ref{eq:it0}) has a complicated part, the integral of $(s')^*\Phi,$ since it involves the multisection $s'.$ However, it turns out that the properties of canonical sections allow computing the right-hand side of (\ref{eq:it0}) using iterative integrations by parts. The result is the integral version of the combinatorial formula. To this end, one must first have an explicit description of the contributing graded ribbon graphs.

\begin{definition}
Let $g,k,l$ be non-negative integers such that $2g-2+k+2l>0,~A$ be a finite set and $\alpha:[l]\to A$ a map. $\alpha,A$ will be implicit in the definition. A \emph{$(g,k,l)$-ribbon graph with boundary} is an
embedding $\iota:G\to\Sigma$ of a connected graph~$G$ into a $(g,k,l)$-surface with boundary~$\Sigma$ such that

\begin{itemize}

\item $\{x_i\}_{i\in [k]}\subseteq \iota(V(G))$, where $V(G)$ is the set of vertices of~$G$. We henceforth consider $\{x_i\}$ as vertices.

\item The degree of any vertex $v\in V(G)\setminus\{x_i\}$ is at least $3$.

\item $\partial\Sigma\subseteq \iota(G)$.

\item If $l \geq 1,$ then $$\Sigma\setminus\iota(G)=\coprod_{i\in [l]} D_i,$$ where each $D_i$ is a topological open disk, with $z_i\in D_i$. We call the disks $D_i$ faces.

\item If $l=0$, then $\iota(G)=\partial\Sigma$.

\end{itemize}
The genus $g(G)$ of the graph~$G$ is the genus of $\Sigma$. The number of the boundary components of~$G$ or~$\Sigma$ is denoted by $b(G)$ and $v_I(G)$ stands for the number of the internal vertices. We denote by~$\Faces(G)$ the set of faces of the graph~$G,$ and we consider $\alpha$ as a map $$\alpha\colon\Faces(G)\to A,$$
by defining for $f\in\Faces(G),~\alpha(f):=\alpha(i),$ where $z_i$ is the unique internal marked point in~$f.$ The map $\alpha$ is called the labeling of $G.$
Denote by~$V_{BM}(G)$ the set of boundary marked points~$\{x_i\}_{i\in [k]}.$

Two ribbon graphs with boundary $\iota\colon G\to\Sigma,~\iota'\colon G'\to\Sigma'$ are isomorphic, if there is an orientation preserving homeomorphism
$\Phi\colon(\Sigma,\{z_i\},\{x_i\})\to(\Sigma',\{z'_i\},\{x'_i\}),$ and an isomorphism of graphs $\phi\colon G\to G'$, such that
\begin{enumerate}
\item
$\iota'\circ\phi = \Phi\circ\iota.$
\item
$\phi(x_i)=x'_i,$ for all $i\in[k].$
\item
$\alpha'(\phi(f))=\alpha(f),$
where $\alpha,\alpha'$ are the labelings of $G,G'$ respectively and~$f\in\Faces(G)$ is any face of the graph~$G.$
\end{enumerate}
Note that in this definition we do not require the map~$\Phi$ to preserve the numbering of the internal marked points.

A ribbon graph is \emph{critical}, if
\begin{itemize}

\item Boundary marked points have degree $2$.

\item All other vertices have degree $3$.

\item If $l=0,$ then $g=0$ and $k=3.$

\end{itemize}
A $(0,3,0)-$ribbon graph with boundary is called a \emph{ghost}.

Consider maps $K$ from the set of directed edges of $G$ to $\Z_2$ which satisfy
\begin{itemize}
\item $K(e)+K(\bar{e})=1,$ where $\bar{e}$ is $e$ with opposite orientation.
\item For any face $f_i$ of the graph $G$ we have $\sum K(e)=1$, where the sum is taken over the directed edges of $f_i$, whose direction agree with the orientation of $f_i$.
\item Any directed edge of a boundary component has $K=0$.
\end{itemize}
A \emph{grading} of a critical ribbon graph is the equivalence class of such maps modulo the relations obtained by vertex flips. That is, $K,K'$ are identified if they differ by a sequence of moves which flip all the edge assignments for the edges which touch a vertex~$v$. Write $[K]$ for the equivalence class of $K.$ A graph together with a grading is called a \emph{graded graph}.

A \emph{metric graded graph} is a graded graph $(G,[K])$ together with a metric $\ell:\Edges(G)\to\R_+.$ Let $\CM_{(G,[K])}$ be the moduli of such metrics.
\end{definition}
From now on the explicit object $[K]$ will replace the abstract index $z$ used so far.

In Fig. \ref{fig:nonnodal}~two critical ribbon graphs are shown, the right one is a ghost.
We draw internal edges as thick (ribbon) lines, while boundary edges are usual lines.
Note that not all boundary vertices are boundary marked points.
We draw parallel lines inside the ghost, to emphasize that the face bounded by the boundary is a special face, without a marked point inside.
\begin{figure}[t]
\centering
\includegraphics[scale=.8]{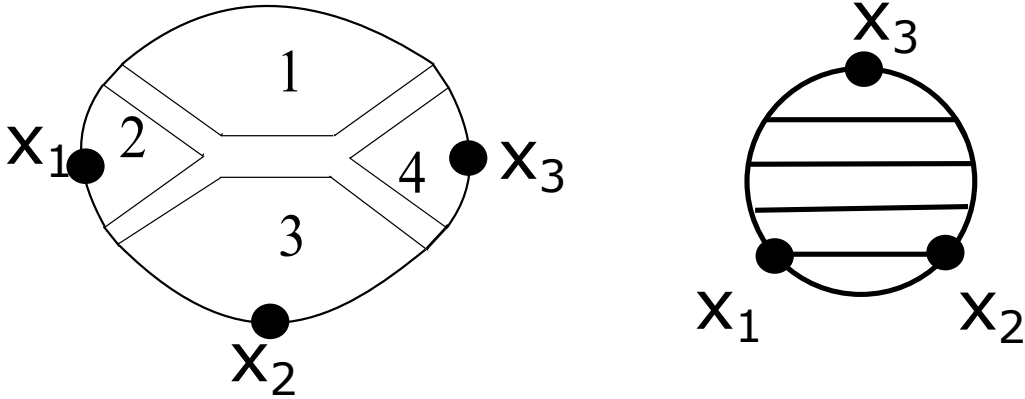}
\caption{Critical ribbon graphs.}
\label{fig:nonnodal}
\end{figure}

\begin{definition}
A \emph{nodal ribbon graph with boundary} is $G=\left(\coprod_i G_i\right)/N$, where
\begin{itemize}
\item $\iota_i\colon G_i\to\Sigma_i$ are ribbon graphs with boundary.
\item $N\subset (\cup_i V_{BM}(G_i))\times(\cup_i V_{BM}(G_i))$ is a set of \emph{ordered} pairs of boundary marked points $(v_1,v_2)$, $v_1\ne v_2$, of the $G_i$'s which we identify.
\end{itemize}
We require that
\begin{itemize}
\item $G$ is a connected graph,
\item Elements of $N$ are disjoint as sets (without ordering).
\end{itemize}

After the identification of the vertices~$v_1$ and~$v_2$ the corresponding point in the graph is called a node. The vertex~$v_1$ is called the legal side of the node and the vertex~$v_2$ is called the illegal side of the node.

The set of edges $\Edges(G)$ is composed of the internal edges of the $G_i$'s and of the boundary edges.
The boundary edges are the boundary segments between successive vertices which are not the illegal sides of nodes. For any boundary edge $e$ we denote by $m(e)$ the number of the
illegal sides of nodes lying on it. The boundary marked points of~$G$ are the boundary marked points of~$G_i$'s, which are not nodes. The set of boundary marked points of~$G$ will
be denoted by~$V_{BM}(G)$ also in the nodal case.

A nodal graph~$G=\left(\coprod_i G_i\right)/N$ is \emph{critical}, if
\begin{itemize}

\item All of its components~$G_i$ are critical.

\item Ghost components do not contain the illegal sides of nodes.

\end{itemize}
It is called \emph{odd critical} if it is critical and
any boundary component of $G_i$ has an odd number of points that are the boundary marked points or the legal sides of nodes.

A graded (odd) critical nodal graph $(G,[K])$ is a critical (odd) ribbon graph with gradings associated to each component $G_i.$

A nodal ribbon graph with boundary is naturally embedded into the nodal surface $\Sigma=\left(\coprod_i\Sigma_i\right)/N$. The (doubled) genus of $\Sigma$ is called the genus of the graph. The
notions of an isomorphism and metric are also as in the non-nodal case. Write $\CM_{(G,[K])}$ for the moduli of metrics on $(G,[K]).$
\end{definition}
\begin{rmk}
The genus of a closed, and in particular doubled, nodal surface $\Sigma$ is the genus of the smooth surface obtained by smoothing all nodes of $\Sigma.$
\end{rmk}

In Fig.~\ref{nodal}~there is a critical nodal graph of genus~$0$, with~$5$ boundary marked points, $6$ internal marked points, three components, one of them is a ghost, two nodes, where a
plus sign is drawn next to the legal side of a node and a minus sign is drawn next to the illegal side.

\begin{figure}[t]
\centering
\includegraphics[scale=.8]{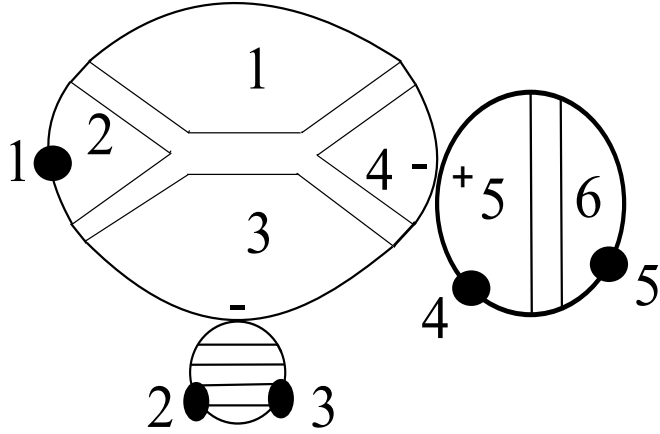}
\caption{A critical nodal ribbon graph.}
\label{nodal}
\end{figure}

In Fig.~\ref{noncriticalnodal}~a non-critical nodal graph is shown. Here there is some vertex of degree $4,$ the components do not satisfy the parity condition and the ghost component has an illegal node.
\begin{figure}[t]
\centering
\includegraphics[scale=.8]{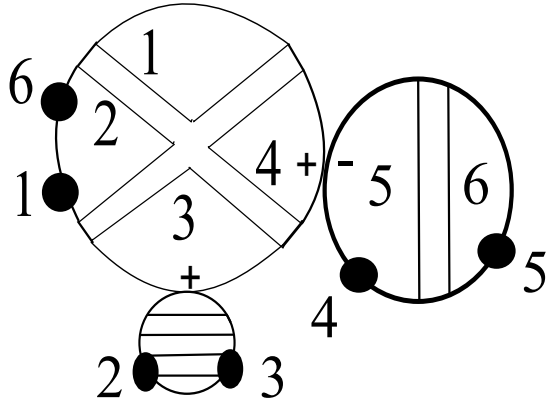}
\caption{A non-critical nodal ribbon graph.}
\label{noncriticalnodal}
\end{figure}

Let ${\oSR}^m_{g,k,l}(\toSR^m_{g,k,l})$ be the set of isomorphism classes of graded (odd) critical nodal ribbon graphs with boundary of genus~$g$, with~$k$ boundary marked points,~$l$ faces and together with a bijective labeling $\alpha:\Faces(G)\stackrel{\sim}{\to} [l]$, and $m$ nodes.

Denote by $\toR^m_{g,k,l}$ the set of isomorphism classes of odd critical nodal ribbon graphs with boundary of genus~$g$, with~$k$ boundary marked points,~$l$ faces and together with a bijective labeling $\alpha:\Faces(G)\stackrel{\sim}{\to} [l],$ and $m$ nodes. 

\begin{definition}
An \emph{effective bridge} in a graded critical graph $(G,[K])$ is a bridge $e$ with $m(e)=0.$ We denote their set by $Br^{eff}(G)$. The graph $\partial_e (G,[K])$, the result of contracting of the edge $e$ of $(G,[K]),$ which has one node $N$ more than $G$ has, can also be made critical nodal by declaring the side of $N$ which corresponds to $e$ to be legal, if $K(e)=0,$ and otherwise declare the other side of $N$ to be legal. Denote the resulting graph by $\CB\partial_e(G,[K]).$ The operation $\CB$ is called the \emph{base operation}.
\end{definition}

\begin{definition}
For a metric graded ribbon graph $G,$ define
\[W_G:=\prod_{e\in \Edges(G)}\frac{\ell_e^{2m(e)}}{(m(e)+1)!},\qquad\tW_G:=\prod_{e\in\Edges(G)}\frac{\ell_e^{2m(e)}}{m(e)!(m(e)+1)!}.\]
\end{definition}

\begin{definition}
An \emph{$l-$set} is a map $L:[n]\to [l].$ The \emph{size} of $L$ is $n.$
A \emph{subset} of an $l-$set is the restriction map $L:A\to[l],~A\subseteq[n].$ It can canonically identified with a map $L':[|A|]\to[l],$ hence can be thought as an $l-$set on its own right. We write $L'\subseteq L,$ and set $\binom{L}{m}$ for the set of all~$\binom{n}{m}~l-$subsets of $L$ of size $m.$
\end{definition}
\begin{definition}
Any $l-$set defines a vector bundle $E_L:=\bigoplus \CL_{L(i)},$ defined both on the moduli and on the combinatorial moduli. Let $S_L$ be the associated combinatorial sphere bundle. Let $\Phi_L$ be the associated explicit angular form, mentioned in Step $3$ above, and defined in~\cite[Section~3]{Tes15}. Its curvature form is $\omega_L = \bigwedge_{i\in[n]} \omega_{L(i)}.$
\end{definition}

\begin{lemma}\label{lem: for iterations}
Write $n = \frac{k+2l+3g-3}{2}$
Let $C\subseteq \oSR_{g,k,l}^m$ be a set of graphs and let $C'\subseteq \oSR_{g,k,l}^{m+1}$ be the set of graphs obtained by applying for any graph in $C$ and any effective bridge $e$ of it, first the edge contraction $\partial_e$ and then the base operation $\CB$. Suppose $C$ is closed in the following sense: for any graph $(G,[\K])$ in $\oSR_{g,k,l}^m\setminus C$ and any effective bridge $e$ of it we have $\CB(\partial_eG)\notin C'$. Then
\begin{align*}
\sum_{(G,[\K])\in C}\sum_{e\in Br^{eff}(G)}&\sum_{L'\in \binom{L}{n-m}}\int_{\CM_{\partial_e{(G,[\K])}}}W_G\Phi_{L'}=\\
&=\sum_{(G,[\K])\in C}\sum_{L'\in \binom{L}{n-m-1}}\left(\int_{\CM_{(G,[\K])}}W_G\omega_{L'} +\sum_{e\in Br^{eff}(G)}\int_{\CM_{\partial_e{(G,[\K])}}}W_G\Phi_{L'}\right).
\end{align*}
\end{lemma}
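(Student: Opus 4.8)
The plan is to read the statement as the single inductive step of the iterated integration by parts announced in \emph{Step~4}: starting from the boundary term of \eqref{eq:it0} one repeatedly trades a boundary integral of an angular form of a given size for a bulk integral of a curvature form of one smaller size plus a boundary integral of an angular form of one smaller size, the relevant size being $n$ minus the number of nodes. Accordingly the whole argument is Stokes' theorem applied cell by cell on the $\CM_{(G,[\K])}$, together with the explicit structure of the forms $\Phi_{L'}$ from \cite[Section~3]{Tes15}. First I would record the two properties I will use: on each combinatorial sphere bundle $S_{L'}$ the angular form integrates to $1$ on the fibres and satisfies $d\Phi_{L'}=-\pi^*\omega_{L'}$, so that on any cell the pulled-back forms obey $d\Phi_{L'}=-\omega_{L'}$; and the explicit formula of \cite[Section~3]{Tes15} for the angular form of the direct sum $E_{L''}$, which can be read recursively, expressing $\Phi_{L''}$ of rank $2(n-m)$ through the single-factor angular forms $\alpha_i$, the curvature forms $\omega_i$, and the lower angular forms $\Phi_{L'}$ of rank $2(n-m-1)$.

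The core is a local identity on one cell $\CM_{(G,[\K])}$ with $(G,[\K])\in C$. For each effective bridge $e$ the face $\CM_{\partial_e(G,[\K])}$ is the locus $\ell_e=0$ inside $\CM_{(G,[\K])}$, and I would integrate by parts in the transverse coordinate $\ell_e$: using the recursion to peel off the line-bundle direction that becomes vertical at the node created by $\partial_e$, the face integral of the size-$(n-m)$ form $W_G\Phi_{L'}$ is rewritten, via $d\Phi_{L'}=-\omega_{L'}$, as a bulk integral over $\CM_{(G,[\K])}$ of $W_G\omega_{L'}$ plus a residual face integral of the size-$(n-m-1)$ form $W_G\Phi_{L'}$. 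The polynomial weight $W_G=\prod_e \ell_e^{2m(e)}/(m(e)+1)!$ is exactly what makes this integration by parts close up: differentiating the factor $\ell_e^{2m(e)}$ and incrementing $m(e)$ under the edge contraction $\partial_e$ and the base operation $\CB$ reproduce the factorials, while $\CB$ fixes, through the value $K(e)$, the legal side of the new node and hence the orientation that the face inherits as part of $\partial\CM_{(G,[\K])}$.

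Next I would sum the local identities over $(G,[\K])\in C$, over $e\in Br^{eff}(G)$, and over the subset labels $L'$; here the closedness hypothesis does the essential combinatorial work. A codimension-one face can be approached from more than one cell, and the base operation organizes these faces so that the reduced boundary integrals on the right-hand side are precisely those attached to the graphs of $C'$. Closedness, that no graph of $\oSR_{g,k,l}^m\setminus C$ produces, via $\CB\partial_e$, a graph of $C'$, guarantees that no face coming from outside $C$ leaks into the sum and that shared faces are matched consistently, so that the summed local identities telescope to exactly the displayed equality with no spurious terms. The binomial sums $\binom{L}{n-m}$ and $\binom{L}{n-m-1}$ merely keep track of which subset of line-bundle factors carries the angular form versus the curvature form at each stage.

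The step I expect to be the main obstacle is the sign and weight bookkeeping in the local identity: checking that the recursion of \cite[Section~3]{Tes15}, the transformation of $W_G$ under $\partial_e$ and $\CB$, and the orientation induced on $\CM_{\partial_e(G,[\K])}$ all conspire so that the size-$(n-m)$ boundary term on the left and the size-$(n-m-1)$ bulk and boundary terms on the right appear with exactly the right coefficients and signs. The grading $[\K]$ and the base operation $\CB$ are tailored to make these orientations coherent, and the fact that $s'$ is a combinatorial canonical multisection (\cite[Proposition~4.39]{Tes15}) is what legitimizes all the pullbacks and the vanishing of the integrals on lower-dimensional strata; once these are in place, the remaining content is the formal Stokes computation and the telescoping governed by closedness.
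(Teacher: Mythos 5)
Your proposal is correct and takes essentially the same route as the paper: the paper's entire proof consists of the remark that the statement is the global version of the single-graph, single-$l$-subset local identities of \cite[Lemmas 6.7 and 6.8]{Tes15} --- precisely the cell-by-cell integration by parts with the explicit angular-form recursion and the canonical-multisection properties that you describe --- summed over all $(G,[\K])\in C$ and all $L'\in\binom{L}{n-m}$, with the closedness hypothesis playing exactly the role you assign it, namely that no graph of $\oSR^m_{g,k,l}\setminus C$ contributes a face to $C'$, so the re-indexed sums match. The one detail-level caveat, lying in the part you explicitly defer to the bookkeeping, is that the weights $W_G=\prod_e \ell_e^{2m(e)}/(m(e)+1)!$ arise from integrating out the positions of the illegal sides of nodes along boundary edges (whence the simplex factorials), rather than from differentiating $\ell_e^{2m(e)}$ in the transverse coordinate of the contracted bridge $e$, which, being effective, has $m(e)=0$.
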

This lemma is the global version of the combination of Lemmas 6.7 and 6.8 of \cite{Tes15} (there a local version is given, in terms of a single graph, rather than a set $C,$ and in terms of a single $l-$subset of it, rather than summing over all subsets).

Applying Lemma \ref{lem: for iterations} iteratively to $C = \oSR_{g,k,l}^m,$ and using some parity observation (Proposition~6.13 in~\cite{Tes15}) give the integrated form of the combinatorial formula, \cite{Tes15}, Theorem~6.12.
\begin{theorem}\label{thm:int version}
For integers $a_1,\ldots, a_l\geq 0$ which sum to $n=\frac{k+2l+3g-3}{2},$ let $L$ be any $l-$set with $E_L=\bigoplus \CL_i^{\oplus a_i},$ then
\begin{gather*}
2^\frac{g+k-1}{2}\langle\tau_{a_1}\cdots\tau_{a_l}\sigma^k\rangle^o_{g}=\sum_{m\geq 0}\sum_{(G,[\K])\in\toSR_{g,k,l}^m}\sum_{L'\in\binom{L}{n-m}}\int_{\CM_{(G,[\K])}(\mathbf{p})}W_G\omega_{L'}.
\end{gather*}
\end{theorem}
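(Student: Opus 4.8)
\emph{The plan is to} assemble the ingredients already displayed in Steps~1--4---the decomposition \eqref{eq:it0}, the iteration Lemma~\ref{lem: for iterations}, and the parity observation---into a single telescoping computation. Equation \eqref{eq:it0} already supplies the $m=0$ summand of the target formula: its first term is $\sum_{(G,[\K])\in\oSR_{g,k,l}^0}\int_{\CM_{(G,[\K])}(\mathbf{p})}\bigwedge_i\omega_i^{a_i}$, and since every edge of a non-nodal graph has $m(e)=0$ so that $W_G=1$, and $\binom{L}{n}=\{L\}$ with $\omega_L=\bigwedge_i\omega_i^{a_i}$, this is exactly $\sum_{(G,[\K])\in\oSR_{g,k,l}^0}\int W_G\omega_L$. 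The object I must digest is the remaining boundary term involving the pulled-back angular form $(s')^*\Phi$; using the explicit formula for $\Phi$ from \cite{Tes15} it is rewritten as $\sum_{(G,[\K])\in\oSR^0_{g,k,l}}\sum_{e\in Br^{eff}(G)}\int_{\CM_{\partial_e(G,[\K])}}W_G\Phi_L$, where for a non-nodal graph all bridges are effective, $Br(G)=Br^{eff}(G)$.

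First I would run the iteration. Apply Lemma~\ref{lem: for iterations} to $C=\oSR^m_{g,k,l}$; taking all of $\oSR^m$ makes the closedness hypothesis hold vacuously. The lemma trades the boundary integrals of $\Phi_{L'}$ with $|L'|=n-m$ for curvature integrals $\int_{\CM_{(G,[\K])}}W_G\omega_{L'}$ together with boundary integrals of $\Phi_{L'}$ of one lower index. Reinterpreting each face $\CM_{\partial_e(G,[\K])}$ through the base operation $\CB$ as the moduli of the graph $\CB\partial_e(G,[\K])\in\oSR^{m+1}_{g,k,l}$, and noting that $\CB$ changes neither the metric moduli nor the weight $W_G$, these residual boundary integrals become the input for stage $m+1$. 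Telescoping over $m$ thus peels off, at each stage, the curvature contribution $\sum_{(G,[\K])\in\oSR^m_{g,k,l}}\sum_{L'\in\binom{L}{n-m}}\int_{\CM_{(G,[\K])}}W_G\omega_{L'}$. The recursion terminates because a graded critical nodal graph of fixed genus $g$, with $k$ boundary marked points and $l$ faces, can carry only boundedly many nodes, so the boundary term is empty once $m$ exceeds this bound and no $\Phi$-integral survives.

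It then remains to cut the range of the surviving curvature sums from all graded critical nodal graphs down to the odd ones. Here I would invoke the parity observation, Proposition~6.13 of \cite{Tes15}, which shows that the contribution of any graded critical nodal graph that is not odd critical vanishes, so that $\oSR^m_{g,k,l}$ may be replaced by $\toSR^m_{g,k,l}$ in every summand. Collecting the stages $m=0,1,2,\ldots$ then yields precisely the asserted identity for $2^{\frac{g+k-1}{2}}\langle\tau_{a_1}\cdots\tau_{a_l}\sigma^k\rangle^o_g$.

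The hard part will be the bookkeeping inside the iteration: one must check that the assignment $(G,e)\mapsto\CB\partial_e(G,[\K])$ matches, bijectively and with the correct weight $W_G$ and index set $\binom{L}{n-m-1}$, the boundary contributions produced at stage $m$ onto the boundary input at stage $m+1$, so that the telescoping closes with no over- or under-counting; this is exactly what the closedness hypothesis of Lemma~\ref{lem: for iterations} is designed to control. The second delicate point is the parity cancellation of Proposition~6.13, which is what legitimately restricts the surviving graphs to $\toSR^m_{g,k,l}$. Everything else is the degree and dimension matching already built into the definitions of $\omega_{L'}$, $\Phi_{L'}$ and $W_G$.
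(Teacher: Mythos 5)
Your proposal is correct and follows essentially the same route as the paper, whose proof is exactly the one you give: starting from \eqref{eq:it0}, apply Lemma~\ref{lem: for iterations} iteratively with $C=\oSR^m_{g,k,l}$ (so the closedness hypothesis is vacuous), let the telescoping terminate since the number of nodes is bounded, and invoke the parity observation (Proposition~6.13 of \cite{Tes15}) to cut the sum down to the odd critical graphs $\toSR^m_{g,k,l}$. Your two flagged ``hard parts'' (the bookkeeping under $(G,e)\mapsto\CB\partial_e(G,[\K])$ and the parity cancellation) are precisely the content delegated to Lemmas~6.7--6.8 and Proposition~6.13 of \cite{Tes15}, so the sketch matches the paper's in both structure and level of detail.
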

A straightforward corollary is (equation (35) in \cite{Tes15})
\begin{corollary}\label{cor:int version}
\begin{gather*}
2^\frac{g+k-1}{2}\sum_{\sum_{i=1}^l a_i=n}\prod p_i^{2a_i}\langle\tau_{a_1}\cdots\tau_{a_l}\sigma^k\rangle^o_{g}=\sum_{m\geq 0}\sum_{(G,[\K])\in\toSR_{g,k,l}^m}\sum_{L'\in\binom{L}{n-m}}\int_{\CM_{(G,[\K])}(\mathbf{p})}\tW_G\frac{\bar\omega^{n-m}}{(n-m)!},
\end{gather*}
where $\bar{\omega} = \sum_i p_i^2\omega_i.$
\end{corollary}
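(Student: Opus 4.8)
The plan is to derive Corollary~\ref{cor:int version} directly from Theorem~\ref{thm:int version} by packaging the left-hand side into a generating polynomial in the perimeters $p_1,\dots,p_l$ and matching coefficients. Concretely, I would multiply both sides of the identity in Theorem~\ref{thm:int version} by $\prod_i p_i^{2a_i}$ and sum over all tuples $(a_1,\dots,a_l)$ of non-negative integers with $\sum a_i=n$. The left-hand side then becomes exactly the quantity appearing on the left in the corollary, so the entire content of the proof is to show that the summed right-hand side collapses into the compact expression $\int_{\CM_{(G,[\K])}(\mathbf p)}\tW_G\frac{\bar\omega^{n-m}}{(n-m)!}$, with $\bar\omega=\sum_i p_i^2\omega_i$.

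\emph{The key combinatorial step} is the following. After summing, the right-hand side reads
\[
\sum_{m\ge 0}\;\sum_{(G,[\K])\in\toSR_{g,k,l}^m}\;\sum_{\sum a_i=n}\;\prod_i p_i^{2a_i}\;\sum_{L'\in\binom{L}{n-m}}\int_{\CM_{(G,[\K])}(\mathbf p)}W_G\,\omega_{L'}.
\]
For a fixed graph $(G,[\K])$ with $m$ nodes, I would fix the number of faces and recognize that summing $\omega_{L'}$ over all subsets $L'$ of size $n-m$, weighted by the perimeter monomials, is precisely the multinomial expansion of a single power of $\bar\omega$. Since each face carries its own curvature form $\omega_i$ and $\bar\omega^{n-m}=\bigl(\sum_i p_i^2\omega_i\bigr)^{n-m}$ expands into $\frac{(n-m)!}{\prod(\text{multiplicities})}$-weighted products of the $\omega_i$, the sum over $L'\in\binom{L}{n-m}$ together with the perimeter weights $\prod p_i^{2a_i}$ reassembles exactly $\frac{\bar\omega^{n-m}}{(n-m)!}$. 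This is where the factorial $(n-m)!$ in the denominator and the passage from $W_G$ to $\tW_G$ enter.

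\emph{The subtle point}, and the part I expect to be the main obstacle, is the transition from the weight $W_G=\prod_e \frac{\ell_e^{2m(e)}}{(m(e)+1)!}$ to $\tW_G=\prod_e \frac{\ell_e^{2m(e)}}{m(e)!\,(m(e)+1)!}$. These differ by a factor $\prod_e \frac{1}{m(e)!}$, so the reorganization of the $\omega$-sum into $\bar\omega^{n-m}/(n-m)!$ cannot be a purely formal relabeling; the extra $\prod_e m(e)!$ must be accounted for by the combinatorics of how the angular/curvature forms $\omega_i$ distribute over the edges in relation to the node multiplicities $m(e)$. I would therefore track carefully, for each graph, how $\omega_{L'}$ as a product of face-curvature forms interacts with the edge-length monomials in $W_G$, and verify that the normalization absorbs precisely into the replacement $W_G\rightsquigarrow\tW_G$ and the symmetrization producing $\bar\omega^{n-m}/(n-m)!$.

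\emph{Finally}, I would confirm the degree bookkeeping: since $\dim_{\mathbb R}\CM_{(G,[\K])}$ matches $2(n-m)$ for graphs with $m$ nodes, each $\omega_{L'}$ with $|L'|=n-m$ has the correct top degree for integration, and the generating identity holds coefficient-by-coefficient in the $p_i$. The result is the stated corollary, which is just Theorem~\ref{thm:int version} repackaged as an identity of polynomials in the perimeters, exactly as equation~(35) of~\cite{Tes15}.
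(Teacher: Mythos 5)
Your overall strategy --- sum Theorem~\ref{thm:int version} against perimeter monomials and recognize a generating-function identity --- is certainly the route the paper intends (it gives no argument at all, presenting the statement as ``a straightforward corollary'' and citing eq.~(35) of \cite{Tes15}), and you correctly locate the crux at the passage from $W_G$ to $\tW_G$. But the key combinatorial step as you describe it does not work, and the part you defer is precisely the entire content. Fix $(G,[\K])\in\toSR^m_{g,k,l}$ and let $b_i$ be the multiplicity of face $i$ in $L'$. Summing over $L'\in\binom{L}{n-m}$ produces the binomial factors $\prod_i\binom{a_i}{b_i}$ (the number of size-$(n-m)$ subsets of the $l$-set $L$ with prescribed multiplicities), so after summing over the $a_i$ with your weights $\prod_i p_i^{2a_i}$ the contribution of the graph becomes, for each $b$ with $\sum b_i=n-m$,
\[
\int_{\CM_{(G,[\K])}(\mathbf{p})}W_G\prod_i\bigl(p_i^2\omega_i\bigr)^{b_i}\,\sum_{\sum c_i=m}\prod_i\binom{b_i+c_i}{b_i}p_i^{2c_i},
\]
which carries neither the multinomial coefficients $\frac{(n-m)!}{\prod_i b_i!}$ needed to reassemble $\bar\omega^{n-m}/(n-m)!$ nor the edgewise constant $\prod_e\frac{1}{m(e)!}$ that converts $W_G$ into $\tW_G$. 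Already at $m=0$ the failure is visible: $\binom{L}{n}$ has a single element, so no multinomial coefficients can arise from the subset sum, yet $\bar\omega^n/n!$ expands with coefficients $\prod_i 1/b_i!$. (Relatedly, note that the Laplace transform of $p^{2a}$ is $(2a)!\,\lambda^{-2a-1}=2^a a!\,(2a-1)!!\,\lambda^{-2a-1}$, so consistency with formula~\eqref{eq:combinatorial formula} forces the weights to carry $1/a_i!$; with or without these factorials, the purely formal reassembly you propose does not close.)

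The deeper obstruction is that for $m\geq 1$ the matching cannot be termwise in $(m,G)$ at all. The leftover factor $\sum_c\prod_i\binom{b_i+c_i}{b_i}p_i^{2c_i}$ has degree $2m$ in the perimeters, and the cell integral $\int_{\CM_{(G,[\K])}(\mathbf{p})}W_G\prod_i\omega_i^{b_i}$ is itself homogeneous of degree $2m$ in $\mathbf{p}$ (the forms $\omega_i$ are scale invariant while $W_G$ scales like $\ell^{2m}$), whereas the corresponding term of the corollary has total degree $2n$. Passing from the theorem to the corollary therefore requires trading powers of the perimeters $p_i$ for powers of the edge lengths $\ell_e$ \emph{inside} the cell integrals --- in effect redistributing the $m$ node points from individual boundary edges (weight $\ell_e^{2m(e)}$, normalization $(m(e)+1)!$) over entire face boundaries, which is where the extra $\prod_e 1/m(e)!$ of $\tW_G$ originates. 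This exchange uses the actual geometry of the cells and the bookkeeping of \cite[Sections 6.1--6.2]{Tes15}; it is not a consequence of the multinomial theorem. Since your proposal explicitly leaves this verification open (``I would track carefully \dots and verify'') and the mechanism you sketch for it is insufficient, the proof is incomplete at its only nontrivial step.
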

Note that in the last theorem and corollary there is no more dependence on the choice of the multisection.

\emph{Step 5}.
The last step is to perform Laplace transform to the integrated formula described above. This is the content of \cite[Sections 6.2, 6.3]{Tes15}. The only difficulty in the calculation of the Laplace transform of
\[\int_{\CM_{(G,[\K])}(\mathbf{p})}\tW_G\frac{\bar\omega^{n-m}}{(n-m)!}\] for a given $(G,[K])\in\toSR^*_{g,k,l}$ is
to show
\[\bigwedge_{i=1}^ldp_i\wedge\frac{\bar\omega^{n-m}}{(n-m)!}:\bigwedge_{e\in\Edges(G)}d\ell_e=\pm \prod_i 2^{\frac{g(G_i)+b(G_i)-1}{2}+v_I(G_i)},\]
and to understand the signs. Here $G_i$ are the components of~$G$. This is the content of Section 6.2 in~\cite{Tes15}.

After understanding the sign and the ratio of forms, the Laplace transform calculations are straightforward and give
\begin{equation}\label{eq:local LP0}
\int_{p_i\in\mathbb{R}_+}\bigwedge dp_i\exp{\left(-\sum\lambda_i p_i\right)}\int_{\CM_{(G,[\K])}(\mathbf{p})}\tW_G\frac{\bar\omega^{n-m}}{(n-m)!}=
\pm\frac{\prod_i 2^{v_I(G_i)+\frac{g(G_i)+b(G_i)-1}{2}}}{|\Aut(G,[K])|}\prod_{e\in\Edges(G)}\lambda(e),
\end{equation}
where
\begin{gather}\label{eq:definition of lambda}
\lambda(e):=
\begin{cases}
\frac{1}{\lambda_i+\lambda_j},&\text{if $e$ is an internal edge between faces $i$ and $j$};\\
\frac{1}{(m+1)}\binom{2m}{m}\lambda_i^{-2m-1},&\text{if $e$ is a boundary edge of face $i$ and $m(e)=m$};\\
1,&\text{if $e$ is a boundary edge of a ghost}.
\end{cases}
\end{gather}
Summing over the different gradings $K,$ and using the results of Section 6.2 regarding the signs give
\begin{multline}\label{eq:local LP0}
\sum_{[K]\text{ is a grading for G}}\int_{p_i\in\mathbb{R}_+}\bigwedge dp_i\exp{\left(-\sum\lambda_i p_i\right)}\int_{\CM_{(G,[\K])}(\mathbf{p})}\tW_G\frac{\bar\omega^{n-m}}{(n-m)!}=\\
=\frac{\prod_i 2^{v_I(G_i)+\frac{g(G_i)+b(G_i)-1}{2}}}{|\Aut(G)|}\prod_{e\in\Edges(G)}\lambda(e).
\end{multline}

Summing over all graphs, the resulting combinatorial formula is
\begin{theorem}
Fix $g,k,l\ge 0$ such that $2g-2+k+2l>0$. Let $\lambda_1,\ldots,\lambda_l$ be formal variables. Then we have
\begin{multline}\label{eq:combinatorial formula}
2^{\frac{g+k-1}{2}}\sum_{a_1,\ldots,a_l\ge 0}\langle\tau_{a_1} \tau_{a_2} \cdots \tau_{a_l}\sigma^k\rangle_g^o\prod_{i=1}^l\frac{2^{a_i}(2a_i-1)!!}{\lambda_i^{2a_i+1}}=\\
=\sum_{G=\left(\coprod_i G_i\right)/N\in \widetilde{\oR}^*_{g,k,l}}\frac{\prod_i 2^{v_I(G_i)+g(G_i)+b(G_i)-1}}{|\Aut(G)|}\prod_{e\in\Edges(G)}\lambda(e).
\end{multline}
\end{theorem}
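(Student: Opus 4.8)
The plan is to derive \eqref{eq:combinatorial formula} by applying the Laplace transform $\int_{\R_+^l}\bigwedge_{i=1}^l dp_i\,\exp\!\big(-\sum_i\lambda_i p_i\big)$ to the perimeter-dependent identity of Corollary~\ref{cor:int version} and then summing the resulting per-graph expressions over $\toR^*_{g,k,l}$. Because the right-hand side of that corollary is a finite sum, indexed by graded critical nodal graphs $(G,[\K])$, of integrals that are polynomial in the edge-lengths against the cells $\CM_{(G,[\K])}(\mathbf p)$, the transform can be carried out term by term, and the theorem reduces to (i) evaluating the transform for a single cell and (ii) organizing the sum over graphs.

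On the left, the transform sees only the perimeter monomials $\prod_i p_i^{2a_i}$; the Gamma integral $\int_0^\infty p^{2a}e^{-\lambda p}\,dp=(2a)!\,\lambda^{-2a-1}$, rewritten through $(2a)!=2^{a}a!\,(2a-1)!!$, produces the double-factorial coefficients $\prod_i 2^{a_i}(2a_i-1)!!/\lambda_i^{2a_i+1}$ appearing in \eqref{eq:combinatorial formula}; the accompanying numerical normalization is routine.

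The substance is the single-cell computation. For a fixed $(G,[\K])$ with $m$ nodes I must evaluate the Laplace transform of $\int_{\CM_{(G,[\K])}(\mathbf p)}\tW_G\,\bar\omega^{n-m}/(n-m)!$. The mechanism is factorization over edges: first one computes the ratio of top-degree forms $\bigwedge_i dp_i\wedge\bar\omega^{n-m}/(n-m)! : \bigwedge_{e\in\Edges(G)} d\ell_e$, which equals $\pm\prod_i 2^{\,v_I(G_i)+(g(G_i)+b(G_i)-1)/2}$, so that the iterated integral over $\CM_{(G,[\K])}(\mathbf p)$ and then over $\mathbf p$ becomes a single integral over $\R_+^{|\Edges(G)|}$ against $\bigwedge_e d\ell_e$. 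Since each perimeter is the sum of its boundary edge-lengths, the exponential $\exp(-\sum_i\lambda_i p_i)$ and the weight $\tW_G$ both factor over edges, and the integral splits into one-dimensional Gamma integrals giving precisely the weights of \eqref{eq:definition of lambda}: an internal edge between faces $i,j$ gives $\int_0^\infty e^{-(\lambda_i+\lambda_j)\ell}\,d\ell=(\lambda_i+\lambda_j)^{-1}$, a boundary edge of face $i$ with $m(e)=m$ contributes, via the factor $\ell_e^{2m}/(m!(m+1)!)$ in $\tW_G$, the value $\int_0^\infty \frac{\ell^{2m}}{m!(m+1)!}e^{-\lambda_i\ell}\,d\ell=\frac{1}{m+1}\binom{2m}{m}\lambda_i^{-2m-1}$, and a ghost boundary edge contributes $1$. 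Summing over the gradings $[\K]$ and invoking the parity/sign analysis of \cite[Section~6.2]{Tes15} collapses $\sum_{[\K]}\pm|\Aut(G,[\K])|^{-1}$ to $|\Aut(G)|^{-1}$ and passes from $\toSR^*_{g,k,l}$ to $\toR^*_{g,k,l}$; assembling the per-component powers of $2$ then gives the prefactor $\prod_i 2^{\,v_I(G_i)+g(G_i)+b(G_i)-1}$ of \eqref{eq:combinatorial formula}.

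The hard part, exactly as flagged in Step~5 above, is the form-ratio computation and the determination of its sign: showing that $\bigwedge_i dp_i\wedge\bar\omega^{n-m}/(n-m)!$ is the asserted power of $2$ times $\bigwedge_e d\ell_e$, uniformly over all strata and gradings, is where the orientation conventions on the graded combinatorial moduli enter and where the half-integer exponents $(g(G_i)+b(G_i)-1)/2$ are produced. The genuinely delicate issue is controlling the signs well enough that the grading sum telescopes cleanly to $|\Aut(G)|^{-1}$ with no residual sign; this is the content of Section~6.2 of \cite{Tes15}. Once this local statement is in hand, the edge-wise Gamma integrals and the summation over $\toR^*_{g,k,l}$ are mechanical.
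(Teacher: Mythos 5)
Your proposal is correct and takes essentially the same route as the paper: the paper's own argument (Step~5 of Section~\ref{section:refined numbers}) is exactly a cell-by-cell Laplace transform of Corollary~\ref{cor:int version}, with the form-ratio $\bigwedge_i dp_i\wedge\bar\omega^{n-m}/(n-m)!:\bigwedge_e d\ell_e=\pm\prod_i 2^{v_I(G_i)+\frac{g(G_i)+b(G_i)-1}{2}}$ and the sign analysis deferred to Section~6.2 of \cite{Tes15}, the edge-wise Gamma integrals yielding the weights $\lambda(e)$ of \eqref{eq:definition of lambda}, and the sum over gradings collapsing to $|\Aut(G)|^{-1}$ while passing from $\toSR^*_{g,k,l}$ to $\toR^*_{g,k,l}$. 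You correctly identify the same hard point the paper flags (the uniform determination of the form ratio and its sign), and your remaining steps, including the factorial bookkeeping $(2a)!=2^a a!\,(2a-1)!!$ on the left-hand side, match the paper's computation leading to \eqref{eq:local LP0} and then to \eqref{eq:combinatorial formula}.
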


\subsection{A combinatorial formula for the refined and very refined numbers}

In order to write a combinatorial formula for the more refined numbers, first note
\begin{observation}\label{obs:nodal gives topology}
Let $(G',[K'])\in\oSR^m_{g,k,l}$ be an arbitrary graph, then there exists a graph $(G,[K])\in\oSR^0_{g,k,l},$ called the smoothing of $(G',[K'])$ and a sequence $(e_j)_{j=1}^m$ of bridges of $G$ such that
\[\CB\partial_{e_m}\cdots\CB\partial_{e_1}(G,[K])=(G',[K']).\]
Moreover, if $[\tilde{K}']$ is another graded structure on $G'$ then the smoothing of $(G',[\tilde{K}'])$ is some $(G,[\tilde{K}])$ with the same $G.$
Thus, the number of boundaries and partitions of boundary points of the smoothing of a graph $(G',[K'])$ is well-defined and independent of the graded structure.
\end{observation}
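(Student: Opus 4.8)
The plan is to establish all three assertions --- existence of the smoothing together with the bridge sequence, independence of the underlying smoothed graph on the grading, and the resulting well-definedness of the number of boundaries and of the boundary partition --- by induction on the number of nodes $m$, exploiting that the operation $\CB\partial_e$ can be inverted one node at a time. For $m=0$ the graph $(G',[K'])$ already lies in $\oSR^0_{g,k,l}$ and is its own smoothing, with the empty bridge sequence.

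For the inductive step, given $(G',[K'])\in\oSR^m_{g,k,l}$ with $m\ge 1$, I would pick any node $N$, with legal side $v_1$ and illegal side $v_2$, and \emph{open} it: de-glue $N$ so that $v_1,v_2$ become two distinct degree-two boundary vertices, then insert a fresh bridge $e$ joining them and making them trivalent. This $e$ is effective, since as a single new edge it carries no illegal node side and $m(e)=0$. I would define a grading on the resulting graph $G''$ by retaining $[K']$ away from $N$ and fixing the value of $K''$ on $e$ so that, after contracting $e$, the base operation re-declares $v_1$ to be the legal side; by the very rule ``$K(e)=0$ versus $K(e)=1$'' in the definition of $\CB$ this yields $\CB\partial_e(G'',[K''])=(G',[K'])$. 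One then checks $(G'',[K''])\in\oSR^{m-1}_{g,k,l}$: it has one node fewer, stays connected because $e$ rejoins the two branches that met at $N$, and remains critical since $v_1,v_2$ are now ordinary trivalent vertices while the ghost condition and the other nodes are inherited from $G'$; the data $g,k,l$ are unchanged because $\CB\partial_e$ preserves them and reproduces $(G',[K'])$. By the inductive hypothesis $(G'',[K''])$ has a smoothing $(G,[K])\in\oSR^0_{g,k,l}$ with bridges $e_1,\dots,e_{m-1}$, and setting $e_m:=e$, which persists as a bridge of $G$ through the contractions $\partial_{e_1},\dots,\partial_{e_{m-1}}$, gives $\CB\partial_{e_m}\cdots\CB\partial_{e_1}(G,[K])=(G',[K'])$.

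For the independence statement I would observe that the passage from $G'$ to the underlying graph $G$ --- opening each node into a bridge --- never refers to $[K']$: it is a purely combinatorial modification of the nodal ribbon graph, and the openings at distinct nodes commute because the nodes are disjoint as sets, so $G$ is canonical. Consequently, a second grading $[\tilde K']$ on the same underlying $G'$ produces the same $G$, only with a different lifted grading $[\tilde K]$. Since $b(G)$ and the partition of the $k$ boundary marked points among the boundary components of $G$ are functions of the underlying graph alone --- equivalently, they are read off the smooth surface obtained by smoothing all nodes of the surface carrying $G'$ --- the well-definedness follows at once.

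The step I expect to be the main obstacle is the grading lift: verifying that extending $[K']$ by a single value on the new bridge $e$ still satisfies the three defining conditions of a grading (the $\Z_2$-balance $K(e)+K(\bar e)=1$, the per-face sum being $1$, and vanishing along boundary components), and that the chosen value reproduces the prescribed legal/illegal labeling of $N$ under the base operation. This is a local computation around the re-created bridge and is the only point where the interaction between edge contraction and the graded structure must be controlled; the remaining verifications --- connectivity, criticality, and invariance of $g,k,l$ --- are routine once $\CB\partial_e$ is known to preserve these data.
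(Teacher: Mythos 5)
Your overall strategy---inverting $\CB\partial_e$ one node at a time by induction on $m$, and observing that the graph-level inversion never consults the grading (the legal/illegal assignment is part of the nodal datum $N$ of $G'$ itself, not of $[K']$)---is exactly the paper's argument, which compresses it into one sentence: the contraction is inverted on the level of graphs, and the value of $K$ on the recreated bridge is read off from which side $\CB$ declared illegal. However, your inductive step has a genuine gap: you treat every node as if it were produced by contracting an \emph{internal} bridge, always reopening a node by inserting an internal ribbon edge that turns $v_1,v_2$ into trivalent boundary vertices. Bridges in this paper come in two kinds: internal edges between two boundary vertices, \emph{and boundary edges between two boundary marked points}. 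Contracting a bridge of the second kind is the combinatorial shadow of two boundary marked points colliding, and it bubbles off a \emph{ghost}: the resulting nodal graph acquires a $(0,3,0)$-component carrying the two marked points and the \emph{legal} side of the new node (this is precisely why ghost components are required not to contain illegal sides, and why the ghost in Fig.~\ref{nodal} is attached through legal sides). Applied to such a node, your uniform recipe fails outright: attaching an internal edge at a ghost's boundary point creates a trivalent vertex on a component with $l=0$, destroying criticality, and after the components merge the ghost's disk becomes a face containing no internal marked point, so the result is not a $(g,k,l)$-ribbon graph at all and your induction can never terminate in $\oSR^0_{g,k,l}$. The correct inversion for these nodes is different in kind: delete the ghost and reinstate its two boundary marked points, joined by a boundary edge (the recreated bridge), at the position of the illegal side on the adjacent component---and here the cyclic order of the illegal sides along a boundary edge, which the paper explicitly says $\CB$ remembers, is what tells you where to reattach.

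Concerning the step you flag as the main obstacle, the grading lift: you are right that care is needed there---``retaining $[K']$ away from $N$'' is not literally meaningful edge-by-edge, since the contraction merges boundary edges across the illegal side, and the per-face condition $\sum K(e)=1$ constrains the value on the new edge through its adjacent faces, so in general one must also adjust a representative by vertex flips. But the paper's own proof glosses this at exactly the same level (``the value of $K$ on the contracted bridge can be read from\ldots''), so I would not count it as the defect. The substantive omission is the boundary-bridge/ghost case: any graph in $\oSR^m_{g,k,l}$ with a ghost component---and such graphs occur already in the simplest refined intersection numbers---lies outside the reach of your induction as written. Once you add ghost-unbubbling as a second opening move, and note that it too depends only on the underlying nodal graph and not on the grading, your independence and well-definedness arguments go through as stated.
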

The proof is immediate, the operation $\CB$ remembers the cyclic order of the illegal nodes on each boundary edge, hence remembers the topology of the graph on which $\CB$ was applied. The edge contraction is easily inverted on the level of graphs, and the value of $K$ on the contracted bridge can be read from knowing which side of the node the $\CB$ operation declared to be illegal. The second part of the observation follows from the fact that the different gradings on $G'$ do not change the way we invert $\partial_e$.

Note that Steps 1--3 of the previous section work without change for the more refined numbers, giving us
\begin{equation}\label{eq:it0bdries}
2^{\frac{g+k-1}{2}}\langle\tau_{a_1}\cdots \tau_{a_l}\sigma^k\rangle^o_{g,b} = \sum_{(G,z)\in\oSR_{g,b,k,l}^0}\int_{\CM_{(G,z)}(\mathbf{p})}\bigwedge_{i=1}^l\omega_i^{a_i}+
\sum_{\substack{(G,z)\in\oSR_{g,b,k,l}^0\\e\in Br(G)}}\int_{\CM_{\partial_e(G,[K])}(\mathbf{p})}(s')^*\Phi,
\end{equation}
where $\oSR_{g,b,k,l}^m$ is the subset of $\oSR_{g,k,l}^m$ made of graphs whose smoothing has $b$ boundary components, and $s'$ is again combinatorial canonical. Define similarly $\toSR^m_{g,b,k,l},\oR^m_{g,b,k,l}$ and $\toR^m_{g,b,k,l}$. Define $\oSR_{g,\bar{k},l}^m,\toSR^m_{g,\ok,l},\oR^m_{g,\ok,l},\toR^m_{g,\ok,l}$, accordingly, for graphs which correspond to a partition $\bar{k}$ of boundary marked points. Then acting similarly for the very refined numbers yields
\begin{equation}\label{eq:it0bdries_very}
2^{\frac{g+k-1}{2}}\langle\tau_{a_1}\cdots \tau_{a_l}\sigma^{\bar{k}}\rangle^o_{g} = \sum_{(G,z)\in\oSR_{g,\bar{k},l}^0}\int_{\CM_{(G,z)}(\mathbf{p})}\bigwedge_{i=1}^l\omega_i^{a_i}+
\sum_{\substack{(G,z)\in\oSR_{g,\bar{k},l}^0\\e\in Br(G)}}\int_{\CM_{\partial_e(G,[K])}(\mathbf{p})}(s')^*\Phi,
\end{equation}
where $s'$ is again combinatorial canonical.

Step 4 requires some modification. Observation \ref{obs:nodal gives topology} allows us to apply Lemma \ref{lem: for iterations} to the sets~$C$ obtained by taking an arbitrary $(G,[K])\in\oSR^0_{g,k,l}$ and creating all elements of $\oSR^m_{g,k,l}$ obtained from it by contracting bridges and applying $\CB.$

Using Lemma \ref{lem: for iterations} iteratively now gives
\begin{theorem}\label{thm:refined_int version}
For integers $a_1,\ldots, a_l\geq 0$ which sum to $n=\frac{k+2l+3g-3}{2},$ let $L$ be any $l-$set with $E_L=\bigoplus \CL_i^{\oplus a_i},$ then
\begin{gather*}
2^\frac{g+k-1}{2}\langle\tau_{a_1}\cdots\tau_{a_l}\sigma^k\rangle^o_{g,b}=\sum_{m\geq 0}\sum_{(G,[\K])\in\toSR_{g,b,k,l}^m}\sum_{L'\in\binom{L}{n-m}}\int_{\CM_{(G,[\K])}(\mathbf{p})}W_G\omega_{L'},
\end{gather*}
and
\begin{gather*}
2^\frac{g+k-1}{2}\sum_{\sum_{i=1}^l a_i=n}\prod p_i^{2a_i}\langle\tau_{a_1}\cdots\tau_{a_l}\sigma^k\rangle^o_{g,b}=\sum_{m\geq 0}\sum_{(G,[\K])\in\toSR_{g,b,k,l}^m}\sum_{L'\in\binom{L}{n-m}}\int_{\CM_{(G,[\K])}(\mathbf{p})}\tW_G\frac{\bar\omega^{n-m}}{(n-m)!},
\end{gather*}
where $\bar{\omega} = \sum_i p_i^2\omega_i$.
\end{theorem}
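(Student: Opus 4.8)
The plan is to run the argument of Step~4 essentially verbatim, but with every set of graphs replaced by its intersection with the fixed-$b$ stratum. The starting point is already in place: Steps~1--3 are insensitive to the number of boundary components, so they yield \eqref{eq:it0bdries}, in which both sums range only over graphs whose smoothing has exactly $b$ boundaries. What remains to be checked is that the iterative integration-by-parts encoded in Lemma~\ref{lem: for iterations} can be carried out one $b$-stratum at a time, and then the two displayed formulas follow as in the unrefined case.

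The key point is that the two combinatorial operations driving the iteration, the edge contraction $\partial_e$ and the base operation $\CB$, do not alter the topology of the smoothing. This is precisely the content of Observation~\ref{obs:nodal gives topology}: the smoothing of $\CB\partial_e(G,[\K])$ has the same underlying graph, hence the same number of boundary components, as the smoothing of $(G,[\K])$. Consequently, the $\CB\partial_e$-orbit of any $(G,[\K])\in\oSR^0_{g,k,l}$ lies entirely in a single stratum, so each $\oSR^\bullet_{g,b,k,l}$ is a union of such orbits. Applying Lemma~\ref{lem: for iterations} to $C=\oSR^m_{g,b,k,l}$ then produces $C'\subseteq\oSR^{m+1}_{g,b,k,l}$, and the closedness hypothesis is automatic: if $(G,[\K])\in\oSR^m_{g,k,l}\setminus C$ has a smoothing with $b'\ne b$ boundaries, then for every effective bridge $e$ the graph $\CB(\partial_e G)$ again has a smoothing with $b'$ boundaries, so it lies in $\oSR^{m+1}_{g,b',k,l}$, which is disjoint from $C'$. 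Thus the iteration respects the decomposition of $\oSR^\bullet_{g,k,l}$ into the pieces $\oSR^\bullet_{g,b,k,l}$, and running it to termination on the stratum $b$ reproduces the right-hand side of the first displayed formula, now summed only over $\toSR^m_{g,b,k,l}$. The passage from $\oSR$ to $\toSR$ uses the parity statement Proposition~6.13 of~\cite{Tes15}, which is a property of individual graphs and is therefore unaffected by the refinement.

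The second displayed identity follows from the first exactly as Corollary~\ref{cor:int version} follows from Theorem~\ref{thm:int version}: one multiplies by $\prod_i p_i^{2a_i}$, sums over all $(a_1,\ldots,a_l)$ with $\sum a_i=n$, and recognizes $\sum_{L'\in\binom{L}{n-m}}\sum_{\sum a_i=n}\prod_i p_i^{2a_i}\,W_G\,\omega_{L'}=\tW_G\,\frac{\bar\omega^{n-m}}{(n-m)!}$. This is a pure manipulation of the edge-weights and the curvature forms $\omega_i$ on a single cell $\CM_{(G,[\K])}(\mathbf{p})$; it commutes with the splitting by $b$, so no new input is needed.

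The only genuinely new verification compared to~\cite{Tes15} is the compatibility of the combinatorial operations with the $b$-stratification, and this is furnished by Observation~\ref{obs:nodal gives topology}; I expect this check, rather than any analytic or sign computation, to be the crux, since everything else is local to a cell and blind to $b$. The very refined analogue, with $b$ replaced by a partition $\bar{k}$ and $\oSR^\bullet_{g,b,k,l}$ by $\oSR^\bullet_{g,\bar{k},l}$, is proved identically, now invoking the stronger part of Observation~\ref{obs:nodal gives topology} asserting that $\CB\partial_e$ also preserves the distribution of boundary points among the boundary components.
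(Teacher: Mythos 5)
Your proposal is correct and follows the paper's own argument: Steps 1--3 carry over unchanged to give \eqref{eq:it0bdries}, Observation~\ref{obs:nodal gives topology} guarantees that $\CB\partial_e$ preserves the smoothing (hence the $b$-stratum), so Lemma~\ref{lem: for iterations} applies stratum-by-stratum with the closedness hypothesis verified exactly as you state, and the second identity follows as Corollary~\ref{cor:int version} does from Theorem~\ref{thm:int version}. Your explicit check of the closedness hypothesis and the remark on the very refined case via the stronger part of the observation match the paper's (more tersely stated) reasoning.
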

Similarly, under the same assumptions,
\begin{theorem}\label{thm:very_refined_int version}
\begin{gather*}
2^\frac{g+k-1}{2}\langle\tau_{a_1}\cdots\tau_{a_l}\sigma^{\bar{k}}\rangle^o_{g}=\sum_{m\geq 0}\sum_{(G,[\K])\in\toSR_{g,\ok,l}^m}\sum_{L'\in\binom{L}{n-m}}\int_{\CM_{(G,[\K])}(\mathbf{p})}W_G\omega_{L'},
\end{gather*}
and
\begin{gather*}
2^\frac{g+k-1}{2}\sum_{\sum_{i=1}^l a_i=n}\prod p_i^{2a_i}\langle\tau_{a_1}\cdots\tau_{a_l}\sigma^{\ok}\rangle^o_{g}=\sum_{m\geq 0}\sum_{(G,[\K])\in\toSR_{g,\ok,l}^m}\sum_{L'\in\binom{L}{n-m}}\int_{\CM_{(G,[\K])}(\mathbf{p})}\tW_G\frac{\bar\omega^{n-m}}{(n-m)!}.
\end{gather*}
\end{theorem}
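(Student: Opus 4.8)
The plan is to reproduce the derivation of Theorem~\ref{thm:int version} essentially verbatim, but with every sum over graphs restricted to the fixed topological type prescribed by the partition~$\ok$. The starting point is equation~\eqref{eq:it0bdries_very}, the analog of~\eqref{eq:it0} obtained by retaining only those trivalent graded ribbon graphs whose smoothing distributes the boundary marked points according to~$\ok$; as already noted, Steps~1--3 of the preceding subsection apply unchanged once the contributing graphs are sorted by the topology of their smoothing.

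First I would isolate the invariance that makes it legitimate to run Step~4 inside a single class. By Observation~\ref{obs:nodal gives topology}, the smoothing of a graded critical nodal graph---and therefore both its number of boundary components and the induced partition~$\ok$ of the boundary marked points---depends only on the underlying graph and is preserved by the operation $\CB\partial_e$. Taking $C=\oSR^m_{g,\ok,l}$, the closure hypothesis of Lemma~\ref{lem: for iterations} then holds for free: a graph whose smoothing has a partition other than~$\ok$ is sent by $\CB\partial_e$ to a graph whose smoothing still has the wrong partition, so it cannot land in the image~$C'$, while Observation~\ref{obs:nodal gives topology} also shows that every element of $\oSR^{m+1}_{g,\ok,l}$ is produced from some element of $\oSR^{m}_{g,\ok,l}$ by contracting an effective bridge and applying the base operation.

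With closure established, I would apply Lemma~\ref{lem: for iterations} iteratively starting from $C=\oSR^0_{g,\ok,l}$, exactly as in the proof of Theorem~\ref{thm:int version}, the only difference being that every sum now ranges over graphs whose smoothing is of type~$\ok$. The recursion converts the boundary integrals of~\eqref{eq:it0bdries_very} into bulk integrals $\int_{\CM_{(G,[K])}}W_G\omega_{L'}$ together with boundary integrals at the next step, and after the last iteration only the odd critical graphs survive by the parity observation (Proposition~6.13 in~\cite{Tes15}); since that observation is a condition on each boundary component of each component~$G_i$ separately, it is insensitive to the global refinement by~$\ok$. This gives the first displayed formula. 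The second formula is deduced from the first by exactly the manipulation producing Corollary~\ref{cor:int version} from Theorem~\ref{thm:int version}: multiply by $\prod_i p_i^{2a_i}$, sum over $\sum_i a_i=n$, and collect the $\omega$-terms using $\bar\omega=\sum_i p_i^2\omega_i$, which replaces $W_G$ by $\tW_G$ and $\sum_{L'}\omega_{L'}$ by $\bar\omega^{n-m}/(n-m)!$.

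The single point that genuinely uses the refined setup, and hence the part to check with care, is the closure hypothesis of Lemma~\ref{lem: for iterations}: one must be certain that the recursion never mixes graphs of different smoothing-type, so that the one global reduction underlying Theorem~\ref{thm:int version} splits into independent reductions indexed by~$\ok$. This is precisely the content of Observation~\ref{obs:nodal gives topology}, which guarantees both that $\CB\partial_e$ preserves~$\ok$ and that it is invertible at the level of graphs; the remainder of the argument is formally identical to the unrefined case, and the proof of Theorem~\ref{thm:refined_int version} is the special case in which one records only the number of boundary components rather than the full partition~$\ok$.
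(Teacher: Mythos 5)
Your proposal is correct and follows essentially the same route as the paper: it invokes Observation~\ref{obs:nodal gives topology} to verify the closure hypothesis of Lemma~\ref{lem: for iterations} within each fixed smoothing-type~$\ok$, iterates the lemma starting from the refined analog~\eqref{eq:it0bdries_very} of~\eqref{eq:it0}, and concludes with the parity observation and the Corollary-style resummation for the second formula. Your explicit remarks that $\CB\partial_e$ preserves the partition~$\ok$ and that the parity condition is a per-component statement are exactly the points the paper relies on, just spelled out in more detail.
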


Step 5 follows without change, since the Laplace transform is performed cell-by-cell, and then summed over gradings, we see that for the refined numbers it holds that
\begin{theorem}\label{thm:comb refined}
Fix $g,k,l\ge 0$ such that $2g-2+k+2l>0$. Let $\lambda_1,\ldots,\lambda_l$ be formal variables. Then we have
\begin{multline}\label{eq:refined combinatorial formula}
2^{\frac{g+k-1}{2}}\sum_{a_1,\ldots,a_l\ge 0}\langle\tau_{a_1} \tau_{a_2} \cdots \tau_{a_l}\sigma^k\rangle_{g,b}^o\prod_{i=1}^l\frac{2^{a_i}(2a_i-1)!!}{\lambda_i^{2a_i+1}}=\\
=\sum_{G=\left(\coprod_i G_i\right)/N\in\widetilde{\oR}^*_{g,b,k,l}}\frac{\prod_i 2^{v_I(G_i)+g(G_i)+b(G_i)-1}}{|\Aut(G)|}\prod_{e\in\Edges(G)}\lambda(e).
\end{multline}
\begin{multline}\label{eq:very refined combinatorial formula}
2^{\frac{g+k-1}{2}}\sum_{a_1,\ldots,a_l\ge 0}\langle\tau_{a_1} \tau_{a_2} \cdots \tau_{a_l}\sigma^{\bar{k}}\rangle_{g}^o\prod_{i=1}^l\frac{2^{a_i}(2a_i-1)!!}{\lambda_i^{2a_i+1}}=\\
=\sum_{G=\left(\coprod_i G_i\right)/N\in\widetilde{\oR}^*_{g,\bar{k},l}}\frac{\prod_i 2^{v_I(G_i)+g(G_i)+b(G_i)-1}}{|\Aut(G)|}\prod_{e\in\Edges(G)}\lambda(e).
\end{multline}
\end{theorem}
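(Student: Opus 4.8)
The plan is to run exactly the five-step argument of the unrefined case that culminates in \eqref{eq:combinatorial formula}, checking at each stage that it is compatible with the decomposition of the combinatorial moduli by the topological type of the smoothing. I would take as starting point the refined and very refined integrated formulas already recorded in Theorems~\ref{thm:refined_int version} and~\ref{thm:very_refined_int version} --- specifically the identities expressing $2^{(g+k-1)/2}\sum\prod p_i^{2a_i}\langle\cdots\rangle^o_{g,b}$ (resp. $\langle\cdots\rangle^o_{g}$ with $\sigma^{\ok}$) as a finite sum of integrals $\int_{\CM_{(G,[\K])}(\mathbf{p})}\tW_G\,\bar\omega^{\,n-m}/(n-m)!$ over $(G,[\K])\in\toSR^m_{g,b,k,l}$ (resp. $\toSR^m_{g,\ok,l}$), weighted by $L'\in\binom{L}{n-m}$.

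The reason this restricted integrated formula is available is Observation~\ref{obs:nodal gives topology}: the contraction $\partial_e$ and the base operation $\CB$ driving the iteration of Lemma~\ref{lem: for iterations} alter only the nodal structure, while the smoothing --- and hence both the boundary-component count $b$ and the partition $\ok$ of boundary marked points --- is invariant. Thus the sets $C=\toSR^m_{g,b,k,l}$ and $C=\toSR^m_{g,\ok,l}$ are closed in the precise sense the lemma demands, and iterating it never leaves a fixed topological stratum. With the integrated formulas in hand, the remaining task is Step~5, the Laplace transform $\int_{p_i\in\R_+}\bigwedge dp_i\,e^{-\sum\lambda_i p_i}(\cdots)$. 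Because this transform is performed cell-by-cell, the per-cell evaluation yielding $\pm\,\frac{\prod_i 2^{v_I(G_i)+(g(G_i)+b(G_i)-1)/2}}{|\Aut(G,[\K])|}\prod_{e}\lambda(e)$ is unchanged: it reads off only the internal data of $(G,[\K])$ --- the genera $g(G_i)$, boundary counts $b(G_i)$, internal-vertex counts $v_I(G_i)$, and the edge types feeding into $\lambda(e)$ --- and is blind to which global stratum $G$ inhabits. Summing over the gradings $[\K]$ on a fixed underlying graph $G$ and importing the sign analysis of Section~6.2 of \cite{Tes15} replaces $|\Aut(G,[\K])|$ by $|\Aut(G)|$ and removes the sign, after which summing over all $G\in\widetilde{\oR}^*_{g,b,k,l}$ (resp. $\widetilde{\oR}^*_{g,\ok,l}$) produces \eqref{eq:refined combinatorial formula} (resp. \eqref{eq:very refined combinatorial formula}).

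The only thing demanding genuine care, and thus the main obstacle, is verifying that restricting every sum to a fixed $b$ or $\ok$ is self-consistent across all steps --- that neither the Laplace transform nor the grading sum ever mixes graphs of distinct topological type. This is exactly what the locality of these operations, combined with Observation~\ref{obs:nodal gives topology} pinning the topological invariants to the smoothing rather than to the grading or to the intermediate nodal degenerations, guarantees. Once that locality is confirmed, both refined formulas follow with no computation beyond what the unrefined proof of \cite{Tes15} already supplies.
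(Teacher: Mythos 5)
Your proposal is correct and follows essentially the same route as the paper: the paper likewise derives Theorems~\ref{thm:refined_int version} and~\ref{thm:very_refined_int version} by combining Observation~\ref{obs:nodal gives topology} with the closure hypothesis of Lemma~\ref{lem: for iterations}, and then notes that Step~5 goes through unchanged because the Laplace transform is performed cell-by-cell and the grading sum with the sign analysis of Section~6.2 of \cite{Tes15} is likewise local to each graph. The only (inessential) difference is that the paper applies Lemma~\ref{lem: for iterations} to the sets generated from a single smooth graph by bridge contractions and the base operation~$\CB$, whereas you take $C=\toSR^m_{g,b,k,l}$ or $\toSR^m_{g,\ok,l}$ directly; both choices are closed for the same reason, so the arguments coincide.
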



\section{Matrix models}\label{section:matrix model}

In this section we present matrix models for the very refined and the extended refined open partition functions and study their properties. In Section~\ref{subsection:open partition function} we briefly recall the derivation of the matrix model for the open partition function~$\tau^o$. Then in Section~\ref{subsection:refined open partition function} we show how to modify it in order to control the distribution of boundary marked points on boundary components of a Riemann surface with boundary. As a result, we obtain a two-matrix model for the very refined open partition function~$\ttau^o$. In Section~\ref{subsection:extended refined open partition function} we give a construction of the extended refined open partition function~$\tau^{o,ext}_N$ and present simple transformations that relate it to the function~$\ttau^o$. In Section~\ref{subsection:Feynman for extended} we analyze the Feynman diagram expansion of the matrix integral for~$\tau^{o,ext}_N$ and then in Sections~\ref{subsection:string equation},~\ref{subsection:dilaton equation} derive the string and the dilaton equations for~$\tau^{o,ext}_N$.

It will be useful for the future to rewrite formula~\eqref{eq:very refined combinatorial formula} in the following way. For a graph $G=\left(\coprod_i G_i\right)/N\in\tcR^*_{g,\ok,l}$ introduce a combinatorial constant $c(G)$ by $c(G):=\prod_i c(G_i)$, where
\begin{gather}\label{eq:definition of constant}
c(G_i):=
\begin{cases}
\frac{1}{2},&\text{if $G_i$ is a ghost},\\
2^{e_I(G_i)-v_I(G_i)-v_{B3}(G_i)-v_{BM}(G_i)+b(G_i)},&\text{otherwise},
\end{cases}
\end{gather}
and $e_I(G_i)$ denotes the number of internal edges in $G_i$, $v_{B3}(G_i)$ is the number of boundary trivalent vertices and $v_{BM}(G_i)$ is the number of boundary marked points in $G_i$. Then for any $g,k,l\ge 0$, $b\ge 1$ and $\ok\in P(k,b)$ we have
\begin{gather}\label{eq:very refined combinatorial formula,2}
\sum_{a_1,\ldots,a_l\ge 0}\langle\tau_{a_1} \tau_{a_2}\cdots\tau_{a_l}\sigma^{\ok}\rangle_g^o\prod_{i=1}^l\frac{(2a_i-1)!!}{\lambda_i^{2a_i+1}}=\sum_{G=\left(\coprod_i G_i\right)/N\in\tcR^*_{g,\ok,l}}\frac{c(G)}{|\Aut(G)|}\prod_{e\in\Edges(G)}\lambda(e).
\end{gather}

\subsection{Open partition function}\label{subsection:open partition function}

Let $M\ge 1$. Consider positive real numbers $\lambda_1,\ldots,\lambda_M\in\mbR_{>0}$ and the diagonal matrix
$$
\Lambda:=\diag(\lambda_1,\ldots,\lambda_M).
$$
Let
$$
c_{\Lambda,M}:=(2\pi)^{-\frac{M^2}{2}}\prod_{i=1}^M\sqrt{\lambda_i}\prod_{1\le i<j\le M}(\lambda_i+\lambda_j).
$$
Denote by $\mcH_M$ the space of Hermitian $M\times M$ matrices. For a Hermitian matrix $H\in\mcH_M$ denote by $h_{i,j}$, $1\le i,j\le M$, its entries. Let
$$
t_i(\Lambda):=(2i-1)!!\tr\Lambda^{-2i-1},\quad i\ge 0.
$$
We consider the standard volume form
$$
dH:=\prod_{i=1}^M d h_{i,i} \prod_{1\leq i<j\leq M} d\left(\Re{h}_{i,j} \right)d \left(\Im{h}_{i,j}\right)
$$
on $\mcH_M$. In~\cite{BT15} the second and the third authors proved that
\begin{gather}\label{eq:matrix model for open}
\left.\tau^o\right|_{t_i=t_i(\Lambda)}=\left.e^{\frac{\d^2}{\d s\d s_-}}\left(e^{\frac{s^3}{6}}c_{\Lambda,M}\int_{\mcH_M}e^{\frac{1}{6}\tr H^3-\frac{1}{2}\tr H^2\Lambda}\det\frac{\Lambda+\sqrt{\Lambda^2-2s_-}-H+s}{\Lambda+\sqrt{\Lambda^2-2s_-}-H-s}dH\right)\right|_{s_-=0}.
\end{gather}
The integral in the brackets on the right-hand side of this expression can be understood in the sense of formal matrix integration. The form
$$
c_{\Lambda,M}e^{-\frac{1}{2}\tr H^2\Lambda}dH
$$
gives a Gaussian probability measure on~$\mcH_M$. Then we can expand the function
$$
e^{\frac{1}{6}\tr H^3}\det\frac{\Lambda+\sqrt{\Lambda^2-2s_-}-H+s}{\Lambda+\sqrt{\Lambda^2-2s_-}-H-s}
$$
in a series of the form
\begin{gather}\label{expression:series}
\sum_{a,b,m\ge 0}s^as_-^b P_{a,b,m},
\end{gather}
where $P_{a,b,m}$ is a polynomial of degree $m$ in expressions of the form $\tr(H\Lambda^{-d_1}H\Lambda^{-d_2}\cdots H\Lambda^{-d_r})$, $r\ge 1$. Here the degree is introduced by putting $\deg(\tr(H\Lambda^{-d_1}H\Lambda^{-d_2}\cdots H\Lambda^{-d_r})):=r+2\sum_{i=1}^r d_i$. Note that the integral
$$
c_{\Lambda,N}\int_{\mcH_N}P_{a,b,m}e^{-\frac{1}{2}\tr H^2\Lambda}dH
$$
is zero, if $m$ is odd, and is a rational function in $\lambda_1,\ldots,\lambda_N$ of degree $-\frac{m}{2}$, if $m$ is even. The integral on the right-hand side of~\eqref{eq:matrix model for open} is understood as the term-wise integral of~\eqref{expression:series} with respect to our Gaussian probability measure on~$\mcH_M$. We refer the reader to~\cite{BT15} for a more detailed discussion.

Let us briefly recall the derivation of formula~\eqref{eq:matrix model for open}. It is obtained from the combinatorial formula~\eqref{eq:combinatorial formula}, rewritten similarly to~\eqref{eq:very refined combinatorial formula,2}, using the standard matrix models technique. An odd critical nodal ribbon graph with boundary can be obtained from the disjoint union of critical non-nodal ribbon graphs with boundary by gluing boundary marked points. Since the sides of each node of the nodal graph are marked by plus or minus, we should assign pluses and minuses to the boundary marked points of the critical non-nodal ribbon graphs with boundary. A collar neighborhood of a boundary component of a critical non-nodal ribbon graph with boundary, that is not a ghost, is a circle with ribbon half-edges attached to it and also with boundary marked points (see Fig.~\ref{fig:boundary component}).
\begin{figure}[t]
\begin{tikzpicture}[scale=0.7]

\draw (0,0) circle (2.55);

\draw (2.53,0.3) -- (1,0.3);
\draw (2.53,-0.3) -- (1,-0.3);
\draw (-2.53,0.3) -- (-1,0.3);
\draw (-2.53,-0.3) -- (-1,-0.3);

\draw (0.3,2.53) -- (0.3,1);
\draw (-0.3,2.53) -- (-0.3,1);
\draw (0.3,-2.53) -- (0.3,-1);
\draw (-0.3,-2.53) -- (-0.3,-1);

\draw [thick,dotted] (0.3,1) -- (-0.3,1);
\draw [thick,dotted] (-1,0.3) -- (-1,-0.3);
\draw [thick,dotted] (0.3,-1) -- (-0.3,-1);
\draw [thick,dotted] (1,0.3) -- (1,-0.3);

\coordinate [label=center:\textbullet] (B) at (1.27,2.21);
\coordinate [label=-135:$+$] (B) at (1.27,2.21);
\coordinate [label=center:\textbullet] (B) at (1.80,-1.80);
\coordinate [label=135:$+$] (B) at (1.80,-1.80);
\coordinate [label=center:\textbullet] (B) at (-2.21,-1.27);
\coordinate [label=45:$-$] (B) at (-2.21,-1.27);

\draw [thick] (4,0) -- (5.5,0);
\draw [thick] (4,0.15) -- (4,-0.15);
\fill (5.25,0.15)--(5.5,0)--(5.25,-0.15)--cycle;

\coordinate [label=0:$\frac{s^2s_-}{2^6}\mathrm{tr}\left(H\Lambda^{-2}H\Lambda^{-2}H\Lambda^{-3}H\Lambda^{-1}\right)$] (B) at (6.5,0);

\end{tikzpicture}
\caption{Boundary piece}
\label{fig:boundary component}
\end{figure}
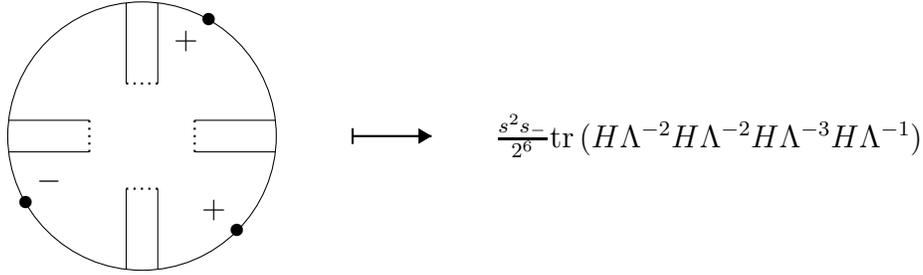
Such a circle with a configuration of ribbon half-edges and marked points will be called a boundary piece. We see that our odd critical nodal ribbon graph with boundary is obtained by
\begin{itemize}
\item gluing a set of trivalent stars (see Fig.~\ref{fig:internal and ghost})
\begin{figure}[t]
\begin{tikzpicture}

\begin{scope}[scale=0.43]

\draw (0.5,0) -- (0.5,-2.5);
\draw (-0.5,0) -- (-0.5,-2.5);
\draw (0.5,0) -- (2.67,1.25);
\draw (0,0.87) -- (2.17,2.12);
\draw (-0.5,0) -- (-2.67,1.25);
\draw (0,0.87) -- (-2.17,2.12);

\draw [thick,dotted] (0.5,-2.5) -- (-0.5,-2.5);
\draw [thick,dotted] (2.67,1.25) -- (2.17,2.12);
\draw [thick,dotted] (-2.67,1.25) -- (-2.17,2.12);

\end{scope}

\begin{scope}[shift={(-0.8,0)},scale=0.7]
\draw [thick] (4,0) -- (5.5,0);
\draw [thick] (4,0.15) -- (4,-0.15);
\fill (5.25,0.15)--(5.5,0)--(5.25,-0.15)--cycle;
\coordinate [label=0: $\frac{1}{6}\mathrm{tr}(H^3)$] (B) at (6.5,0);
\end{scope}

\begin{scope}[shift={(9,0)},scale=0.7]

\draw [pattern=horizontal lines, pattern color=black] (0,0) circle (1.5);
\coordinate [label=center:\textbullet] (B) at (0.87*1.5,-0.5*1.5);
\coordinate [label=center:\textbullet] (B) at (-0.87*1.5,-0.5*1.5);
\coordinate [label=center:\textbullet] (B) at (0,1.5);
\coordinate [label=center:$+$] (B) at (0,1.9);
\coordinate [label=center:$+$] (B) at (-0.87*1.5-0.3,-0.5*1.5-0.2);
\coordinate [label=center:$+$] (B) at (0.87*1.5+0.3,-0.5*1.5-0.2);

\draw [thick] (2.9,0) -- (4.4,0);
\draw [thick] (2.9,0.15) -- (2.9,-0.15);
\fill (4.15,0.15)--(4.4,0)--(4.15,-0.15)--cycle;
\coordinate [label=0: $\frac{s^3}{6}$] (B) at (5.5,0);

\end{scope}

\end{tikzpicture}
\caption{Trivalent star and a ghost}
\label{fig:internal and ghost}
\end{figure}
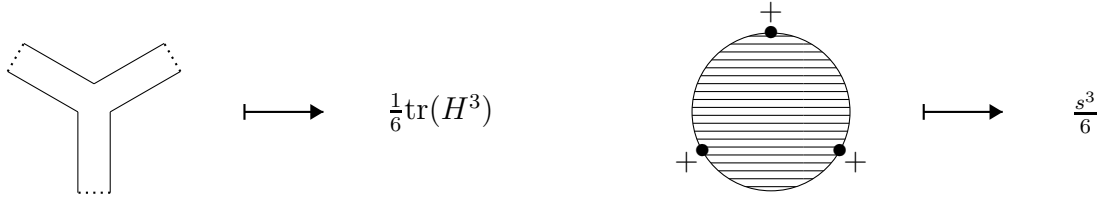
and boundary pieces,
\item taking the disjoint union with a number of ghost components (see Fig.~\ref{fig:internal and ghost}), and
\item gluing each boundary marked point coming with minus to a boundary marked point coming with plus.
\end{itemize}
Remember also that, according to the definition of an odd critical nodal ribbon graph with boundary, the number of boundary marked points coming plus on each boundary piece should be odd. We obtain that the trivalent stars give the contribution $e^{\frac{1}{6}\tr H^3}$ in the matrix model~\eqref{eq:matrix model for open}. Let
$$
G(\Lambda,s_-):=\sum_{m\ge 0}\frac{2^{-m}}{m+1}{2m\choose m}s_-^m\Lambda^{-2m-1}=\frac{2}{\Lambda+\sqrt{\Lambda^2-2s_-}}.
$$
Then boundary pieces give
\begin{gather*}
\exp\left(\tr\left[\sum_{k\ge 1}\frac{1}{k}\left(\frac{H+s}{2}G(\Lambda,s_-)\right)^k-\sum_{k\ge 1}\frac{1}{k}\left(\frac{H-s}{2}G(\Lambda,s_-)\right)^k\right]\right)=\det\frac{\Lambda+\sqrt{\Lambda^2-2s_-}-H+s}{\Lambda+\sqrt{\Lambda^2-2s_-}-H-s}.
\end{gather*}
The ghost components give the factor~$e^\frac{s^3}{6}$ in~\eqref{eq:matrix model for open}. The application of the operator $e^{\frac{\d^2}{\d s\d s_-}}$ and setting $s_-=0$ correspond to gluing each boundary marked point coming with minus to a boundary marked point coming with plus.

\subsection{Very refined open partition function}\label{subsection:refined open partition function}

Let us construct now a matrix model for the very refined open partition function $\ttau^o$. Let $N\ge 1$. In addition to the space~$\mcH_M$ of Hermitian matrices, we consider the space~$\Mat_{N,N}(\mbC)$ of complex $N\times N$ matrices. We consider it as a real vector space of dimension $2N^2$. For a matrix $Z\in\Mat_{N,N}(\mbC)$ denote by $z_{i,j}$, $1\le i,j\le N$, its entries. Define a volume form $dZ$ on $\Mat_{N,N}(\mbC)$ by
$$
dZ:=\prod_{1\le i,j\le N}d (\Re z_{i,j})d(\Im z_{i,j}).
$$
Consider the Gaussian probability measure on $\Mat_{N,N}(\mbC)$ given by the form
$$
\frac{1}{(2\pi)^{N^2}}e^{-\frac{1}{2}\tr Z\oZ^t}dZ.
$$
Let $\theta_{i,j}$, $1\le i,j\le N$, be complex variables and
\begin{align*}
\Theta:=&(\theta_{i,j})_{1\le i,j\le N}\in\Mat_{N,N}(\mbC),\\
q_m(\Theta):=&\tr\Theta^m,\quad m\ge 0.
\end{align*}
\begin{theorem}\label{theorem:very refined matrix model}
We have
\begin{align}
&\left.\ttau^o\right|_{\substack{t_i=t_i(\Lambda)\\q_i=q_i(\Theta)}}=\frac{c_{\Lambda,M}}{(2\pi)^{N^2}}\int_{\mcH_M\times\Mat_{N,N}(\mbC)}e^{-\frac{1}{2}\tr H^2\Lambda-\frac{1}{2}\tr Z\oZ^t}e^{\frac{1}{6}\tr H^3+\frac{1}{6}\tr Z^3+\frac{1}{2}\tr\oZ^t\Theta}\times\label{eq:very refined matrix model}\\
&\hspace{4cm}\times\det\frac{\Lambda\otimes\id_N+\sqrt{\Lambda^2\otimes\id_N-\id_M\otimes\oZ^t}-H\otimes\id_N+\id_M\otimes Z}{\Lambda\otimes\id_N+\sqrt{\Lambda^2\otimes\id_N-\id_M\otimes\oZ^t}-H\otimes\id_N-\id_M\otimes Z}dHdZ.\notag
\end{align}
\end{theorem}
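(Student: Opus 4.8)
The plan is to derive \eqref{eq:very refined matrix model} from the very refined combinatorial formula \eqref{eq:very refined combinatorial formula,2} by repeating the matrix-model bookkeeping that produces the open matrix model \eqref{eq:matrix model for open}, with the scalar variables $s$ and $s_-$ promoted to the complex matrix $Z$ and its adjoint $\oZ^t$ and with the gluing operator $\exp(\d^2/\d s\,\d s_-)|_{s_-=0}$ replaced by a Gaussian integration over $Z$. First I would exponentiate \eqref{eq:very refined combinatorial formula,2} so as to rewrite the left-hand side of \eqref{eq:very refined matrix model}, after the substitution $t_i=t_i(\Lambda)$, $q_i=q_i(\Theta)$, as a sum over all (possibly disconnected) odd critical nodal ribbon graphs with boundary, each connected component $G_i$ weighted by $c(G_i)/|\Aut|$ and $\prod_e\lambda(e)$ and each boundary component of the smoothing contributing a factor $q_{k_j}(\Theta)=\tr\Theta^{k_j}$. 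The goal is then to recognize the right-hand side of \eqref{eq:very refined matrix model}, expanded by Wick's theorem, as precisely this graph sum.

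The two quadratic forms $-\tfrac12\tr H^2\Lambda$ and $-\tfrac12\tr Z\oZ^t$ fix the propagators: the $H$-contraction gives the internal-edge weight $\langle h_{i,j}h_{k,l}\rangle\propto(\lambda_i+\lambda_j)^{-1}\delta_{il}\delta_{jk}$ of \eqref{eq:definition of lambda}, and the $Z$-contraction gives $\langle z_{i,j}\overline{z_{k,l}}\rangle=\delta_{ik}\delta_{jl}$. The $H$-sector is then handled verbatim as in the closed and open cases: the vertex $\tfrac16\tr H^3$ creates the trivalent stars, the $H$-contractions create the internal edges, and the $\mbC^M$ index loops reproduce the faces with weights governed by the $\lambda_i$. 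Writing the determinant as $\exp\tr\bigl[\log(A+\id_M\otimes Z)-\log(A-\id_M\otimes Z)\bigr]$ with $A=\Lambda\otimes\id_N+\sqrt{\Lambda^2\otimes\id_N-\id_M\otimes\oZ^t}-H\otimes\id_N$ and expanding in $\id_M\otimes Z$, each surviving term (with an odd number of $\id_M\otimes Z$ insertions, as required by odd-criticality) is a boundary piece whose $\id_M\otimes Z$ insertions are the plus boundary marked points; expanding the matrix square root in powers of $\id_M\otimes\oZ^t$ then reproduces, exactly as $G(\Lambda,s_-)$ did in the open case, the minus boundary marked points on the boundary edges together with the boundary-edge weight of \eqref{eq:definition of lambda}.

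The core of the argument is to show that the $Z$-integration simultaneously performs the gluing and installs the $\Theta$-weights. The $Z$-propagator pairs each plus marked point (a $Z$ coming from the determinant or from the ghost vertex $\tfrac16\tr Z^3$) with a minus marked point (a $\oZ^t$ coming from a boundary edge), creating a node exactly as $\exp(\d^2/\d s\,\d s_-)|_{s_-=0}$ did; the cubic vertex $\tfrac16\tr Z^3$ supplies the ghost components, in place of $e^{s^3/6}$; and the linear vertex $\tfrac12\tr\oZ^t\Theta$ caps each remaining plus marked point with a factor $\Theta$, turning it into an honest boundary marked point of the smoothed surface. It then remains to trace the $\mbC^N$ indices: since $Z$ and $\oZ^t$ enter only through $\id_M\otimes(\cdot)$, these indices run in a single loop around each boundary component of the smoothed nodal graph, so that loop collects exactly the $\Theta$'s of its $k_j$ boundary marked points and contributes $\tr\Theta^{k_j}=q_{k_j}(\Theta)$; the product over components reproduces $\prod_j q_{k_j}(\Theta)$, which is the one genuinely new feature relative to \eqref{eq:matrix model for open}.

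Finally I would check that the prefactors agree: the symmetry factors of Wick's theorem assemble into $1/|\Aut(G)|$, and the numerical normalizations (the $\tfrac16$ at the cubic vertices, the $\tfrac12$ at the $\Theta$-vertex, and the coefficients $\tfrac{2^{-m}}{m+1}\binom{2m}{m}$ from the square-root expansion) collect into the constant $c(G)$ of \eqref{eq:definition of constant}, exactly as in the open derivation. The hard part will be the $\mbC^N$ index-loop bookkeeping of the previous paragraph: one must prove that after all contractions the $\mbC^N$ indices form exactly $b$ closed loops in bijection with the boundary components of the smoothing, each carrying exactly $k_j$ insertions of $\Theta$. This is delicate because $\oZ^t$ sits inside the noncommutative square root, so one must control how the $\oZ^t$'s thread through a boundary piece and how gluing via the $Z$-propagator merges boundary pieces into a single smoothed boundary component; here Observation~\ref{obs:nodal gives topology}, which guarantees that the number of boundary components and the distribution of boundary marked points depend only on the smoothing and not on the grading, is exactly what makes this bijection well defined. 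Once this index combinatorics is in place, the remaining identifications are routine and parallel the open case.
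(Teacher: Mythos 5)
Your proposal is correct and follows essentially the same route as the paper's proof: the paper likewise expands the combinatorial formula~\eqref{eq:very refined combinatorial formula,2} via Wick calculus on $\mcH_M\times\Mat_{N,N}(\mbC)$, with the $Z$-Gaussian contraction replacing the gluing operator $e^{\frac{\d^2}{\d s\d s_-}}|_{s_-=0}$, the vertex $\frac16\tr Z^3$ supplying ghosts, the vertex $\frac12\tr\oZ^t\Theta$ capping the remaining plus points as genuine boundary marked points, and your $\mbC^N$ index-loop bookkeeping (the step you flag as the hard part) handled there by the ``external ribbon edge'' smoothing picture of Figs.~4--7, which geometrically realizes the $Z$-propagators and forces each smoothed boundary component to contribute exactly $q_{k_j}(\Theta)=\tr\Theta^{k_j}$. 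Your only slip is the propagator normalization: under the paper's measure $\frac{1}{(2\pi)^{N^2}}e^{-\frac12\tr Z\oZ^t}dZ$ one has $\<z_{i,j},\oz_{k,l}\>=2\delta_{i,k}\delta_{j,l}$ as in~\eqref{eq:formula2 for complex}, not $\delta_{i,k}\delta_{j,l}$, a factor of two that is absorbed precisely by the $\frac12$'s in the elementary-piece weights and the $2^{-2m}$ coefficients in $G(\Lambda,\oZ^t)$, consistent with your remark that the constants assemble into $c(G)$.
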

\begin{proof}
We now use the combinatorial formula~\eqref{eq:very refined combinatorial formula,2}. As we explained in the previous section, an odd critical nodal ribbon graph with boundary is obtained by gluing trivalent stars (Fig.~\ref{fig:internal and ghost}) and boundary pieces (Fig.~\ref{fig:boundary component}), adding ghost components (Fig.~\ref{fig:internal and ghost}) and then gluing boundary points to create nodes. The problem now is to control the distribution of the boundary marked points on the boundary components in a smoothing the resulting nodal ribbon graph with boundary. Our idea is the following. Consider the nodal surface with boundary that is associated with our nodal ribbon graph with boundary. Consider a small neighborhood of a boundary node of this surface. At this node two small pieces of boundary components meet. Then, instead of gluing these two pieces at one point, we connect them by a small ribbon edge (see Fig.~\ref{fig:inserting external ribbon edges}).
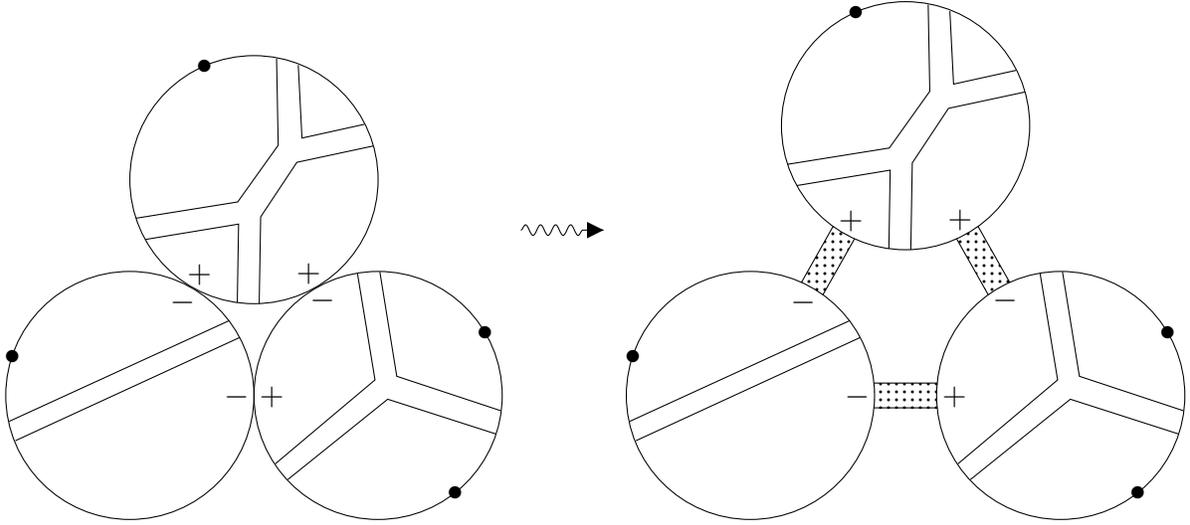
\begin{figure}[t]

\tikzset{snake it/.style={decoration={snake,
    amplitude = .7mm,
    segment length = 2mm,
    post length=2.2mm},decorate}}

\begin{tikzpicture}[scale=0.55]

\begin{scope}[shift={(9.45,4)}]
\path (0,0) edge[snake it] (2,0);
\fill (1.6,0.2)--(2,0)--(1.6,-0.2)--cycle;
\end{scope}

\draw (0,0) circle (3);
\draw (6,0) circle (3);
\draw (3,6*0.87) circle (3);

\draw (2.5965,2.2453) -- (2.6277,4.1449) -- (0.3771,3.7732);
\draw (0.149,4.2868) -- (2.6063,4.6806) -- (3.5853,6.0522) -- (3.5463,8.1464);
\draw (4.0704,7.9915) -- (4.1601,6.2261) -- (5.6662,6.5585);
\draw (3.1259,2.2219) -- (3.1619,4.3202) -- (4.0368,5.6396) -- (5.8819,6.0343);
\draw (3.515,-1.6547) -- (5.9255,0.376) -- (5.5019,2.9472);
\draw (6.0424,2.9911) -- (6.4515,0.4637) -- (8.9643,-0.3691);
\draw (3.8162,-2.0355) -- (6.2323,-0.0915) -- (8.8388,-0.9287);
\draw  (-2.9182,-0.6274) edge (2.3991,1.8093);
\draw  (-2.7674,-1.0946) edge (2.6589,1.4045);

\coordinate [label=center:\textbullet] (B) at (1.799,7.9549) {} {};
\coordinate [label=center:$+$] (B) at (1.6827,2.922) {};
\coordinate [label=center:$+$] (B) at (4.3186,2.9407) {} {};
\coordinate [label=center:$-$] (B) at (4.6582,2.3066) {} {} {};
\coordinate [label=center:$-$] (B) at (1.2759,2.2454) {} {} {} {};
\coordinate [label=center:$+$] (B) at (3.4291,-0.0454) {} {} {};
\coordinate [label=center:$-$] (B) at (2.5825,-0.0454) {} {} {} {};
\coordinate [label=center:\textbullet] (B) at (-2.8374,0.926) {} {} {};
\coordinate [label=center:\textbullet] (B) at (8.5799,1.505) {} {} {} {};
\coordinate [label=center:\textbullet] (B) at (7.8633,-2.3579) {} {} {} {} {};

\begin{scope}[shift={(15,0)}]

\draw (0,0) circle (3);
\draw  (-2.9182,-0.6274) edge (2.3991,1.8093);
\draw  (-2.7674,-1.0946) edge (2.6589,1.4045);

\begin{scope}[shift={(0.75,1.5*0.87)}]
\draw (3,6*0.87) circle (3);
\draw (2.5965,2.2453) -- (2.6277,4.1449) -- (0.3771,3.7732);
\draw (0.149,4.2868) -- (2.6063,4.6806) -- (3.5853,6.0522) -- (3.5463,8.1464);
\draw (4.0704,7.9915) -- (4.1601,6.2261) -- (5.6662,6.5585);
\draw (3.1259,2.2219) -- (3.1619,4.3202) -- (4.0368,5.6396) -- (5.8819,6.0343);
\coordinate [label=center:\textbullet] (B) at (1.799,7.9549) {} {};
\coordinate [label=center:$+$] (B) at (4.3186,2.9407) {} {};
\coordinate [label=center:$+$] (B) at (1.6827,2.922) {};
\end{scope}

\begin{scope}[shift={(1.5,0)}]
\draw (6,0) circle (3);
\draw (3.515,-1.6547) -- (5.9255,0.376) -- (5.5019,2.9472);
\draw (6.0424,2.9911) -- (6.4515,0.4637) -- (8.9643,-0.3691);
\draw (3.8162,-2.0355) -- (6.2323,-0.0915) -- (8.8388,-0.9287);
\coordinate [label=center:\textbullet] (B) at (8.5799,1.505) {} {} {} {};
\coordinate [label=center:\textbullet] (B) at (7.8633,-2.3579) {} {} {} {} {};
\coordinate [label=center:$-$] (B) at (4.6582,2.3066) {} {} {};
\coordinate [label=center:$+$] (B) at (3.4291,-0.0454) {} {} {};
\end{scope}

\coordinate [label=center:$-$] (B) at (1.2759,2.2454) {} {} {} {};
\coordinate [label=center:$-$] (B) at (2.5825,-0.0454) {} {} {} {};
\coordinate [label=center:\textbullet] (B) at (-2.8374,0.926) {} {} {};

\draw (1.23313,2.73485) --  (7.5/2-1.75188,7.5*0.87-2.43535);
\draw (1.75188,2.43535) -- (7.5/2-1.23313,7.5*0.87-2.73485);
\draw (2.98501,0.2995) -- (7.5-2.98501,0.2995);
\draw (7.5-2.98501,-0.2995) -- (2.98501,-0.2995);
\draw (7.5-1.23313,2.73485) --  (7.5-7.5/2+1.75188,7.5*0.87-2.43535);
\draw  (7.5-7.5/2+1.23313,7.5*0.87-2.73485)--(7.5-1.75188,2.43535);

\fill [pattern=dots, pattern color=black] (1.23313,2.73485) --  (7.5/2-1.75188,7.5*0.87-2.43535)-- (7.5/2-1.23313,7.5*0.87-2.73485)--(1.75188,2.43535)--cycle;
\fill [pattern=dots, pattern color=black] (2.98501,0.2995) -- (7.5-2.98501,0.2995) -- (7.5-2.98501,-0.2995) -- (2.98501,-0.2995) -- cycle;
\fill [pattern=dots, pattern color=black] (7.5-1.23313,2.73485) --  (7.5-7.5/2+1.75188,7.5*0.87-2.43535)-- (7.5-7.5/2+1.23313,7.5*0.87-2.73485)--(7.5-1.75188,2.43535)--cycle;

\end{scope}

\end{tikzpicture}
\caption{Inserting external ribbon edges}
\label{fig:inserting external ribbon edges}
\end{figure}
The new ribbon edge will be called an external ribbon edge. In Fig.~\ref{fig:inserting external ribbon edges} we fill the external ribbon edges by dots in order to distinguish them with the usual internal ribbon edges. Doing this procedure at each node, we obtain a non-nodal surface with boundary, which is a smoothing of the initial nodal surface. Note that each half of an external ribbon edge is marked by plus or minus.

Note that the resulting non-nodal surface can be glued from elementary pieces in the following way. Again we have trivalent stars~(Fig.~\ref{fig:internal and ghost}). Then we have boundary pieces similar to what we have in the previous section, but now we want to replace each boundary marked point by an external ribbon half-edge, marked by plus or minus (see Fig.~\ref{fig:boundary with external}).
\begin{figure}[t]
\begin{tikzpicture}[scale=0.7]

\draw (0,0) circle (2.55);

\draw (2.53,0.3) -- (1,0.3);
\draw (2.53,-0.3) -- (1,-0.3);
\draw (-2.53,0.3) -- (-1,0.3);
\draw (-2.53,-0.3) -- (-1,-0.3);

\draw (0.3,2.53) -- (0.3,1);
\draw (-0.3,2.53) -- (-0.3,1);
\draw (0.3,-2.53) -- (0.3,-1);
\draw (-0.3,-2.53) -- (-0.3,-1);

\draw [thick,dotted] (0.3,1) -- (-0.3,1);
\draw [thick,dotted] (-1,0.3) -- (-1,-0.3);
\draw [thick,dotted] (0.3,-1) -- (-0.3,-1);
\draw [thick,dotted] (1,0.3) -- (1,-0.3);

\coordinate [label=center:$+$] (B) at (1.1,1.9);
\draw (1.27-0.2,2.21+0.1) -- (2.02-0.2,3.51+0.1);
\draw (1.27+0.19,2.21-0.11) -- (2.02+0.19,3.51-0.11);
\fill [pattern=dots, pattern color=black] (1.27-0.2,2.21+0.1) -- (2.02-0.2,3.51+0.1) -- (2.02+0.19,3.51-0.11) --  (1.27+0.19,2.21-0.11) -- cycle;
\draw [thick,dotted] (2.02-0.2,3.51+0.1) -- (2.02+0.19,3.51-0.11);
\coordinate [label=center:$+$] (B) at (1.55,-1.6);
\draw (1.80+0.16,-1.80+0.16) -- (1.8+1.5/1.41+0.16,-1.8-1.5/1.41+0.16);
\draw (1.80-0.16,-1.80-0.16) -- (1.8+1.5/1.41-0.16,-1.8-1.5/1.41-0.16);
\fill [pattern=dots, pattern color=black] (1.80+0.16,-1.80+0.16) -- (1.8+1.5/1.41+0.16,-1.8-1.5/1.41+0.16) -- (1.8+1.5/1.41-0.16,-1.8-1.5/1.41-0.16) --  (1.80-0.16,-1.80-0.16) -- cycle;
\draw [thick,dotted] (1.8+1.5/1.41+0.16,-1.8-1.5/1.41+0.16) -- (1.8+1.5/1.41-0.16,-1.8-1.5/1.41-0.16);
\coordinate [label=center:$-$] (B) at (-1.9,-1.1);
\draw (-2.21-0.1,-1.27+0.2) -- (-3.51-0.1,-2.02+0.2);
\draw (-2.21+0.11,-1.27-0.19) -- (-3.51+0.11,-2.02-0.19);
\fill [pattern=dots, pattern color=black] (-2.21-0.1,-1.27+0.2) -- (-3.51-0.1,-2.02+0.2) --(-3.51+0.11,-2.02-0.19) --  (-2.21+0.11,-1.27-0.19) -- cycle;
\draw [thick,dotted] (-3.51-0.1,-2.02+0.2) -- (-3.51+0.11,-2.02-0.19);

\begin{scope}[shift={(-2,0)}]

\draw [thick] (6,0) -- (7.5,0);
\draw [thick] (6,0.15) -- (6,-0.15);
\fill (7.25,0.15)--(7.5,0)--(7.25,-0.15)--cycle;

\coordinate [label=0:$\frac{1}{2^7}\mathrm{tr}(H\Lambda^{-2}H\Lambda^{-2}H\Lambda^{-3}H\Lambda^{-1})\tr(Z^2\oZ^t)$] (B) at (8.5,0);

\end{scope}

\end{tikzpicture}
\caption{Boundary piece with external ribbon half-edges}
\label{fig:boundary with external}
\end{figure}
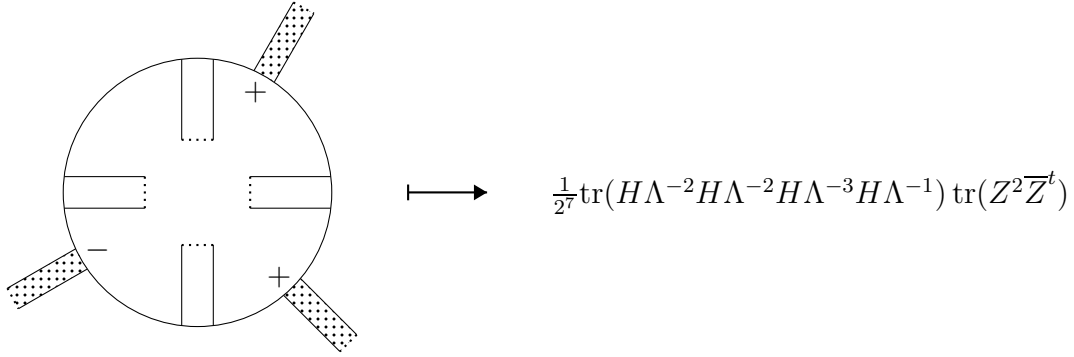
In the same way we replace the ghost component from Fig.~\ref{fig:internal and ghost} by the ghost component with external ribbon half-edges (see Fig.~\ref{fig:ghost with external}).
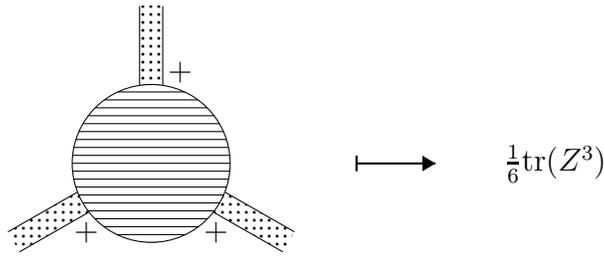
\begin{figure}[t]
\begin{tikzpicture}[scale=0.7]

\draw [pattern=horizontal lines, pattern color=black] (0,0) circle (1.5);
\coordinate [label=center:$+$] (B) at (0.55,1.73);
\coordinate [label=center:$+$] (B) at (-1.23,-1.3);
\coordinate [label=center:$+$] (B) at (1.23,-1.3);

\begin{scope}[shift={(1,0)}]
\draw [thick] (2.9,0) -- (4.4,0);
\draw [thick] (2.9,0.15) -- (2.9,-0.15);
\fill (4.15,0.15)--(4.4,0)--(4.15,-0.15)--cycle;
\coordinate [label=0: $\frac{1}{6}\mathrm{tr}(Z^3)$] (B) at (5.5,0);
\end{scope}

\draw (-0.22,1.49) -- (-0.22,2.99);
\draw (0.22,1.49) -- (0.22,2.99);
\fill [pattern=dots, pattern color=black] (-0.22,1.49) -- (-0.22,2.99) -- (0.22,2.99) --  (0.22,1.49) -- cycle;

\draw (1.31+0.1,-0.75+0.2) -- (2.62+0.1,-1.5+0.2);
\draw (1.31-0.12,-0.75-0.18) -- (2.62-0.12,-1.5-0.18);
\fill [pattern=dots, pattern color=black] (1.31+0.1,-0.75+0.2) -- (2.62+0.1,-1.5+0.2) -- (2.62-0.12,-1.5-0.18) -- (1.31-0.12,-0.75-0.18) -- cycle;

\draw (-1.31-0.1,-0.75+0.2) -- (-2.62-0.1,-1.5+0.2);
\draw (-1.31+0.12,-0.75-0.18) -- (-2.62+0.12,-1.5-0.18);
\fill [pattern=dots, pattern color=black] (-1.31-0.1,-0.75+0.2) -- (-2.62-0.1,-1.5+0.2) -- (-2.62+0.12,-1.5-0.18) -- (-1.31+0.12,-0.75-0.18) -- cycle;

\end{tikzpicture}
\caption{Ghost component with external ribbon half-edges}
\label{fig:ghost with external}
\end{figure}
In order to have marked points we have to introduce an external ribbon half-edge marked by minus (see Fig.~\ref{fig:external half-edge}).
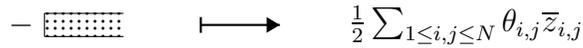
\begin{figure}[t]
\begin{tikzpicture}[scale=0.7]

\begin{scope}[shift={(0.07,0)}]
\draw [thick] (2.9,0) -- (4.4,0);
\draw [thick] (2.9,0.15) -- (2.9,-0.15);
\fill (4.15,0.15)--(4.4,0)--(4.15,-0.15)--cycle;
\coordinate [label=0: $\frac{1}{2}\sum_{1\le i,j\le N}\theta_{i,j}\oz_{i,j}$] (B) at (5.5,0);
\end{scope}

\draw (0,-0.22) -- (1.5,-0.22);
\draw (0,0.22) -- (1.5,0.22);
\draw (0,0.22) -- (0,-0.22);
\fill [pattern=dots, pattern color=black] (0,-0.22) -- (1.5,-0.22) -- (1.5,0.22) --  (0,0.22) -- cycle;

\coordinate [label=180: $-$] (B) at (0,0);

\end{tikzpicture}
\caption{External ribbon half-edge corresponding to a marked point}
\label{fig:external half-edge}
\end{figure}
Now, in order to obtain our non-nodal surface with external ribbon edges, we glue a set of elementary pieces of four types (Fig.~\ref{fig:internal and ghost},~\ref{fig:boundary with external},~\ref{fig:ghost with external},~\ref{fig:external half-edge}) according to the following rules:
\begin{itemize}
\item An internal ribbon half-edge should be glued to an internal ribbon half-edge.
\item An external ribbon half-edge with some sign should be glued to an external ribbon half-edge with an opposite sign.
\end{itemize}

For a polynomial $P(Z)\in\mbC[z_{ij},\oz_{kl}]$ let
$$
\<P(Z)\>:=\frac{1}{(2\pi)^{N^2}}\int P(Z)e^{-\frac{1}{2}\tr Z\oZ^t}dZ.
$$
Then we have
\begin{align}
&\<z_{i,j},z_{k,l}\>=\<\oz_{i,j},\oz_{k,l}\>=0,\label{eq:formula1 for complex}\\
&\<z_{i,j},\oz_{k,l}\>=2\delta_{i,k}\delta_{j,l}.\label{eq:formula2 for complex}
\end{align}
Formulas~\eqref{eq:formula1 for complex} and~\eqref{eq:formula2 for complex} show that our Gaussian probability measure on~$\Mat_{N,N}(\mbC)$ is the correct measure to control gluings of external ribbon half-edges with signs. To each elementary piece from Fig.~\ref{fig:internal and ghost},~\ref{fig:boundary with external},~\ref{fig:ghost with external},~\ref{fig:external half-edge} we assign a function on $\mcH_M\times\Mat_{N,N}(\mbC)$ in the way shown on these figures. Only the case of boundary pieces with external ribbon half-edges needs explanations. The function, corresponding to such a piece, is the product of a function on $\mcH_M$ and a function on $\Mat_{N,N}(\mbC)$. The function of $\mcH_{M}$ is obtained in the same way as in the previous section with the only difference that we forget about the variables~$s$ and~$s_-$. Concerning a function on~$\Mat_{N,N}(\mbC)$, we go around the boundary piece in the clockwise direction and look at the external ribbon half-edges that we meet. If an external ribbon half-edge is marked by plus then we assign to it the matrix~$Z$ and if it is marked by minus then we assign to it the matrix~$\oZ^t$. Then the function on $\Mat_{N,N}(\mbC)$ is the trace of the product of these matrices taken according to their order in the clockwise direction. So, the resulting function on $\mcH_M\times\Mat_{N,N}(\mbC)$ is the product of two traces. Note that the product of the traces of two matrices is the trace of their tensor product. We obtain that all boundary pieces with external ribbon half-edges give the following contribution to the matrix model for~$\ttau^o$:
\begin{gather}\label{expr:refined contribution from boundary}
\exp\left(\tr\left[\sum_{k\ge 1}\frac{1}{k}\left(\frac{H\otimes\id_N+\id_M\otimes Z}{2}G(\Lambda,\oZ^t)\right)^k-\sum_{k\ge 1}\frac{1}{k}\left(\frac{H\otimes\id_N-\id_M\otimes Z}{2}G(\Lambda,\oZ^t)\right)^k\right]\right),
\end{gather}
where $\id_M$ and $\id_N$ are the identity matrices in the spaces~$\mcH_M$ and~$\Mat_{N,N}(\mbC)$, respectively, and
$$
G(\Lambda,\oZ^t):=\sum_{m\ge 0}\frac{2^{-2m}}{m+1}{2m\choose m}\Lambda^{-2m-1}\otimes(\oZ^t)^m=\frac{2}{\Lambda\otimes\id_N+\sqrt{\Lambda^2\otimes\id_N-\id_M\otimes\oZ^t}}.
$$
We see that the expression~\eqref{expr:refined contribution from boundary} is equal to
$$
\det\frac{\Lambda\otimes\id_N+\sqrt{\Lambda^2\otimes\id_N-\id_M\otimes\oZ^t}-H\otimes\id_N+\id_M\otimes Z}{\Lambda\otimes\id_N+\sqrt{\Lambda^2\otimes\id_N-\id_M\otimes\oZ^t}-H\otimes\id_N-\id_M\otimes Z}.
$$
Finally, the trivalent stars give the contribution $e^{\frac{1}{6}\tr H^3}$ in the matrix model~\eqref{eq:very refined matrix model}, the ghost components with external ribbon half-edges give~$e^{\frac{1}{6}\tr Z^3}$ and the external ribbon half-edges corresponding to marked points give~$e^{\frac{1}{2}\tr\oZ^t\Theta}$. The theorem is proved.
\end{proof}

Using this theorem, we can obtain a matrix model for the refined open partition function~$\tau^o_N$ in the following way:
\begin{align}
&\left.\tau^o_N\right|_{t_i=t_i(\Lambda)}=\left.\left(\left.\ttau^o\right|_{\substack{t_i=t_i(\Lambda)\\q_i=q_i(\Theta)}}\right)\right|_{\Theta=s\,\id_N}=\label{eq:refined matrix model}\\
=&\frac{c_{\Lambda,M}}{(2\pi)^{N^2}}\int_{\mcH_M\times\Mat_{N,N}(\mbC)}e^{-\frac{1}{2}\tr H^2\Lambda-\frac{1}{2}\tr Z\oZ^t}e^{\frac{1}{6}\tr H^3+\frac{1}{6}\tr Z^3+\frac{s}{2}\tr\oZ^t}\times\notag\\
&\hspace{2cm}\times\det\frac{\Lambda\otimes\id_N+\sqrt{\Lambda^2\otimes\id_N-\id_M\otimes\oZ^t}-H\otimes\id_N+\id_M\otimes Z}{\Lambda\otimes\id_N+\sqrt{\Lambda^2\otimes\id_N-\id_M\otimes\oZ^t}-H\otimes\id_N-\id_M\otimes Z}dHdZ.\notag
\end{align}

\subsection{Extended refined open partition function}\label{subsection:extended refined open partition function}

The extended open partition function $\tau^{o,ext}\in\mbQ[[t_0,t_1,\ldots,s_0,s_1,\ldots]]$, introduced in~\cite{Bur15,Bur16}, is uniquely determined by the following equations:
\begin{align}
&\left.\tau^{o,ext}\right|_{s_{\ge 1}=0}=\tau^o,\label{eq:definition of extended,1}\\
&\frac{\d}{\d s_n}\tau^{o,ext}=\frac{1}{(n+1)!}\frac{\d^{n+1}}{\d s^{n+1}}\tau^{o,ext},\qquad n\ge 0.\label{eq:definition of extended,2}
\end{align}
Note that equation~(\ref{eq:refined matrix model}) gives a formula for $\tau^o$ that is slightly different to the initial formula~\eqref{eq:matrix model for open},
\begin{gather}
\label{eq:matrix model for open,2}
\left.\tau^{o}\right|_{t_i=t_i(\Lambda)}=\frac{c_{\Lambda,M}}{2\pi}\int_{\mcH_M\times\mbC}e^{-\frac{1}{2}\tr H^2\Lambda-\frac{1}{2}z\oz}e^{\frac{1}{6}\tr H^3+\frac{z^3}{6}+\frac{1}{2}s\oz}\det\frac{\Lambda+\sqrt{\Lambda^2-\oz}-H+z}{\Lambda+\sqrt{\Lambda^2-\oz}-H-z}dHd^2z,
\end{gather}
where $d^2z:=d(\Re z)d(\Im z)$. Formulas~\eqref{eq:definition of extended,1} and~\eqref{eq:definition of extended,2} imply that
\begin{align*}
&\left.\tau^{o,ext}\right|_{t_i=t_i(\Lambda)}=\\
=&\frac{c_{\Lambda,M}}{2\pi}\int_{\mcH_M\times\mbC}e^{-\frac{1}{2}\tr H^2\Lambda-\frac{z\oz}{2}}e^{\frac{1}{6}\tr H^3+\frac{z^3}{6}}\det\frac{\Lambda+\sqrt{\Lambda^2-\oz}-H+z}{\Lambda+\sqrt{\Lambda^2-\oz}-H-z}e^{\sum_{i\ge 0}\frac{2^{-i-1}}{(i+1)!}s_i\oz^{i+1}}dHd^2z.
\end{align*}
Let
$$
s_i(\Lambda):=2^i i!\tr\Lambda^{-2i-2},\quad i\ge 0.
$$
It is easy to see that
\begin{align*}
\left.e^{\sum_{i\ge 0}\frac{2^{-i-1}}{(i+1)!}s_i\oz^{i+1}}\right|_{s_i=s_i(\Lambda)}=&e^{\frac{1}{2}\sum_{i\ge 0}\frac{\oz^{i+1}}{(i+1)}\tr\Lambda^{-2i-2}}=e^{-\frac{1}{2}\tr\log(1-\oz\Lambda^{-2})}=\\
=&\det\frac{1}{\sqrt{1-\oz\Lambda^{-2}}}=\frac{\det\Lambda}{\det\sqrt{\Lambda^2-\oz}}.
\end{align*}
So, we get
\begin{align}\label{eq:extended matrix}
&\left.\tau^{o,ext}\right|_{\substack{t_i=t_i(\Lambda)\\s_i=s_i(\Lambda)}}=\\
=&\frac{c_{\Lambda,M}}{2\pi}\int_{\mcH_M\times\mbC}e^{-\frac{1}{2}\tr H^2\Lambda-\frac{z\oz}{2}}e^{\frac{1}{6}\tr H^3+\frac{z^3}{6}}\det\frac{\Lambda+\sqrt{\Lambda^2-\oz}-H+z}{\Lambda+\sqrt{\Lambda^2-\oz}-H-z}\frac{\det\Lambda}{\det\sqrt{\Lambda^2-\oz}}dHd^2z.\notag
\end{align}
This formula together with equation~\eqref{eq:refined matrix model} motivates us to introduce a formal power series $\tau^{o,ext}_N\in\mbC[[t_0,t_1,\ldots,s_0,s_1,\ldots]]$ by
\begin{align}\label{eq:full refined matrix model}
&\left.\tau^{o,ext}_N\right|_{\substack{t_i=t_i(\Lambda)\\s_i=s_i(\Lambda)}}=\frac{c_{\Lambda,M}}{(2\pi)^{N^2}}\int_{\mcH_M\times\Mat_{N,N}(\mbC)}e^{-\frac{1}{2}\tr H^2\Lambda-\frac{1}{2}\tr Z\oZ^t}e^{\frac{1}{6}\tr H^3+\frac{1}{6}\tr Z^3}\times\\
&\times\det\frac{\Lambda\otimes\id_N+\sqrt{\Lambda^2\otimes\id_N-\id_M\otimes\oZ^t}-H\otimes\id_N+\id_M\otimes Z}{\Lambda\otimes\id_N+\sqrt{\Lambda^2\otimes\id_N-\id_M\otimes\oZ^t}-H\otimes\id_N-\id_M\otimes Z}\frac{\det\Lambda^N dHdZ}{\det\sqrt{\Lambda^2\otimes\id_N-\id_M\otimes\oZ^t}}.\notag
\end{align}
The uniqueness of a power series with this property is obvious. However, the existence of such a series is not trivial. In order to prove it we will define a formal power series $\tau^{o,ext}_N$ using the function $\ttau^o$ and then prove that it satisfies equation~\eqref{eq:full refined matrix model}.

For a given $N\geq 1$ let us define a formal power series $\tau^{o,ext}_N\in\mbQ[[t_0,t_1,\ldots,s_0,s_1,\ldots]]$ by
\begin{gather}\label{eq:transformation}
\tau^{o,ext}_N(t_0,t_1,\ldots,s_0,s_1,\ldots):=\frac{1}{(2\pi)^{N^2}}\int_{\Mat_{N,N}(\mbC)}\left.\ttau^o\right|_{q_i=q_i(Z)}e^{\sum_{i\ge 0}\frac{2^{-i-1}}{(i+1)!}s_i\tr(\oZ^t)^{i+1}}e^{-\frac{1}{2}\tr Z\oZ^t}dZ.
\end{gather}
\begin{lemma}
The function $\tau^{o,ext}_N$ satisfies equation~\eqref{eq:full refined matrix model}.
\end{lemma}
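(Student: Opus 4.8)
The plan is to start from the definition \eqref{eq:transformation}, specialize $t_i=t_i(\Lambda)$, insert the very refined matrix model of Theorem~\ref{theorem:very refined matrix model}, and then carry out the outer Gaussian integral explicitly. First I would set $t_i=t_i(\Lambda)$ in \eqref{eq:transformation} and substitute the formula \eqref{eq:very refined matrix model} for $\left.\ttau^o\right|_{t_i=t_i(\Lambda),\,q_i=q_i(Z)}$, taking $\Theta=Z$ equal to the outer integration variable of \eqref{eq:transformation}. Since the Hermitian/complex integration variables of the Theorem would then clash with the outer one, I would rename them to $H,Y$. The only place $\Theta=Z$ enters \eqref{eq:very refined matrix model} is through the factor $e^{\frac12\tr\overline{Y}^tZ}$, while the ratio of determinants and the remaining exponentials depend only on the inner variables $H,Y$. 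Thus after substitution we obtain a double formal matrix integral, over $\mcH_M\times\Mat_{N,N}(\mbC)$ in $(H,Y)$ and over $\Mat_{N,N}(\mbC)$ in $Z$, in which the outer variable $Z$ appears only through the coupling $e^{\frac12\tr\overline{Y}^tZ}$, through the Gaussian weight $e^{-\frac12\tr Z\oZ^t}$, and through the $s_i$-exponential of \eqref{eq:transformation}.

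Next I would evaluate the $s_i$-dependent exponential at $s_i=s_i(\Lambda)$. This is exactly the computation already carried out in the scalar case before \eqref{eq:extended matrix}, now performed with $\oz$ replaced by the matrix $\oZ^t$: using $s_i(\Lambda)=2^i i!\tr\Lambda^{-2i-2}$ and $\tr A\,\tr B=\tr(A\otimes B)$ one rewrites the exponent as $-\tfrac12\tr\log(\id_M\otimes\id_N-\Lambda^{-2}\otimes\oZ^t)$, whence
\begin{gather*}
\left.e^{\sum_{i\ge0}\frac{2^{-i-1}}{(i+1)!}s_i\tr(\oZ^t)^{i+1}}\right|_{s_i=s_i(\Lambda)}=\frac{\det\Lambda^N}{\det\sqrt{\Lambda^2\otimes\id_N-\id_M\otimes\oZ^t}},
\end{gather*}
the factor $\det\Lambda^N$ arising from $\det(\Lambda^2\otimes\id_N)^{1/2}=(\det\Lambda)^N$. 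This already matches the corresponding factor in \eqref{eq:full refined matrix model}, but written in the outer variable $Z$ rather than in $Y$.

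The core of the argument is then to interchange the two formal integrations and perform the outer $Z$ integral. Since the inner body is independent of $Z$, I would isolate
\begin{gather*}
\frac{1}{(2\pi)^{N^2}}\int_{\Mat_{N,N}(\mbC)}e^{\frac12\tr\overline{Y}^tZ}\,\frac{\det\Lambda^N}{\det\sqrt{\Lambda^2\otimes\id_N-\id_M\otimes\oZ^t}}\,e^{-\frac12\tr Z\oZ^t}dZ.
\end{gather*}
Here the source $e^{\frac12\tr\overline{Y}^tZ}$ is holomorphic in $Z$ while the determinant factor is a formal power series in the entries of $\oZ^t$, so the integral is well-defined term by term. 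Using \eqref{eq:formula1 for complex} and \eqref{eq:formula2 for complex} one checks entrywise the elementary shift identity $\langle e^{\frac12\tr\overline{Y}^tZ}f(\oZ^t)\rangle=f(\overline{Y}^t)$ for the Gaussian measure on $\Mat_{N,N}(\mbC)$; that is, the source replaces $\oZ^t$ by $\overline{Y}^t$. Hence the displayed integral equals $\det\Lambda^N/\det\sqrt{\Lambda^2\otimes\id_N-\id_M\otimes\overline{Y}^t}$, and substituting this back produces exactly the right-hand side of \eqref{eq:full refined matrix model}, with $Y$ playing the role of the integration variable there.

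The step I expect to be the main obstacle is the justification of this outer Gaussian integration: making precise the shift identity $\langle e^{\frac12\tr\overline{Y}^tZ}f(\oZ^t)\rangle=f(\overline{Y}^t)$ in the matrix setting, verifying it entrywise from \eqref{eq:formula2 for complex} with $f$ the formal series $1/\det\sqrt{\Lambda^2\otimes\id_N-\id_M\otimes\oZ^t}$, and confirming that the term-by-term integration together with the Fubini-type interchange of the inner and outer formal integrals is legitimate. The subtlety is that the coefficient $\tfrac12$ in the coupling is exactly what makes each entry $\oz_{kl}$ get replaced by the full $\overline{Y}_{kl}$ rather than a rescaled version. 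Once this shift property is in place, the remaining bookkeeping—tracking the constants $c_{\Lambda,M}$, $(2\pi)^{-N^2}$, and the $\det\Lambda^N$ factor—is routine.
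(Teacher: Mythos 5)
Your proposal is correct and takes essentially the same route as the paper: the paper likewise rewrites the $s_i$-exponential at $s_i=s_i(\Lambda)$ via the identity \eqref{eq:note} and then combines Theorem~\ref{theorem:very refined matrix model} with the elementary Gaussian formula $\frac{1}{(2\pi)^{N^2}}\int_{\Mat_{N,N}(\mbC)}Q(\Theta)e^{\frac{1}{2}\tr\oZ^t\Theta}e^{-\frac{1}{2}\tr\Theta\overline{\Theta}^t}d\Theta=Q(Z)$, which is precisely your shift identity with the coefficient $\tfrac12$ compensating the covariance $\<z_{i,j},\oz_{k,l}\>=2\delta_{i,k}\delta_{j,l}$. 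Your additional care about renaming the inner variables, termwise integration of the determinant series, and the Fubini-type interchange only makes explicit what the paper leaves implicit.
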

\begin{proof}
Note that
\begin{gather}\label{eq:note}
\frac{\det\Lambda^N}{\det\sqrt{\Lambda^2\otimes\id_N-\id_M\otimes\oZ^t}}=\frac{1}{\det\sqrt{\id_M\otimes\id_N-\Lambda^{-2}\otimes\oZ^t}}=e^{\sum_{i\ge 0}\frac{2^{-i-1}}{(i+1)!}s_i(\Lambda)\tr(\oZ^t)^{i+1}}.
\end{gather}
Then the lemma follows from Theorem~\ref{theorem:very refined matrix model} and the elementary formula:
$$
\frac{1}{(2\pi)^{N^2}}\int_{\Mat_{N,N}(\mbC)}Q(\Theta)e^{\frac{1}{2}\tr\oZ^t\Theta}e^{-\frac{1}{2}\tr\Theta\overline{\Theta}^t}d\Theta=Q(Z),
$$
where $Q(\Theta)\in\mbC[\overline{\theta}_{i,j}]$ is an arbitrary polynomial.
\end{proof}

For a finite value of $N$ the transform, defined by the right hand side of (\ref{eq:transformation}) is not invertible. However, if we know $\tau^{o,ext}_N$ for all $N\geq 1$, we can find $\ttau^o$. Let us consider the space ${\mathcal U}_N$ of unitary $N\times N$ matrices. Then we introduce the volume form on ${\mathcal U}_N$, which is proportional to the Haar measure and normalized by
$$
\int_{{\mathcal U}_N} dU=1.
$$
Let $p_1,p_2,\ldots$ and $p_1',p_2',\ldots$ be formal variables.
\begin{lemma}
If
$$
f_N(p'_1,p'_2,\dots):=\frac{1}{(2\pi)^{N^2}}\int_{\Mat_{N,N}(\mbC)}\left.g \right|_{p_i=\frac{1}{i}\tr Z^i}e^{\sum_{i\ge 1}2^{-i} p'_i\tr(\oZ^t)^i}e^{-\frac{1}{2}\tr Z\oZ^t}dZ
$$
for some $g\in\mathbb{C}[[p_1,p_2,\dots]]$, then
$$
\left.g\right|_{p_i=\frac{1}{i}\tr A^i}=\int_{{\mathcal U}_N}  \left.f_N\right|_{p'_i=\frac{1}{i}\tr U^i}e^{\tr {\overline U}^tA}dU,\quad A\in\Mat_{N,N}(\mbC).
$$
\end{lemma}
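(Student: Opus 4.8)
The plan is to diagonalize both integral transforms in the basis of Schur functions and then check that the two resulting normalization constants are reciprocal. Write $F(Z):=\left.g\right|_{p_i=\frac1i\tr Z^i}$, a holomorphic function of $Z$ depending only on the traces $\tr Z^i$. I would observe first that both sides of the asserted equality depend on $g$ only through $F$: the left-hand side is literally $F(A)$, while $f_N=\langle F(Z)\,e^{\sum_i 2^{-i}p'_i\tr(\oZ^t)^i}\rangle$ is built from $F$ alone. Since the functions $s_\lambda(Z)$ with $\ell(\lambda)\le N$ form, in each degree, a basis of the symmetric (equivalently, conjugation-invariant holomorphic) functions of an $N\times N$ matrix, and since the identity is $\mathbb{C}$-linear and finite in each degree in $F$, it suffices to prove it for $F=s_\lambda$, a single Schur function. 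For such $F$ the left-hand side equals $s_\lambda(A)$, and the whole problem reduces to evaluating the right-hand side.

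To evaluate the right-hand side I would first substitute $p'_i=\frac1i\tr U^i$ into the kernel and apply the Cauchy identity:
$$
\exp\Big(\sum_{i\ge1}2^{-i}\tfrac1i(\tr U^i)\,\tr\!\big((\oZ^t)^i\big)\Big)=\sum_\mu s_\mu(U)\,s_\mu(\oZ^t/2)=\sum_\mu 2^{-|\mu|}s_\mu(U)\,\overline{s_\mu(Z)},
$$
where I used $s_\mu(\oZ^t)=\overline{s_\mu(Z)}$ and $s_\mu(\oZ^t/2)=2^{-|\mu|}s_\mu(\oZ^t)$. The Gaussian integration over $Z$ is then governed by the complex Gaussian two-point function of Schur functions, which for the measure with $\langle z_{ij}\oz_{kl}\rangle=2\delta_{ik}\delta_{jl}$ is diagonal,
$$
\langle s_\lambda(Z)\,\overline{s_\mu(Z)}\rangle=\delta_{\lambda\mu}\,2^{|\lambda|}\,H_\lambda\,s_\lambda(1^N),\qquad H_\lambda:=\prod_{\square\in\lambda}h(\square),
$$
with $s_\lambda(1^N)$ the principal specialization (the dimension of the $GL_N$-irreducible $V_\lambda$). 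This collapses the sum over $\mu$ to $\mu=\lambda$ and yields $\left.f_N\right|_{p'_i=\frac1i\tr U^i}=H_\lambda\,s_\lambda(1^N)\,s_\lambda(U)$, the factors $2^{\pm|\lambda|}$ cancelling.

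It then remains to integrate over the unitary group. Here I would use the companion evaluation
$$
\int_{\mathcal{U}_N}s_\lambda(U)\,e^{\tr\oU^t A}\,dU=\frac{s_\lambda(A)}{H_\lambda\,s_\lambda(1^N)},
$$
which follows from Schur orthogonality of matrix coefficients, only the degree-$|\lambda|$ part of the exponential surviving. Multiplying by the constant $H_\lambda\,s_\lambda(1^N)$ found above, the right-hand side becomes exactly $s_\lambda(A)=F(A)$, which proves the reduced case and hence, by linearity, the lemma. All interchanges of the $Z$-Gaussian integral, the sum over $\mu$, and the $U$-integral are justified degree by degree, where every sum is finite and only partitions with $\ell(\lambda)\le N$ occur.

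The main obstacle is the pair of normalization constants and, above all, their reciprocity. Writing $d=|\lambda|$, the Schur two-point function over the complex Gaussian can be obtained either by Wick's theorem together with the Frobenius formula $s_\lambda=\frac{1}{d!}\sum_{\sigma\in S_d}\chi^\lambda(\sigma)\,p_{\mathrm{type}(\sigma)}$ and the class-function identity $\sum_{\pi\in S_d}\chi^\nu(\pi^{-1}\tau^{-1}\pi\sigma)=\frac{d!}{f^\nu}\chi^\nu(\sigma)\chi^\nu(\tau)$, or directly through Schur--Weyl duality; the unitary integral is likewise a matrix-coefficient computation. The nontrivial point is that these two independently computed constants are exact reciprocals (up to the power of $2$), so that $H_\lambda\,s_\lambda(1^N)$ cancels identically; conceptually this expresses that the Gaussian transform and the unitary integral are mutually adjoint with respect to the Schur basis, which is precisely why the finite-$N$ transform, although non-invertible on a single $f_N$, is inverted by the whole family indexed by $U\in\mathcal{U}_N$. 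One should also record that the powers of $2$ cancel between the variance-$2$ Gaussian and the weights $2^{-i}$ built into the definition of $f_N$.
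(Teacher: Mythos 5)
Your proof is correct and takes essentially the same route as the paper's: reduce by linearity to $g=s_\lambda$ with $\ell(\lambda)\le N$, evaluate the Gaussian transform via the Cauchy identity to obtain $C_N(\lambda)\,s_\lambda(U)$, and then evaluate the unitary integral to obtain $s_\lambda(A)/C_N(\lambda)$, where your constant $H_\lambda\,s_\lambda(1^N)=\prod_{\square\in\lambda}(N+c(\square))$ is exactly the paper's $C_N(\lambda)=\prod_{i=1}^N\frac{(\lambda_i+N-i)!}{(N-i)!}$. The only difference is presentational: the paper quotes the two Schur-function pairing formulas from \cite{Ale11} (its equations (39) and (31)), whereas you sketch self-contained derivations via Wick/Schur--Weyl and Schur orthogonality of matrix coefficients.
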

\begin{proof}
The Schur functions $s_\lambda(p_1,p_2,\dots)$, labeled by partitions $\lambda=\{\lambda_1\geq\lambda_2\geq\lambda_3\geq\dots\}$, constitute a basis in the space of formal series in the variables $p_1,p_2,\dots$. Recall that they can be defined by
$$
s_\lambda:=\det(h_{\lambda_i-i+j})_{1\le i,j\le l(\lambda)},
$$
where the polynomials $h_k(p_1,p_2,\ldots)$, $k\in\mbZ$, are defined by
$$
\sum_{i\ge 0}h_i z^i=e^{\sum_{i\ge 1}p_i z^i}
$$
for $k\ge 0$ and by $h_k:=0$ for $k<0$. Thus, it is enough to prove the lemma for $g=s_\lambda$, where $l(\lambda)\le N$. (If $l(\lambda)> N$ then both $\left.g \right|_{p_i=\frac{1}{i}\tr Z^i}$ and $f_N$ are equal to zero.) For any matrix $A\in\Mat_{N,N}(\mbC)$ let
$$
p_i(A):=\frac{1}{i}\tr A^i,\quad i\ge 1.
$$
For any partition $\mu$ and matrices $A,B\in\Mat_{N,N}(\mbC)$ we have the following formula~\cite[eq.~(39)]{Ale11},
$$
\frac{1}{(2\pi)^{N^2}}\int_{\Mat_{N,N}(\mbC)}s_\lambda(p_*(ZA))s_\mu\left(p_*\left(\frac{1}{2}\oZ^tB\right)\right)e^{-\frac{1}{2}\tr Z\oZ^t}dZ=\frac{s_\lambda(p_*(AB))}{s_\lambda(1,0,0,\ldots)}\delta_{\lambda,\mu}.
$$
Then for any unitary matrix $U$ we can compute
\begin{align*}
&\frac{1}{(2\pi)^{N^2}}\int_{\Mat_{N,N}(\mbC)}s_\lambda(p_*(Z))e^{-\frac{1}{2}\tr Z\oZ^t+\sum_{i\ge 1} \frac{2^{-i}}{i}\tr U^i\tr(\oZ^t)^i}dZ=\\
=&\frac{1}{(2\pi)^{N^2}}\int_{\Mat_{N,N}(\mbC)}s_\lambda(p_*(Z))\sum_\mu s_\mu(p_*(U))s_\mu\left(p_*\left(1/2\oZ^t\right)\right)e^{-\frac{1}{2}\tr Z\oZ^t}dZ=\\
=&\frac{s_\lambda(p_*(\id_N))}{s_\lambda(1,0,0,\ldots)}s_\lambda(p_*(U))\stackrel{\text{\cite[Sec. 1.1]{Ale11}}}{=}C_N(\lambda)s_\lambda(p_*(U)),
\end{align*}
where
$$
C_N(\lambda)=
\prod_{i=1}^N\frac{(\lambda_i+N-i)!}{(N-i)!}.
$$
On the other hand, for any partition $\mu$ and matrices $A,B\in\Mat_{N,N}(\mbC)$ we have~\cite[eq.~(31)]{Ale11}
$$
\int_{\mathcal U_N}s_\lambda(p_*(UA))s_\mu(p_*(\oU^tB))dU=\frac{s_\lambda(p_*(AB))}{s_\lambda(p_*(\id_N))}\delta_{\lambda,\mu}.
$$
Therefore, we obtain
\begin{align*}
&\int_{{\mathcal U}_N}s_\lambda(p_*(U))e^{\tr\oU^t A}dU=\int_{{\mathcal U}_N}s_\lambda(p_*(U))\sum_{k\ge 0}\frac{p_1(\oU^t A)^k}{k!}dU=\\
=&\int_{{\mathcal U}_N}s_\lambda(p_*(U))\sum_\mu s_\mu(1,0,0,\ldots)s_\mu(p_*(\oU^tA))dU=\frac{1}{C_N(\lambda)}s_\lambda(p_*(A)).
\end{align*}
This completes the proof of the lemma.
\end{proof}

In particular, we have
$$
\left.\ttau^o\right|_{q_i=q_i(A)}=\int_{{\mathcal U}_N} \left.\tau^{o,ext}_N\right|_{s_i=i! \tr U^{i+1}}e^{\tr {\overline U}^tA}dU.
$$

Equations~\eqref{eq:refined matrix model},~\eqref{eq:extended matrix},~\eqref{eq:full refined matrix model} and~\eqref{eq:note} imply that
$$
\left.\tau^{o,ext}_N\right|_{s_{\ge 1}=0}=\tau^o_N,\qquad \tau^{o,ext}_1=\tau^{o,ext}.
$$
We conjecture that there exists a geometric construction of boundary descendents in the refined open intersection theory giving the extended refined open partition function $\tau^{o,ext}_N$.

The function
$$
F^{o,ext,N}:=\log\tau^{o,ext}_N-F^c
$$
will be called the \emph{extended refined open free energy}.

\subsection{Feynman diagram expansion of the extended matrix model}\label{subsection:Feynman for extended}

Introduce the \emph{extended refined open intersection numbers} by
\begin{gather}\label{eq:extended refined numbers}
\<\tau_{a_1}\cdots\tau_{a_l}\sigma_{c_1}\cdots\sigma_{c_k}\>^{o,ext,N}:=\left.\frac{\d^{l+k}F^{o,ext,N}}{\d t_{a_1}\cdots\d t_{a_l}\d s_{c_1}\cdots\d s_{c_k}}\right|_{t_*=s_*=0}.
\end{gather}
From~\eqref{eq:transformation} it follows that the intersection number $\<\tau_{a_1}\cdots\tau_{a_l}\sigma_{c_1}\cdots\sigma_{c_k}\>^{o,ext,N}$ is actually a polynomial in $N$ with rational coefficients. So, it is well-defined for all values of~$N$, not necessarily positive integers. Therefore, the extended refined open partition function~$\tau^{o,ext}_N$ is also well-defined for all values of~$N$. We want to write a combinatorial formula for the extended refined open intersection numbers similar to~\eqref{eq:refined combinatorial formula}. Let us write the matrix model~\eqref{eq:full refined matrix model} for~$\tau^{o,ext}_N$ in the following way:
\begin{align}\label{eq:full refined matrix model,2}
&\left.\tau^{o,ext}_N\right|_{t_i=t_i(\Lambda)}=\frac{c_{\Lambda,M}}{(2\pi)^{N^2}}\int_{\mcH_M\times\Mat_{N,N}(\mbC)}e^{-\frac{1}{2}\tr H^2\Lambda-\frac{1}{2}\tr Z\oZ^t}e^{\frac{1}{6}\tr H^3+\frac{1}{6}\tr Z^3}\times\\
&\times\det\frac{\Lambda\otimes\id_N+\sqrt{\Lambda^2\otimes\id_N-\id_M\otimes\oZ^t}-H\otimes\id_N+\id_M\otimes Z}{\Lambda\otimes\id_N+\sqrt{\Lambda^2\otimes\id_N-\id_M\otimes\oZ^t}-H\otimes\id_N-\id_M\otimes Z}e^{\sum_{i\ge 0}\frac{2^{-i-1}}{(i+1)!}s_i\tr(\oZ^t)^{i+1}}dHdZ.\notag
\end{align}
We see that this matrix model is obtained from~\eqref{eq:refined matrix model} simply by adding the factor $e^{\sum_{i\ge 1}\frac{2^{-i-1}}{(i+1)!}s_i\tr(\oZ^t)^{i+1}}$ in the integrand. Doing the Feynman diagram expansion of~\eqref{eq:full refined matrix model,2} one can easily see that there is a combinatorial formula for the intersection numbers~\eqref{eq:extended refined numbers} similar to~\eqref{eq:refined combinatorial formula}, where we allow odd critical nodal ribbon graphs with boundary to have certain exceptional components. Let us formulate it precisely.

Recall that a $(g,k,l)$-ribbon graph with boundary is called critical, if
\begin{itemize}
\item Boundary marked points have degree $2$.
\item All other vertices have degree $3$.
\item If $l=0$, then $g=0$ and $k=3$.
\end{itemize}
We will call a $(g,k,l)$-ribbon graph with boundary \emph{exceptional}, if $g=l=0$ and $k\ge 1$. Obviously, for each $k\ge 1$ there exists a unique such graph up to an isomorphism, see Fig.~\ref{fig:exceptional component}.
\begin{figure}[t]
\begin{tikzpicture}[scale=0.7]

\draw [pattern=grid, pattern color=black] (0,0) circle (1.5);
\coordinate [label=center:\textbullet] (B) at (1.5/1.41,1.5/1.41);
\coordinate [label=center:\textbullet] (B) at (-1.5/1.41,1.5/1.41);
\coordinate [label=center:\textbullet] (B) at (-1.5/1.41,-1.5/1.41);
\coordinate [label=center:\textbullet] (B) at (1.5/1.41,-1.5/1.41);

\begin{scope}[shift={(9,0)}]
\draw [pattern=grid, pattern color=black] (0,0) circle (1.5);

\draw (1.05+0.16,-1.05+0.16) -- (1.05+1.5/1.41+0.16,-1.05-1.5/1.41+0.16);
\draw (1.05-0.16,-1.05-0.16) -- (1.05+1.5/1.41-0.16,-1.05-1.5/1.41-0.16);
\fill [pattern=dots, pattern color=black] (1.05+0.16,-1.05+0.16) -- (1.05+1.5/1.41+0.16,-1.05-1.5/1.41+0.16) -- (1.05+1.5/1.41-0.16,-1.05-1.5/1.41-0.16) --  (1.05-0.16,-1.05-0.16) -- cycle;
\draw [thick,dotted] (1.05+1.5/1.41+0.16,-1.05-1.5/1.41+0.16) -- (1.05+1.5/1.41-0.16,-1.05-1.5/1.41-0.16);
\draw (1.05+0.16,1.05-0.16) -- (1.05+1.5/1.41+0.16,1.05+1.5/1.41-0.16);
\draw (1.05-0.16,1.05+0.16) -- (1.05+1.5/1.41-0.16,1.05+1.5/1.41+0.16);
\fill [pattern=dots, pattern color=black] (1.05+0.16,1.05-0.16) -- (1.05+1.5/1.41+0.16,1.05+1.5/1.41-0.16) -- (1.05+1.5/1.41-0.16,1.05+1.5/1.41+0.16) --  (1.05-0.16,1.05+0.16) -- cycle;
\draw [thick,dotted] (1.05+1.5/1.41+0.16,1.05+1.5/1.41-0.16) -- (1.05+1.5/1.41-0.16,1.05+1.5/1.41+0.16);
\draw (-1.05-0.16,1.05-0.16) -- (-1.05-1.5/1.41-0.16,1.05+1.5/1.41-0.16);
\draw (-1.05+0.16,1.05+0.16) -- (-1.05-1.5/1.41+0.16,1.05+1.5/1.41+0.16);
\fill [pattern=dots, pattern color=black] (-1.05-0.16,1.05-0.16) -- (-1.05-1.5/1.41-0.16,1.05+1.5/1.41-0.16) -- (-1.05-1.5/1.41+0.16,1.05+1.5/1.41+0.16) --  (-1.05+0.16,1.05+0.16) -- cycle;
\draw [thick,dotted] (-1.05-1.5/1.41-0.16,1.05+1.5/1.41-0.16) -- (-1.05-1.5/1.41+0.16,1.05+1.5/1.41+0.16);
\draw (-1.05-0.16,-1.05+0.16) -- (-1.05-1.5/1.41-0.16,-1.05-1.5/1.41+0.16);
\draw (-1.05+0.16,-1.05-0.16) -- (-1.05-1.5/1.41+0.16,-1.05-1.5/1.41-0.16);
\fill [pattern=dots, pattern color=black] (-1.05-0.16,-1.05+0.16) -- (-1.05-1.5/1.41-0.16,-1.05-1.5/1.41+0.16) -- (-1.05-1.5/1.41+0.16,-1.05-1.5/1.41-0.16) --  (-1.05+0.16,-1.05-0.16) -- cycle;
\draw [thick,dotted] (-1.05-1.5/1.41-0.16,-1.05-1.5/1.41+0.16) -- (-1.05-1.5/1.41+0.16,-1.05-1.5/1.41-0.16);

\coordinate [label=center:$-$] (B) at (1.65,0.85);
\coordinate [label=center:$-$] (B) at (0.9,-1.7);
\coordinate [label=center:$-$] (B) at (-1.65,-0.85);
\coordinate [label=center:$-$] (B) at (-0.9,1.7);

\begin{scope}[shift={(0.6,0)}]
\draw [thick] (2.9,0) -- (4.4,0);
\draw [thick] (2.9,0.15) -- (2.9,-0.15);
\fill (4.15,0.15)--(4.4,0)--(4.15,-0.15)--cycle;
\coordinate [label=0: $\frac{s_3}{2^4 4!}\mathrm{tr}(\overline{Z}^t)^4$] (B) at (5.5,0);
\end{scope}

\end{scope}

\end{tikzpicture}
\caption{Exceptional graph and the corresponding diagram for the extended matrix model}
\label{fig:exceptional component}
\end{figure}
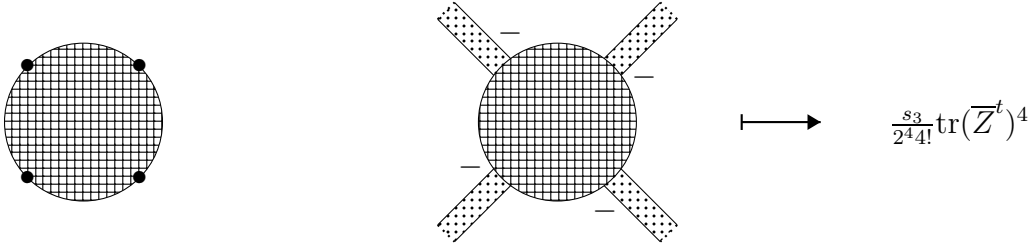
Note, that we step back a little bit from the original definition of a ribbon graph with boundary, because exceptional graphs with $k=1$ or $k=2$ are strictly speaking not stable. Note also that a critical $(0,3,0)$-ribbon graph with boundary, that we call a ghost, coincides with an exceptional graph with $k=3$. However, our idea is to distinguish them. Speaking formally, to a $(0,3,0)$-ribbon graph with boundary we additionally assign a type: it can be a ghost or an exceptional graph. Using this terminology, the set of critical ribbon graphs with boundary does not intersect the set of exceptional graphs.

A nodal ribbon graph with boundary~$G=\left(\coprod_i G_i\right)/N$ will be called \emph{extended critical}, if
\begin{itemize}
\item It does not have boundary marked points.
\item All of its components~$G_i$ are critical or exceptional.
\item Ghost components do not contain the illegal sides of nodes.
\item Exceptional components do not contain the legal sides of nodes.
\end{itemize}
The fact, that we do not allow boundary marked points now, may look surprising, but one can note that an exceptional component with $k=1$ can be easily interpreted as a boundary marked point. An extended critical nodal ribbon graph with boundary~$G=\left(\coprod_i G_i\right)/N$ is called odd if any boundary component of each non-exceptional~$G_i$ has an odd number of the legal sides of nodes. Denote by $\tcR^{ext}_{l}$ the set of odd extended critical nodal ribbon graphs with boundary with~$l$ internal faces. For a graph $G=\left(\coprod_i G_i\right)/N\in\tcR^{ext}_{l}$ introduce the following notations. Denote by $b(G)$ the number of boundary components in a smoothing of the nodal surface associated with~$G$. Let $c(G):=\prod_i c(G_i)$, where $c(G_i)$ is defined by~\eqref{eq:definition of constant} if $G_i$ is non-exceptional and
$$
c(G_i):=\frac{1}{m!}, \quad\text{if $G_i$ is an exceptional graph with $m+1$ boundary vertices, $m\ge 0$}.
$$
For $m\ge 0$ denote by $\exc_m(G)$ the number of exceptional components $G_i$ with exactly $m+1$ boundary vertices. The set of edges $\Edges(G)$ is composed of the internal edges of the $G_i$'s and of the boundary edges. The boundary edges are the boundary segments in non-exceptional~$G_i$'s between successive legal sides of nodes. For an edge $e\in\Edges(G)$ the function $\lambda(e)$ is defined by the old formula~\eqref{eq:definition of lambda}. The Feynman diagram expansion of the matrix model~\eqref{eq:full refined matrix model,2} gives the following formula for the intersection numbers~\eqref{eq:extended refined numbers}:
\begin{multline}\label{eq:full combinatorial formula}
\sum_{a_1,\ldots,a_l\ge 0}\sum_{m\ge 0}\sum_{c_1,\ldots,c_m\ge 0}\<\tau_{a_1}\cdots\tau_{a_l}\sigma_{c_1}\cdots\sigma_{c_m}\>^{o,ext,N}\prod_{i=1}^l\frac{(2a_i-1)!!}{\lambda_i^{2a_i+1}}\frac{\prod_{j=1}^m s_{c_j}}{m!}=\\
=\sum_{G=\left(\coprod_i G_i\right)/N\in\tcR^{ext}_l}\frac{c(G)}{|\Aut(G)|}N^{b(G)}\prod_{e\in\Edges(G)}\lambda(e)\prod_{m\ge 0}s_m^{\exc_m(G)}.
\end{multline}

\subsection{String equation}\label{subsection:string equation}

\begin{proposition}\label{proposition:string equation}
We have the string equation
\begin{gather}\label{eq:string equation}
\left(\frac{\d}{\d t_0}-\sum_{i\ge 0} t_{i+1}\frac{\d}{\d t_i}-\sum_{i\ge 0}s_{i+1}\frac{\d}{\d s_i}-\frac{t_0^2}{2}-N s_0\right)\tau^{o,ext}_N=0.
\end{gather}
\end{proposition}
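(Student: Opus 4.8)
The plan is to verify \eqref{eq:string equation} after the substitution $t_i=t_i(\Lambda)$, working from the matrix model \eqref{eq:full refined matrix model,2}. Since the power sums $t_i(\Lambda)=(2i-1)!!\tr\Lambda^{-2i-1}$ are algebraically independent once $M$ is large, an identity among formal series in $t_0,t_1,\ldots$ holds if and only if it holds after this substitution for all $M$; the variables $s_0,s_1,\ldots$ stay free throughout. So it suffices to realize each term of the operator in \eqref{eq:string equation} as a concrete operation on the integral and to check that they cancel up to the inhomogeneous terms $\tfrac{t_0^2}{2}$ and $N s_0$.

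First I would translate the two ``dilation'' parts. A direct chain-rule computation gives $\big(\sum_{j=1}^M\lambda_j^{-1}\tfrac{\d}{\d\lambda_j}\big)t_i(\Lambda)=-t_{i+1}(\Lambda)$ for every $i\ge 0$, so on functions of $t_*(\Lambda)$ the operator $-\sum_{i\ge 0}t_{i+1}\tfrac{\d}{\d t_i}$ is realized by the vector field $\mathcal D_\Lambda:=\sum_j\lambda_j^{-1}\tfrac{\d}{\d\lambda_j}$, which acts on the integral simply by differentiating $c_{\Lambda,M}$, the Gaussian weight $e^{-\frac12\tr H^2\Lambda}$, and the square-root/determinant factors. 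The part $-\sum_{i\ge 0}s_{i+1}\tfrac{\d}{\d s_i}$ acts directly, since the $s_i$ enter only through $\exp\big(\sum_i\tfrac{2^{-i-1}}{(i+1)!}s_i\tr(\oZ^t)^{i+1}\big)$, and it inserts the factor $-\sum_i s_{i+1}\tfrac{2^{-i-1}}{(i+1)!}\tr(\oZ^t)^{i+1}$ into the integrand.

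The term $\tfrac{\d}{\d t_0}$ is the crux, because varying the $\lambda_j$ changes all of the $t_i(\Lambda)$ simultaneously and cannot isolate $t_0$. To realize it I would enlarge $\Lambda$ to $\Lambda\oplus(\mu)$ of size $M+1$: then $t_0(\Lambda\oplus\mu)=t_0(\Lambda)+\mu^{-1}$ while $t_i(\Lambda\oplus\mu)=t_i(\Lambda)+O(\mu^{-3})$ for $i\ge 1$, so $\tfrac{\d}{\d t_0}\tau^{o,ext}_N\big|_{t=t(\Lambda)}$ is exactly the coefficient of $\mu^{-1}$ in the large-$\mu$ expansion of the $(M+1)$-dimensional integral (here I use that \eqref{eq:full refined matrix model,2} holds for every matrix size). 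Computing this asymptotics together with the accompanying change of $c_{\Lambda,M+1}$, and combining with $\mathcal D_\Lambda$ and the $s$-insertion, I would verify that everything cancels except the inhomogeneous terms: $\tfrac{t_0^2}{2}$ should emerge from the $H$-sector exactly as in the closed string equation for the Kontsevich integral, while $N s_0$ should come from the factor $\det\Lambda^N/\det\sqrt{\Lambda^2\otimes\id_N-\id_M\otimes\oZ^t}$, whose leading $\mu\to\infty$ behavior produces one power of $N$ paired with $\tfrac12\tr\oZ^t$, i.e.\ the $s_0$-coupling.

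The main obstacle is precisely the realization and evaluation of $\tfrac{\d}{\d t_0}$: controlling the $\mu\to\infty$ asymptotics of the whole integrand, in particular the determinant ratio with the square root and its coupling between $H$ and $Z$, and extracting $\tfrac{t_0^2}{2}$ and $N s_0$ with the correct constants. An alternative, avoiding the analytic asymptotics, is a purely combinatorial argument from \eqref{eq:full combinatorial formula}: the insertion $\tau_0$ corresponds to an internal face of perimeter one, and removing its single bounding edge gives the recursion lowering one index $a_i\mapsto a_i-1$ or $c_j\mapsto c_j-1$, while $N s_0$ is the contribution of the single minimal graph (one boundary component, so $N^{b(G)}=N$) to $\langle\tau_0\sigma_0\rangle^{o,ext,N}$; there the delicate point is tracking the constants $c(G)$ and the automorphism factors $|\Aut(G)|$ under the edge removal.
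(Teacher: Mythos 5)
Your proposal is correct and essentially reproduces the paper's proof: the paper likewise realizes $-\sum_{i\ge 0}t_{i+1}\frac{\d}{\d t_i}$ as the vector field $\sum_j\lambda_j^{-1}\frac{\d}{\d\lambda_j}$ (whose action on $c_{\Lambda,M}$ produces the $\frac{t_0^2}{2}$ term), realizes the $s$-part as an insertion into the integrand, computes $\frac{\d}{\d t_0}$ by inserting a distinguished face $C_0$ labeled $\lambda_0$ and extracting the coefficient of $\lambda_0^{-1}$ --- your $\Lambda\oplus(\mu)$ enlargement is the analytic form of this same diagrammatic step, giving the insertion $\tr H+\tr Z$ with $\tr H$ from the internal-edge case and $\tr Z$ from the boundary-edge case --- and then checks the cancellation by integration by parts in $h_{i,i}$ and $\oz_{j,j}$ combined with the invariance $\bigl[\sum_i\lambda_i^{-1}(\frac{\d}{\d\lambda_i}+\frac{\d}{\d h_{i,i}})+2\sum_j\frac{\d}{\d\oz_{j,j}}\bigr]I_2=0$. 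One small correction to your bookkeeping: since you keep the $s_i$ free, the factor $\det\Lambda^N/\det\sqrt{\Lambda^2\otimes\id_N-\id_M\otimes\oZ^t}$ is absent (it is the specialization $s_i=s_i(\Lambda)$ of the exponential factor $I_3$), and the inhomogeneous term $Ns_0$ does not come from $\mu$-asymptotics of that determinant (which would enter only at order $\mu^{-2}$, matching the shift of $s_0(\Lambda)$) but arises algebraically as the $i=0$ term, $\tr(\oZ^t)^0=N$, when $-\sum_{i\ge 0}s_{i+1}\frac{\d}{\d s_i}-Ns_0$ is rewritten as the insertion $-\sum_{i\ge 0}\frac{2^{-i}}{i!}s_i\tr(\oZ^t)^i=-2\sum_j\frac{\d I_3}{\d\oz_{j,j}}$, exactly as in the paper.
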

\begin{proof}
We will use formula~\eqref{eq:full refined matrix model,2}. Denote
\begin{align*}
I_1:=&e^{\frac{1}{6}\tr H^3+\frac{1}{6}\tr Z^3-\frac{1}{2}\tr H^2\Lambda-\frac{1}{2}\tr Z\oZ^t},\\
I_2:=&\det\frac{\Lambda\otimes\id_N+\sqrt{\Lambda^2\otimes\id_N-\id_M\otimes\oZ^t}-H\otimes\id_N+\id_M\otimes Z}{\Lambda\otimes\id_N+\sqrt{\Lambda^2\otimes\id_N-\id_M\otimes\oZ^t}-H\otimes\id_N-\id_M\otimes Z},\\
I_3:=&e^{\sum_{i\ge 0}\frac{2^{-i-1}}{(i+1)!}s_i\tr(\oZ^t)^{i+1}},\\
Z_{M,N}:=&\left.\tau^{o,ext}_N\right|_{t_i=t_i(\Lambda)}=\frac{c_{\Lambda,M}}{(2\pi)^{N^2}}\int_{\mcH_M\times\Mat_{N,N}(\mbC)}I_1I_2I_3 dHdZ.
\end{align*}
Our approach is a modification of the diagrammatic method of E.~Witten (\cite{Wit92}) that he used for a proof of the Virasoro equations for the closed partition function~$\tau^c$. First of all, note that
\begin{align}
&\left.\left(-\sum_{i\ge 0} t_{i+1}\frac{\d}{\d t_i}\right)\tau^{o,ext}_N\right|_{t_j=t_j(\Lambda)}=\sum_{i=1}^M\frac{1}{\lambda_i}\frac{\d}{\d\lambda_i}Z_{M,N}=\label{comp1}\\
=&\frac{t_0(\Lambda)^2}{2}Z_{M,N}+\frac{c_{\Lambda,M}}{(2\pi)^{N^2}}\int_{\mcH_M\times\Mat_{N,N}(\mbC)}\left(-\frac{I_2}{2}\tr H^2\Lambda^{-1}+\sum_{i=1}^M\frac{1}{\lambda_i}\frac{\d I_2}{\d\lambda_i}\right)I_1I_3dHdZ\notag
\end{align}
and
\begin{align}
&\left.\left(-\sum_{i\ge 0}s_{i+1}\frac{\d}{\d s_i}-Ns_0\right)\tau^{o,ext}_N\right|_{t_j=t_j(\Lambda)}=\left(-\sum_{i\ge 0}s_{i+1}\frac{\d}{\d s_i}-N s_0\right)Z_{M,N}=\label{comp2}\\
=&\frac{c_{\Lambda,M}}{(2\pi)^{N^2}}\int_{\mcH_M\times\Mat_{N,N}(\mbC)}\left(-\sum_{i\ge 0}\frac{2^{-i}}{i!}s_i\tr(\oZ^t)^i\right)I_1I_2I_3 dHdZ=\notag\\
=&\frac{c_{\Lambda,M}}{(2\pi)^{N^2}}\int_{\mcH_M\times\Mat_{N,N}(\mbC)}I_1I_2(-2)\sum_{i=1}^N\frac{\d I_3}{\d\oz_{i,i}}dHdZ.\notag
\end{align}

The only non-trivial step in the proof is to express the derivative $\left.\frac{\d\tau^{o,ext}_N}{\d t_0}\right|_{t_i=t_i(\Lambda)}$, as a matrix integral. Let us prove that
\begin{gather}\label{eq:t0 derivative}
\left.\frac{\d\tau^{o,ext}_N}{\d t_0}\right|_{t_i=t_i(\Lambda)}=\frac{c_{\Lambda,M}}{(2\pi)^{N^2}}\int_{\mcH_M\times\Mat_{N,N}(\mbC)}\left(\tr H+\tr Z\right)I_1I_2I_3dHdZ.
\end{gather}
The $t_0$ derivative corresponds to an extra insertion of $\tau_0$ on the left-hand side of~\eqref{eq:extended refined numbers}. We want to consider the generating function from the left-hand side of~\eqref{eq:full combinatorial formula} with an extra insertion of $\tau_0$. In order to get it from the right-hand side of~\eqref{eq:full combinatorial formula}, we have to sum over graphs $G=\left(\coprod_iG_i\right)/N\in\tcR^{ext}_{l+1}$ with a distinguished face, which we call $C_0$, labeled with a variable~$\lambda_0$, then consider the behavior for $\lambda_0\to\infty$ and extract the coefficient of $\frac{1}{\lambda_0}$. The coefficient of $\frac{1}{\lambda_0}$ comes precisely from graphs, where the face $C_0$ has only one edge. The structure of the neighborhood of the distinguished face in such graphs is indicated in Fig.~\ref{fig:pic for string}.
\begin{figure}[t]
\begin{tikzpicture}[scale=0.7]

\draw (0,0) circle (1.8);
\draw (0,0) circle (1.2);
\fill [color=white] (-1.7,0.3) -- (-1.9,0.3) -- (-1.9,-0.3) --  (-1.7,-0.3) -- cycle;

\draw (-1.8+0.03,0.3) -- (-3,0.3);
\draw (-1.8+0.03,-0.3) -- (-3,-0.3);

\draw [thick,dotted] (-3,0.3) -- (-3,-0.3);

\draw [dashed] (-2.3,2.3) -- (2.3,2.3) -- (2.3,-2.3) --  (-2.3,-2.3) -- cycle;
\coordinate [label=center: $C_0$] (B) at (0,0);

\begin{scope}[shift={(8,0)}]

\draw (0,0) circle (1.8);

\draw [dashed] (-1.8,2.3) -- (2.3,2.3) -- (2.3,-2.3) --  (-1.8,-2.3) -- cycle;
\coordinate [label=center: $C_0$] (B) at (0,0);

\draw (-1.8,0) arc (0:36.87:2.5);
\draw (-1.8,0) arc (0:-36.87:2.5);
\draw [thick,dotted] (-2.3,1.5) arc (36.87:57:2.5);
\draw [thick,dotted] (-2.3,-1.5) arc (-36.87:-57:2.5);

\coordinate [label=center: $+$] (B) at (-1.5,0);
\coordinate [label=center: $-$] (B) at (-2.1,0);

\end{scope}

\end{tikzpicture}
\caption{Graphs that dominate for $\lambda_0\to\infty$}
\label{fig:pic for string}
\end{figure}
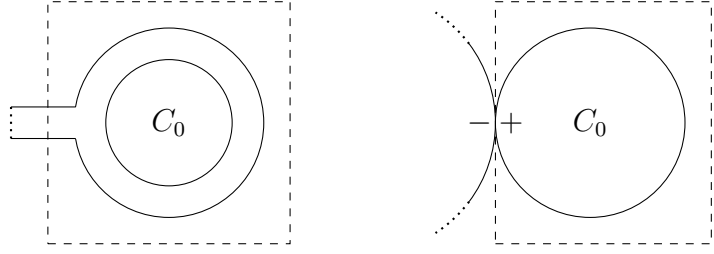
We see that there are two cases. In the first case, the edge of our face is internal. In the second case, the edge of the face is boundary. Then, automatically, the face belongs to a component~$G_i$ of type~$(0,1,1)$. The first picture in Fig.~\ref{fig:pic for string} already appeared in~\cite{Wit92} in the diagrammatic proof of the string equation for~$\tau^c$. The contribution of this picture in our situation is computed in exactly the same way, as in~\cite{Wit92}, and it gives the first term~$\tr H$ in the brackets on the right-hand side of~\eqref{eq:t0 derivative}. Consider the second picture in Fig.~\ref{fig:pic for string}. A graph outside the dotted lines can be an arbitrary odd extended critical nodal ribbon graph with an additional distinguished illegal "half" of a node. The part inside the dotted lines gives $\frac{1}{\lambda_0}$. So, in order to get the contribution of the second picture, we should sum over all exteriors. It is easy to see that this sum gives the second term~$\tr Z$ in the brackets on the right-hand side of~\eqref{eq:t0 derivative}.

Computations~\eqref{comp1},~\eqref{comp2} and~\eqref{eq:t0 derivative} show that the string equation~\eqref{eq:string equation} is equivalent to the equation
\begin{gather}\label{eq:string equation2}
\int_{\mcH_M\times\Mat_{N,N}(\mbC)}\left[\left(-\frac{\tr H^2\Lambda^{-1}}{2}+\tr H+\tr Z\right)I_2I_3+I_3\sum_{i=1}^M\frac{1}{\lambda_i}\frac{\d I_2}{\d \lambda_i}-2I_2\sum_{i=1}^N\frac{\d I_3}{\d\oz_{i,i}}\right]I_1dHdZ=0.
\end{gather}
Note that
$$
\left[\sum_{i=1}^M\frac{1}{\lambda_i}\left(\frac{\d}{\d\lambda_i}+\frac{\d}{\d h_{i,i}}\right)+2\sum_{j=1}^N\frac{\d}{\d\oz_{j,j}}\right]I_2=0.
$$
Therefore, equation~\eqref{eq:string equation2} is equivalent to
\begin{gather}\label{eq:string equation3}
\int_{\mcH_M\times\Mat_{N,N}(\mbC)}\left[\left(-\frac{\tr H^2\Lambda^{-1}}{2}+\tr H+\tr Z\right)I_2I_3-I_3\sum_{i=1}^M\frac{1}{\lambda_i}\frac{\d I_2}{\d h_{i,i}}-2\sum_{i=1}^N\frac{\d(I_2I_3)}{\d\oz_{i,i}}\right]I_1dHdZ=0.
\end{gather}
Applying the relations
\begin{align*}
0=&\int_{\mcH_M\times\Mat_{N,N}(\mbC)}\sum_{i=1}^M\frac{1}{\lambda_i}\frac{\d(I_1I_2I_3)}{\d h_{i,i}}dHdZ=\\
=&\int_{\mcH_M\times\Mat_{N,N}(\mbC)}\left[\left(\frac{\tr H^2\Lambda^{-1}}{2}-\tr H\right)I_2I_3+I_3\sum_{i=1}^M\frac{1}{\lambda_i}\frac{\d I_2}{\d h_{i,i}}\right]I_1dHdZ,\\
0=&\int_{\mcH_M\times\Mat_{N,N}(\mbC)}\sum_{j=1}^N\frac{\d(I_1I_2I_3)}{\d\oz_{j,j}}dHdZ=\\
=&\int_{\mcH_M\times\Mat_{N,N}(\mbC)}\left[-\frac{\tr Z}{2}I_2I_3+\sum_{i=1}^N\frac{\d(I_2I_3)}{\d\oz_{i,i}}\right]I_1dHdZ,
\end{align*}
we see that equation~\eqref{eq:string equation3} is true. The string equation~\eqref{eq:string equation} is proved.
\end{proof}

\subsection{Dilaton equation}\label{subsection:dilaton equation}

\begin{proposition}\label{proposition:dilaton equation}
We have the dilaton equation
\begin{gather}\label{eq:dilaton equation}
\left(\frac{\d}{\d t_1}-\sum_{n\ge 0}\frac{2n+1}{3}t_n\frac{\d}{\d t_n}-\sum_{n\ge 0}\frac{2n+2}{3}s_n\frac{\d}{\d s_n}-\frac{N^2}{2}-\frac{1}{24}\right)\tau^{o,ext}_N=0.
\end{gather}
\end{proposition}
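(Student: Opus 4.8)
The plan is to adapt the proof of the string equation (Proposition~\ref{proposition:string equation}) almost verbatim, again working with the matrix model~\eqref{eq:full refined matrix model,2} for $Z_{M,N}=\left.\tau^{o,ext}_N\right|_{t_i=t_i(\Lambda)}$ and keeping the notation $I_1,I_2,I_3$. The first task is to translate each scaling term of~\eqref{eq:dilaton equation} into an operation on the integral. Since $Z_{M,N}$ depends on $\Lambda$ only through the times $t_n(\Lambda)$ and $\sum_i\lambda_i\frac{\d t_n(\Lambda)}{\d\lambda_i}=-(2n+1)t_n(\Lambda)$, the chain rule gives $\left.\left(-\sum_n\frac{2n+1}{3}t_n\frac{\d}{\d t_n}\right)\tau^{o,ext}_N\right|_{t=t(\Lambda)}=\frac13\sum_{i=1}^M\lambda_i\frac{\d}{\d\lambda_i}Z_{M,N}$, and differentiating the explicit integrand produces a $\frac{M^2}{2}$ from $c_{\Lambda,M}$, a $-\frac12\tr(H^2\Lambda)$ from $I_1$, and a $\sum_i\lambda_i\frac{\d I_2}{\d\lambda_i}$ term. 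For the $s$-part I would use that $I_3$ is built from the homogeneous expressions $\tr(\oZ^t)^{i+1}$, so $\sum_{a,b}\oz_{a,b}\frac{\d I_3}{\d\oz_{a,b}}=\sum_i\frac{2^{-i-1}}{i!}s_i\tr(\oZ^t)^{i+1}I_3$, whence $\left.\left(-\sum_n\frac{2n+2}{3}s_n\frac{\d}{\d s_n}\right)\tau^{o,ext}_N\right|_{t=t(\Lambda)}=-\frac{2}{3}\frac{c_{\Lambda,M}}{(2\pi)^{N^2}}\int I_1I_2\sum_{a,b}\oz_{a,b}\frac{\d I_3}{\d\oz_{a,b}}\,dHdZ$.

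The main new ingredient, and the step I expect to be the main obstacle, is to express $\left.\frac{\d\tau^{o,ext}_N}{\d t_1}\right|_{t=t(\Lambda)}$ as a matrix integral by the diagrammatic method of Witten~\cite{Wit92}, exactly as $\frac{\d}{\d t_0}$ was handled in~\eqref{eq:t0 derivative} via Fig.~\ref{fig:pic for string}. Now $\tau_1$ carries the weight $(2\cdot1-1)!!/\lambda_0^{3}=\lambda_0^{-3}$, so one adds a distinguished face $C_0$ labelled $\lambda_0$ to the graphs of~\eqref{eq:full combinatorial formula} and extracts the coefficient of $\lambda_0^{-3}$ as $\lambda_0\to\infty$. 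Whereas the string equation only saw faces bounded by a single edge, here several local configurations of $C_0$ survive at order $\lambda_0^{-3}$ --- internal edges through their subleading expansions, boundary edges carrying one node, and the small genus-one and exceptional pieces --- and they must be enumerated and resummed with care. I expect this computation to reproduce an insertion of Euler type (the $H$- and $Z$-degree operators) together with the anomalous constant $\frac{1}{24}$, which is the combinatorial avatar of $\langle\tau_1\rangle^c_1=\frac{1}{24}$.

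With these three computations in hand, the dilaton equation~\eqref{eq:dilaton equation} becomes equivalent to a single identity among matrix integrals over $\mcH_M\times\Mat_{N,N}(\mbC)$, in the same way that the string equation was reduced to~\eqref{eq:string equation2}. The algebraic tool that replaces the translation identity for $I_2$ used in the string case is the scaling invariance of $I_2$: assigning weights $1,1,1,2$ to $\Lambda,H,Z,\oZ$ leaves the ratio of determinants invariant, so $\left[\sum_i\lambda_i\frac{\d}{\d\lambda_i}+E_H+E_Z+2E_{\oZ}\right]I_2=0$, where $E_H,E_Z,E_{\oZ}$ are the Euler operators counting the degrees in the entries of $H$, $Z$ and $\oZ$. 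I would use this relation to trade the $\sum_i\lambda_i\frac{\d I_2}{\d\lambda_i}$ term coming from the $t$-scaling step for $-(E_H+E_Z+2E_{\oZ})I_2$, just as~\eqref{eq:string equation2} was turned into~\eqref{eq:string equation3}, and then finish by integrating by parts: the $E_H$ term against $e^{-\frac12\tr H^2\Lambda+\frac16\tr H^3}$ and the $E_Z+2E_{\oZ}$ terms against $e^{-\frac12\tr Z\oZ^t+\frac16\tr Z^3}I_3$, producing precisely the $\frac{\d}{\d t_1}$ and $s$-scaling contributions already computed.

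The constants are then accounted for by the boundary terms of these integrations by parts. The $\frac{M^2}{2}$ coming from $c_{\Lambda,M}$ must cancel against the $M^2$ generated by the Gaussian $H$-contractions, so that no dependence on the auxiliary size $M$ survives, exactly as in the string case; the genuinely new constant $\frac{N^2}{2}$ arises from the $Z$-integration by parts through $\sum_{a,b}\frac{\d z_{a,b}}{\d z_{a,b}}=N^2$, and $\frac{1}{24}$ is the closed genus-one contribution identified in the diagrammatic step. Checking that all of these assemble with the correct weights $\frac{2n+1}{3}$ and $\frac{2n+2}{3}$ is the bookkeeping that, together with the diagrammatic evaluation of $\frac{\d}{\d t_1}$, forms the bulk of the argument; everything else is a routine repetition of the integration-by-parts relations already used in the proof of Proposition~\ref{proposition:string equation}.
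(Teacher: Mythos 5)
Your strategy coincides with the paper's own proof: the same matrix model \eqref{eq:full refined matrix model,2} with the same $I_1,I_2,I_3$, the same chain-rule identities for the two scaling terms (your computations of the $t$- and $s$-parts agree exactly with \eqref{eq:dilaton,formula1} and \eqref{eq:dilaton,formula2}), the same scaling invariance of $I_2$ under the weights $(1,1,1,2)$ for $(\Lambda,H,Z,\oZ)$, and the same Gaussian integrations by parts at the end. However, there is a genuine gap, and it sits exactly where the proposition's real content lies: you defer the diagrammatic evaluation of $\left.\frac{\d\tau^{o,ext}_N}{\d t_1}\right|_{t_i=t_i(\Lambda)}$, writing only that you ``expect'' the $\lambda_0^{-3}$ extraction to produce ``an insertion of Euler type'' plus the constant $\frac{1}{24}$. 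The paper must enumerate ten local configurations of the distinguished face $C_0$ (Figs.~\ref{fig:pic for dilaton} and \ref{fig:pic for dilaton2}): the five internal-type ones reproduce Witten's closed-string terms $\frac13\tr H^3-\tr H^2\Lambda+\tr H\Lambda^2+\frac{M^2}{2}+\frac{1}{24}$ in line \eqref{eq:t1 derivative,first line}, but the five boundary-type ones are new, and crucially they do \emph{not} assemble into a uniform Euler operator: expanding $\frac{1}{2\lambda_0(\lambda_0+\lambda_i)}$, $\frac{1}{4\lambda_0^2(\lambda_0+\lambda_i)}$ and $\frac{1}{2\lambda_0(\lambda_0+\lambda_i)(\lambda_0+\lambda_j)}$ yields the unequal weights in $-\sum\lambda_i\frac{\d I_2}{\d h_{i,i}}+\frac12\sum z_{i,j}\frac{\d I_2}{\d z_{i,j}}+\sum h_{i,j}\frac{\d I_2}{\d h_{i,j}}$ of line \eqref{eq:t1 derivative,second line}, and the two bottom-row pictures contribute the terms $\frac{1}{12}\tr Z^3+\frac14\tr Z\oZ^t$ of line \eqref{eq:t1 derivative,third line}, which your sketch does not predict at all.

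This matters because the final bookkeeping is a cancellation between precisely these coefficients and the scaling terms: with a genuine Euler insertion in place of the actual $t_1$-derivative formula, the $\tr Z^3$ and $\tr Z\oZ^t$ contributions generated by the $z$- and $\oz$-integrations by parts (each of which also produces an $N^2$, combining to the $\frac{N^2}{2}$ and the residual $\frac{1}{12}\tr Z^3-\frac{1}{12}\tr Z\oZ^t+\frac16 N^2$ seen in \eqref{eq:dilaton3}) would not cancel, and the constants would come out wrong. So your plan correctly identifies the route and all the auxiliary identities, and the endgame you describe does match the paper's reduction \eqref{eq:dilaton2}--\eqref{eq:dilaton5}; but the step you flag as the ``main obstacle'' is not a routine resummation one can wave through --- it is the bulk of the proof, and as written your proposal does not establish formula \eqref{eq:t1 derivative,first line}--\eqref{eq:t1 derivative,third line}, without which the dilaton equation cannot be concluded.
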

\begin{proof}
We have
\begin{multline}\label{eq:dilaton,formula1}
\left.\left(-\sum_{i\ge 0}\frac{2i+1}{3}t_i\frac{\d}{\d t_i}\right)\tau^{o,ext}_N\right|_{t_j=t_j(\Lambda)}=\frac{1}{3}\sum\lambda_i\frac{\d}{\d\lambda_i}Z_{M,N}=\\
=\frac{c_{\Lambda,M}}{(2\pi)^{N^2}}\int_{\mcH_M\times\Mat_{N,N}(\mbC)}\left[\left(\frac{M^2}{6}-\frac{\tr H^2\Lambda}{6}\right)I_1I_2I_3+\frac{1}{3}I_1I_3\sum\lambda_i\frac{\d I_2}{\d\lambda_i}\right]dHdZ.
\end{multline}
It is also easy to see that
\begin{gather}\label{eq:dilaton,formula2}
\left.\left(-\sum_{i\ge 0}\frac{2i+2}{3}s_i\frac{\d}{\d s_i}\right)\tau^{o,ext}_N\right|_{t_j=t_j(\Lambda)}=\frac{c_{\Lambda,M}}{(2\pi)^{N^2}}\int_{\mcH_M\times\Mat_{N,N}(\mbC)}\left[-\frac{2}{3}I_1I_2\sum\oz_{i,j}\frac{\d I_3}{\d\oz_{i,j}}\right]dHdZ.
\end{gather}

As in the proof of the string equation, the only non-trivial step here is the computation of the $t_1$ derivative. Let us prove that
\begin{align}
\left.\frac{\d\tau^{o,ext}_N}{\d t_1}\right|_{t_i=t_i(\Lambda)}=&\frac{c_{\Lambda,M}}{(2\pi)^{N^2}}\int_{\mcH_M\times\Mat_{N,N}(\mbC)}\left[\left(\frac{1}{3}\tr H^3-\tr H^2\Lambda+\tr H\Lambda^2+\frac{M^2}{2}+\frac{1}{24}\right)I_1I_2I_3\right.\label{eq:t1 derivative,first line}\\
&\hspace{4cm}+I_1I_3\left(-\sum\lambda_i\frac{\d I_2}{\d h_{i,i}}+\frac{1}{2}\sum z_{i,j}\frac{\d I_2}{\d z_{i,j}}+\sum h_{i,j}\frac{\d I_2}{\d h_{i,j}}\right)\label{eq:t1 derivative,second line}\\
&\hspace{4cm}\left.+\left(\frac{1}{12}\tr Z^3+\frac{1}{4}\tr Z\oZ^t\right)I_1I_2I_3\right]dHdZ.\label{eq:t1 derivative,third line}
\end{align}
We want to compute the generating series from the left-hand side of~\eqref{eq:full combinatorial formula} with an extra insertion of~$\tau_1$. In order to get it from the right-hand side of~\eqref{eq:full combinatorial formula}, we have to sum over graphs $G=\left(\coprod_iG_i\right)/N\in\tcR^{ext}_{l+1}$ with a distinguished face, which we call $C_0$, labeled with a variable~$\lambda_0$, and then pick out the coefficient of $\frac{1}{\lambda_0^3}$. The coefficient of $\frac{1}{\lambda_0^3}$ can only come from graphs, where the face~$C_0$ has at most three edges. The structure of such graphs in indicated in Fig.~\ref{fig:pic for dilaton} and Fig.~\ref{fig:pic for dilaton2}.
\begin{figure}[t]
\begin{tikzpicture}[scale=0.7]

\draw (0,0) circle (1.8);
\draw (0,0) circle (1.2);
\fill [color=white] (-1.7,0.3) -- (-1.9,0.3) -- (-1.9,-0.3) --  (-1.7,-0.3) -- cycle;

\draw (-1.8+0.03,0.3) -- (-3,0.3);
\draw (-1.8+0.03,-0.3) -- (-3,-0.3);

\draw [thick,dotted] (-3,0.3) -- (-3,-0.3);

\draw [dashed] (-2.3,2.3) -- (2.3,2.3) -- (2.3,-2.3) --  (-2.3,-2.3) -- cycle;
\coordinate [label=center: $C_0$] (B) at (0,0);

\begin{scope}[shift={(7,0)}]

\draw (0,0) circle (1.8);
\draw (0,0) circle (1.2);
\fill [color=white] (-1.7,0.3) -- (-1.9,0.3) -- (-1.9,-0.3) --  (-1.7,-0.3) -- cycle;

\draw (-1.8+0.03,0.3) -- (-3,0.3);
\draw (-1.8+0.03,-0.3) -- (-3,-0.3);

\draw [thick,dotted] (-3,0.3) -- (-3,-0.3);

\fill [color=white] (1.7,0.3) -- (1.9,0.3) -- (1.9,-0.3) --  (1.7,-0.3) -- cycle;

\draw (1.8-0.03,0.3) -- (3,0.3);
\draw (1.8-0.03,-0.3) -- (3,-0.3);

\draw [thick,dotted] (3,0.3) -- (3,-0.3);

\draw [dashed] (-2.3,2.3) -- (2.3,2.3) -- (2.3,-2.3) --  (-2.3,-2.3) -- cycle;
\coordinate [label=center: $C_0$] (B) at (0,0);

\end{scope}

\begin{scope}[shift={(13.8,-0.3)}]

\draw (0.3,2.4-0.52) -- (0.87*2.4-0.3,-1.2+0.52);
\draw (-0.3,2.4-0.52) -- (-0.87*2.4+0.3,-1.2+0.52);
\draw (0.87*2.4-0.6,-1.2) -- (-0.87*2.4+0.6,-1.2);

\draw (0.3,2.4-0.52) -- (0.3,2.4-0.52+1.2);
\draw (-0.3,2.4-0.52) -- (-0.3,2.4-0.52+1.2);
\draw [thick,dotted] (0.3,2.4-0.52+1.2) -- (-0.3,2.4-0.52+1.2);

\draw (0.87*2.4-0.3,-1.2+0.52) -- (0.87*2.4-0.3+1.04,-1.2+0.52-0.6);
\draw (0.87*2.4-0.6,-1.2) -- (0.87*2.4-0.6+1.04,-1.2-0.6);
\draw [thick,dotted] (0.87*2.4-0.3+1.04,-1.2+0.52-0.6) -- (0.87*2.4-0.6+1.04,-1.2-0.6);

\draw (-0.87*2.4+0.3,-1.2+0.52) -- (-0.87*2.4+0.3-1.04,-1.2+0.52-0.6);
\draw (-0.87*2.4+0.6,-1.2) -- (-0.87*2.4+0.6-1.04,-1.2-0.6);
\draw [thick,dotted] (-0.87*2.4+0.3-1.04,-1.2+0.52-0.6) -- (-0.87*2.4+0.6-1.04,-1.2-0.6);

\draw (0,1.2) -- (0.87*1.2,-0.6);
\draw (0,1.2) -- (-0.87*1.2,-0.6);
\draw (0.87*1.2,-0.6) -- (-0.87*1.2,-0.6);

\draw [dashed] (-2.1,2.4) -- (2.1,2.4) -- (2.1,-1.9) --  (-2.1,-1.9) -- cycle;
\coordinate [label=center: $C_0$] (B) at (0,0);

\end{scope}

\begin{scope}[shift={(3.5,-5.5)}]

\draw (-1.6,0) circle (1.2);
\draw (-1.6,0) circle (0.6);
\draw (1.6,0) circle (1.2);
\draw (1.6,0) circle (0.6);

\fill [color=white] (-0.6,0.3) -- (-0.6,-0.3) -- (0.6,-0.3) --  (0.6,0.3) -- cycle;
\draw (-0.45,0.3) -- (0.45,0.3);
\draw (-0.45,-0.3) -- (0.45,-0.3);

\coordinate [label=center: $C_0$] (B) at (0,0.8);

\end{scope}

\begin{scope}[shift={(10.5,-5.5)}]

\draw (-1.5,0.3) -- (-0.3,0.3);
\draw (-1.5,-0.3) -- (-0.3,-0.3);
\draw (0.3,0.3) -- (0.3,-0.3);
\draw (-2.1,0.3) -- (-2.1,-0.3);

\draw (-1.5,0.3) .. controls (-1.5,1.5) and (0.75,1.8) .. (1.25,1.05);
\draw (1.25,1.05) .. controls (2,-0.075) and (1,-1.1) .. (0.3,-0.3);
\draw (-2.1,0.3) .. controls (-2.1,1.7) and (0.5,2.5) .. (1.4,1.7);
\draw (1.4,1.7) .. controls (3.2,0.1) and (1,-2.4) .. (-0.3,-0.3);

\draw (-1.5,-0.3) .. controls (-1.5,-1.5) and (0.75,-1.8) .. (1.25,-1.05);
\draw (1.25,-1.05) .. controls (2,0.075) and (1,1.1) .. (0.3,0.3);
\draw (-2.1,-0.3) .. controls (-2.1,-1.7) and (0.5,-2.5) .. (1.4,-1.7);
\draw (1.4,-1.7) .. controls (3.2,-0.1) and (1,2.4) .. (-0.3,0.3);

\fill [color=white] (1.447,1.131) circle (0.2);
\fill [color=white] (1.845,0.06) circle (0.2);
\fill [color=white] (1.845,0.25) circle (0.2);
\fill [color=white] (1.8,0.58) circle (0.2);
\fill [color=white] (1.6,0.9) circle (0.2);
\fill [color=white] (1.687,-0.072) circle (0.2);
\fill [color=white] (1.189,-0.825) circle (0.2);
\fill [color=white] (1.38,-0.65) circle (0.2);
\fill [color=white] (1.6,-0.4) circle (0.2);
\fill [color=white] (1.65,-0.25) circle (0.2);

\coordinate [label=center: $C_0$] (B) at (-1.9,1.7);

\end{scope}

\end{tikzpicture}
\caption{Graphs of internal type that contribute in order $\frac{1}{\lambda_0^3}$}
\label{fig:pic for dilaton}
\end{figure}
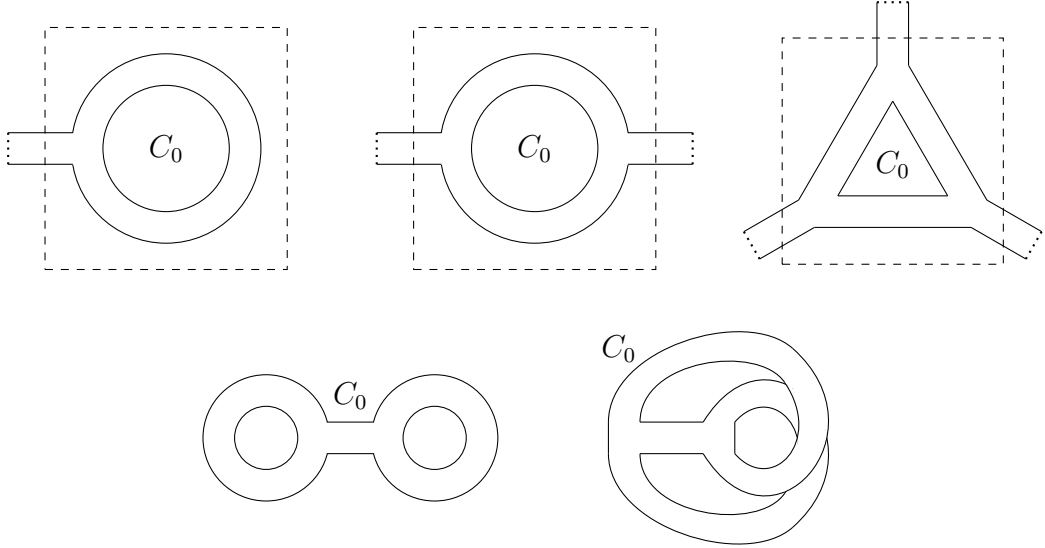
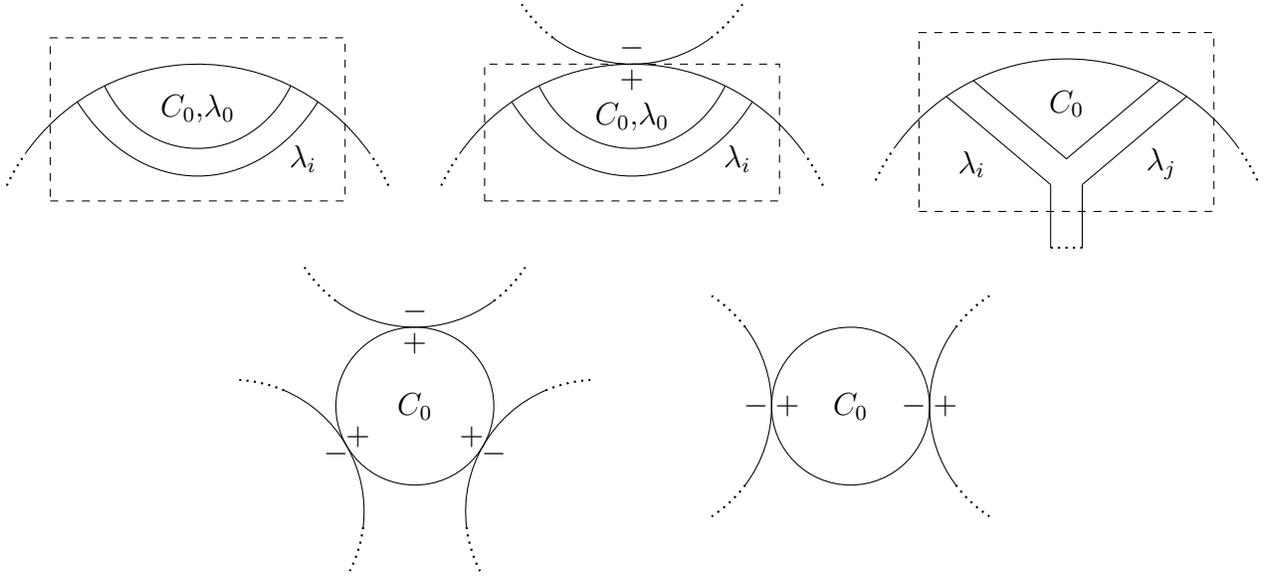
\begin{figure}[t]
\begin{tikzpicture}[scale=0.7]

\draw (1.77,3.59) .. controls (1,2) and (-1,2) .. (-1.77,3.59);
\draw (1.75+0.6*3.59/4,3.555-0.6*1.77/4) .. controls (1.1,1.4) and (-1.1,1.4) .. (-1.75-0.6*3.59/4,3.555-0.6*1.77/4);

\draw (3.277,2.294) arc (35:145:4);
\draw [thick,dotted] (3.625,1.690) arc (25:35:4);
\draw [thick,dotted] (-3.625,1.690) arc (155:145:4);
\coordinate [label=center: $C_0{,} \lambda_0$] (B) at (0,3.15);
\coordinate [label=center: $\lambda_i$] (B) at (2,2.2);
\draw[dashed] (-2.8,4.5) -- (2.8,4.5) -- (2.8,1.4) -- (-2.8,1.4) -- cycle;

\begin{scope}[shift={(8.25,0)}]

\draw (1.77,3.59) .. controls (1,2) and (-1,2) .. (-1.77,3.59);
\draw (1.75+0.6*3.59/4,3.555-0.6*1.77/4) .. controls (1.1,1.4) and (-1.1,1.4) .. (-1.75-0.6*3.59/4,3.555-0.6*1.77/4);

\draw (3.277,2.294) arc (35:145:4);
\draw [thick,dotted] (3.625,1.690) arc (25:35:4);
\draw [thick,dotted] (-3.625,1.690) arc (155:145:4);
\coordinate [label=center: $C_0{,} \lambda_0$] (B) at (0,3);
\coordinate [label=center: $+$] (B) at (0,3.7);
\coordinate [label=center: $-$] (B) at (0,4.3);
\coordinate [label=center: $\lambda_i$] (B) at (2,2.2);
\draw[thick,dotted] (1.5,4.5) arc (36.87-90:57-90:2.5);
\draw[thick,dotted] (-1.5,4.5) arc (-36.87-90:-57-90:2.5);
\draw (1.5,4.5) arc (36.87-90:-36.87-90:2.5);

\draw[dashed] (-2.8,4) -- (2.8,4) -- (2.8,1.4) -- (-2.8,1.4) -- cycle;

\end{scope}

\begin{scope}[shift={(16.5,0.1)}]

\draw (1.77,3.59) -- (0,2.1) -- (-1.77,3.59);
\draw (1.75+0.6*3.59/4,3.555-0.6*1.77/4) -- (0.3,1.36+0.25);
\draw (-0.3,1.36+0.25) -- (-1.75-0.6*3.59/4,3.555-0.6*1.77/4);
\draw (-0.3,1.36+0.25) -- (-0.3,1.36+0.25-1.2);
\draw (0.3,1.36+0.25) -- (0.3,1.36+0.25-1.2);
\draw [thick,dotted] (-0.3,1.36+0.25-1.2) -- (0.3,1.36+0.25-1.2);
\draw (3.277,2.294) arc (35:145:4);
\draw [thick,dotted] (3.625,1.690) arc (25:35:4);
\draw [thick,dotted] (-3.625,1.690) arc (155:145:4);
\coordinate [label=center: $C_0$] (B) at (0,3.15);
\coordinate [label=center: $\lambda_j$] (B) at (1.8,2);
\coordinate [label=center: $\lambda_i$] (B) at (-1.8,2);

\draw[dashed] (-2.8,4.5) -- (2.8,4.5) -- (2.8,1.1) -- (-2.8,1.1) -- cycle;

\end{scope}

\begin{scope}[shift={(4.125,-2.5)}]
\draw (0,0) circle (1.5);
\coordinate [label=center:$+$] (B) at (0,1.2);
\coordinate [label=center:$-$] (B) at (0,1.8);
\coordinate [label=center:$+$] (B) at (-1.08,-0.58);
\coordinate [label=center:$-$] (B) at (-1.5,-0.9);
\coordinate [label=center:$+$] (B) at (1.08,-0.58);
\coordinate [label=center:$-$] (B) at (1.5,-0.9);

\draw[thick,dotted] (1.5,2) arc (36.87-90:57-90:2.5);
\draw[thick,dotted] (-1.5,2) arc (-36.87-90:-57-90:2.5);
\draw (1.5,2) arc (36.87-90:-36.87-90:2.5);
\draw[thick,dotted] (-2.48205,0.299038) arc (36.87+30:57+30:2.5);
\draw[thick,dotted] (-0.982051, - 2.29904) arc (-36.87+30:-57+30:2.5);
\draw (-2.48205,0.299038) arc (36.87+30:-36.87+30:2.5);
\draw[thick,dotted] (2.48205,0.299038) arc (180-36.87-30:180-57-30:2.5);
\draw[thick,dotted] (0.982051, - 2.29904) arc (180+36.87-30:180+57-30:2.5);
\draw (2.48205,0.299038) arc (180-36.87-30:180+36.87-30:2.5);

\coordinate [label=center: $C_0$] (B) at (0,0);

\end{scope}

\begin{scope}[shift={(12.4,-2.5)}]
\draw (0,0) circle (1.5);
\coordinate [label=center:$+$] (B) at (-1.2,0);
\coordinate [label=center:$-$] (B) at (-1.8,0);
\coordinate [label=center:$-$] (B) at (1.2,0);
\coordinate [label=center:$+$] (B) at (1.8,0);

\draw[thick,dotted] (-2,1.5) arc (36.87:57:2.5);
\draw[thick,dotted] (-2,-1.5) arc (-36.87:-57:2.5);
\draw (-2,1.5) arc (36.87:-36.87:2.5);
\draw[thick,dotted] (2,1.5) arc (180-36.87:180-57:2.5);
\draw[thick,dotted] (2,-1.5) arc (180+36.87:180+57:2.5);
\draw (2,1.5) arc (180-36.87:180+36.87:2.5);

\coordinate [label=center: $C_0$] (B) at (0,0);

\end{scope}

\end{tikzpicture}
\caption{Graphs of boundary type that contribute in order $\frac{1}{\lambda_0^3}$}
\label{fig:pic for dilaton2}
\end{figure}
We see that there are $10$ cases and we divide them in two types. Graphs of internal type are those graphs where all the edges of the face~$C_0$ are internal and graphs of boundary type are those graphs where at least one edge of the face~$C_0$ is boundary. The diagrams inside the dotted lines in the top row in Fig.~\ref{fig:pic for dilaton} are pieces of arbitrary larger graphs, while the graphs in the bottom row in Fig.~\ref{fig:pic for dilaton} are special ribbon graphs corresponding to closed Riemann surfaces of genus~$0$ and~$1$ respectively. The five pictures in Fig.~\ref{fig:pic for dilaton} already appeared in~\cite{Wit92} in the diagrammatic proof of the dilaton equation for~$\tau^c$. The contribution of these pictures in our situation is computed in exactly the same way, as in~\cite{Wit92}, and it gives the five terms in the integrand in line~\eqref{eq:t1 derivative,first line}.

Let us consider graphs of boundary type. Let us look at the first picture in Fig.~\ref{fig:pic for dilaton2}. Suppose that the face adjacent to~$C_0$ is labeled with~$\lambda_i$. Then the diagram inside the dotted lines gives
$$
\frac{1}{2}\frac{1}{\lambda_0(\lambda_0+\lambda_i)}=\frac{1}{2}\left(\frac{1}{\lambda_0^2}-\frac{\lambda_i}{\lambda_0^3}+\ldots\right).
$$
So, the coefficient of $\frac{1}{\lambda_0^3}$ is $-\frac{\lambda_i}{2}$. Note that the graph outside the dotted lines can be an arbitrary odd extended critical ribbon graph with boundary with a distinguished face labeled by~$\lambda_i$ and having a boundary edge. Now it is easy to see that the first picture in Fig.~\ref{fig:pic for dilaton2} gives the term~$-\sum\lambda_i\frac{\d I_2}{\d h_{i,i}}$ in line~\eqref{eq:t1 derivative,second line}. Consider now the second picture in Fig.~\ref{fig:pic for dilaton2}. The part inside the dotted lines gives
$$
\frac{1}{4}\frac{1}{\lambda_0^2(\lambda_0+\lambda_i)}=\frac{1}{4}\frac{1}{\lambda_0^3}+\ldots.
$$
So, the coefficient of $\frac{1}{\lambda_0^3}$ is $\frac{1}{4}$. Now we may shrink the interior of the dotted lines to a point and sum over all possible exteriors. This gives the term~$\frac{1}{2}\sum z_{i,j}\frac{\d I_2}{\d z_{i,j}}$ in line~\eqref{eq:t1 derivative,second line}. In the third picture in Fig.~\ref{fig:pic for dilaton2} the interior of the dotted lines gives
$$
\frac{1}{2}\frac{1}{\lambda_0(\lambda_0+\lambda_i)(\lambda_0+\lambda_j)}=\frac{1}{2}\frac{1}{\lambda_0^3}+\ldots.
$$
Therefore, the coefficient of $\frac{1}{\lambda_0^3}$ is $\frac{1}{2}$, and this picture corresponds to the term ~$\sum h_{i,j}\frac{\d I_2}{\d h_{i,j}}$ in line~\eqref{eq:t1 derivative,second line}. One can also easily see that the two pictures in the bottom row in Fig.~\ref{fig:pic for dilaton2} correspond to the two terms in the integrand in line~\eqref{eq:t1 derivative,third line}. Thus, formula~\eqref{eq:t1 derivative,first line} for the $t_1$ derivative is proved.

Formulas~\eqref{eq:dilaton,formula1},~\eqref{eq:dilaton,formula2} and~\eqref{eq:t1 derivative,first line} imply that the dilaton equation~\eqref{eq:dilaton equation} is equivalent to
\begin{align}
&\int_{\mcH_M\times\Mat_{N,N}(\mbC)}\left[\left(\frac{1}{3}\tr H^3-\frac{7}{6}\tr H^2\Lambda+\tr H\Lambda^2+\frac{2}{3}M^2+\frac{1}{12}\tr Z^3+\frac{1}{4}\tr Z\oZ^t-\frac{1}{2}N^2\right)I_1I_2I_3\right.\label{eq:dilaton2}\\
&\hspace{2.7cm}+I_1I_3\left(\frac{1}{3}\sum\lambda_i\frac{\d I_2}{\d\lambda_i}-\sum\lambda_i\frac{\d I_2}{\d h_{i,i}}+\frac{1}{2}\sum z_{i,j}\frac{\d I_2}{\d z_{i,j}}+\sum h_{i,j}\frac{\d I_2}{\d h_{i,j}}\right)\notag\\
&\hspace{2.7cm}\left.-\frac{2}{3}I_1I_2\sum\oz_{i,j}\frac{\d I_3}{\d\oz_{i,j}}\right]dHdZ=0.\notag
\end{align}
Using the relation
\begin{align*}
0=&\int_{\mcH_M\times\Mat_{N,N}(\mbC)}\sum\frac{\d}{\d\oz_{i,j}}\left(\oz_{i,j}I_1I_2I_3\right)dHdZ=\\
=&\int_{\mcH_M\times\Mat_{N,N}(\mbC)}\left[\left(N^2-\frac{1}{2}\tr Z\oZ^t\right)I_1I_2I_3+I_1I_3\sum\oz_{i,j}\frac{\d I_2}{\d\oz_{i,j}}+I_1I_2\sum\oz_{i,j}\frac{\d I_3}{\d\oz_{i,j}}\right]dHdZ,
\end{align*}
we see that equation~\eqref{eq:dilaton2} is equivalent to
\begin{align}
&\int_{\mcH_M\times\Mat_{N,N}(\mbC)}\left[\left(\frac{1}{3}\tr H^3-\frac{7}{6}\tr H^2\Lambda+\tr H\Lambda^2+\frac{2}{3}M^2+\frac{1}{12}\tr Z^3-\frac{1}{12}\tr Z\oZ^t+\frac{1}{6}N^2\right)I_1I_2I_3\right.\label{eq:dilaton3}\\
&\left.+I_1I_3\left(\frac{1}{3}\sum\lambda_i\frac{\d I_2}{\d\lambda_i}-\sum\lambda_i\frac{\d I_2}{\d h_{i,i}}+\frac{1}{2}\sum z_{i,j}\frac{\d I_2}{\d z_{i,j}}+\sum h_{i,j}\frac{\d I_2}{\d h_{i,j}}+\frac{2}{3}\sum\oz_{i,j}\frac{\d I_2}{\d\oz_{i,j}}\right)\right]dHdZ=0.\notag
\end{align}
Note that
$$
\left(\sum\lambda_i\frac{\d}{\d\lambda_i}+2\sum \oz_{i,j}\frac{\d}{\d\oz_{i,j}}+\sum h_{i,j}\frac{\d}{\d h_{i,j}}+\sum z_{i,j}\frac{\d}{\d z_{i,j}}\right)I_2=0.
$$
This relation simplifies~\eqref{eq:dilaton3} in the following way,
\begin{align}
&\int_{\mcH_M\times\Mat_{N,N}(\mbC)}\left[\left(\frac{1}{3}\tr H^3-\frac{7}{6}\tr H^2\Lambda+\tr H\Lambda^2+\frac{2}{3}M^2+\frac{1}{12}\tr Z^3-\frac{1}{12}\tr Z\oZ^t+\frac{1}{6}N^2\right)I_1I_2I_3\right.\label{eq:dilaton4}\\
&\hspace{2.7cm}\left.+I_1I_3\left(-\sum\lambda_i\frac{\d I_2}{\d h_{i,i}}+\frac{1}{6}\sum z_{i,j}\frac{\d I_2}{\d z_{i,j}}+\frac{2}{3}\sum h_{i,j}\frac{\d I_2}{\d h_{i,j}}\right)\right]dHdZ=0.\notag
\end{align}
Using now the relation
\begin{align*}
0=&\int_{\mcH_M\times\Mat_{N,N}(\mbC)}\sum\frac{\d}{\d z_{i,j}}\left(z_{i,j}I_1I_2I_3\right)dHdZ=\\
=&\int_{\mcH_M\times\Mat_{N,N}(\mbC)}\left[\left(N^2-\frac{1}{2}\tr Z\oZ^t+\frac{1}{2}\tr Z^3\right)I_1I_2I_3+I_1I_3\sum z_{i,j}\frac{\d I_2}{\d z_{i,j}}\right]dHdZ,
\end{align*}
we obtain that~\eqref{eq:dilaton4} is equivalent to
\begin{align}
&\int_{\mcH_M\times\Mat_{N,N}(\mbC)}\left[\left(\frac{1}{3}\tr H^3-\frac{7}{6}\tr H^2\Lambda+\tr H\Lambda^2+\frac{2}{3}M^2\right)I_1I_2I_3\right.\label{eq:dilaton5}\\
&\hspace{2.7cm}\left.+I_1I_3\left(-\sum\lambda_i\frac{\d I_2}{\d h_{i,i}}+\frac{2}{3}\sum h_{i,j}\frac{\d I_2}{\d h_{i,j}}\right)\right]dHdZ=0.\notag
\end{align}
Finally, using the relations
\begin{align*}
0=&\int_{\mcH_M\times\Mat_{N,N}(\mbC)}\sum\frac{\d}{\d h_{i,j}}\left(h_{i,j}I_1I_2I_3\right)dHdZ=\\
=&\int_{\mcH_M\times\Mat_{N,N}(\mbC)}\left[\left(M^2+\frac{1}{2}\tr H^3-\tr H^2\Lambda\right)I_1I_2I_3+I_1I_3\sum h_{i,j}\frac{\d I_2}{\d h_{i,j}}\right]dHdZ,\\
0=&\int_{\mcH_M\times\Mat_{N,N}(\mbC)}\sum\lambda_i\frac{\d}{\d h_{i,i}}\left(I_1I_2I_3\right)dHdZ=\\
=&\int_{\mcH_M\times\Mat_{N,N}(\mbC)}\left[\left(\frac{1}{2}\tr H^2\Lambda-\tr H\Lambda^2\right)I_1I_2I_3+I_1I_3\sum \lambda_i\frac{\d I_2}{\d h_{i,i}}\right]dHdZ,
\end{align*}
we see that equation~\eqref{eq:dilaton5} is true. This completes the proof of the dilaton equation.
\end{proof}


\section{Main conjecture}\label{section:main conjecture}

In this section we formulate a conjectural relation of the extended refined open partition function~$\tau^{o,ext}_N$ to the Kontsevich-Penner tau-function~$\tau_N$ from~\eqref{eq:KPmatrixmod}. In the case $N=1$ we show how to relate directly our matrix model~\eqref{eq:full refined matrix model} to the Kontsevich-Penner matrix model. We also discuss more evidence for the conjecture. In particular, we show that the conjecture is true in genus~$0$ and~$1$.

\subsection{Kontsevich-Penner matrix model and the partition function $\tau^{o,ext}_N$}

Let $T_k$, $k\ge 1$, be formal variables. Recall that the Kontsevich-Penner tau-function~$\tau_N$ is defined as a unique formal power series in the variables~$T_1,T_2,\ldots$ satisfying equation~\eqref{eq:KPmatrixmod} for each $M\ge 1$. It is not hard to see (see Section~\ref{subsubsection:genus expansion} below) that $\tau_N$ is a formal power series in $T_1,T_2,\ldots$ with the coefficients that are polynomials in $N$ with rational coefficients. Therefore, similarly to~$\tau_N^{o,ext}$, the function~$\tau_N$ is well-defined for all values of~$N$, not necessarily positive integers.

\begin{remark}
Note that in~\cite{Ale15a} our variables~$T_k$ are denoted by~$t_k$. Note also that we write the Kontsevich-Penner matrix integral in a way slightly different from~\cite{Ale15a} (see formula~(1.1) there). In order to identify formula (1.1) from~\cite{Ale15a} with the right-hand side of~\eqref{eq:KPmatrixmod}, one has to make the shift $\Phi\mapsto\Phi+\Lambda$ and then the variable change $\Phi=-H$.
\end{remark}

In \cite{Ale15a,Ale15b} the first author proved that
\begin{gather}\label{eq:conjecture for N=1}
\tau^{o,ext}=\left.\tau_1\right|_{\substack{T_{2i+1}=\frac{t_i}{(2i+1)!!},\\T_{2i+2}=\frac{s_i}{2^{i+1}(i+1)!}.}}
\end{gather}
We propose the following conjecture.
\begin{conjecture}\label{conjecture}
For any $N$ we have
\begin{gather*}
\tau^{o,ext}_N=\left.\tau_N\right|_{\substack{T_{2i+1}=\frac{t_i}{(2i+1)!!},\\T_{2i+2}=\frac{s_i}{2^{i+1}(i+1)!}.}}
\end{gather*}
\end{conjecture}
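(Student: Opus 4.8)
The plan is to identify both sides with the unique solution of a single system of linear constraints. By~\cite{Ale15b} the Kontsevich--Penner tau-function $\tau_N$ satisfies a complete set of Virasoro and $W$-constraints that determine it recursively from its constant term, so it suffices to verify these same constraints for $\tau^{o,ext}_N$ after the substitution $T_{2i+1}=t_i/(2i+1)!!$, $T_{2i+2}=s_i/(2^{i+1}(i+1)!)$. Two of them are already available: rewritten in the $T$-variables, the string and dilaton equations of Propositions~\ref{proposition:string equation} and~\ref{proposition:dilaton equation} are exactly the lowest Virasoro constraints $L_{-1}$ and $L_0$. Thus the base of the tower is in place and the task reduces to the higher constraints.

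First I would extend the Witten-style diagrammatic argument of those two propositions. To produce the $n$-th Virasoro constraint one inserts a distinguished face $C_0$ weighted by $\lambda_0$ (an extra $\d/\d t_n$) into the combinatorial formula~\eqref{eq:full combinatorial formula}, expands as $\lambda_0\to\infty$, and extracts the coefficient of $\lambda_0^{-(2n+1)}$; only graphs in which $C_0$ borders at most $2n+1$ edges contribute, so finitely many local configurations must be classified. Each configuration---internal edges, boundary edges, and the ghost and exceptional pieces carrying the matrix $Z$---translates into a definite insertion of $\tr H^j$, $\tr Z^j$, or a derivative of the determinant factor $I_2$ into the two-matrix integral~\eqref{eq:full refined matrix model,2}, exactly as the terms $\tr H$ and $\tr Z$ arose in~\eqref{eq:t0 derivative} and the various terms arose in~\eqref{eq:t1 derivative,first line}. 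One then converts the resulting integral identities into differential equations in $t_*,s_*$ by the same Schwinger--Dyson (integration-by-parts) relations used in those proofs. The $W$-constraints are obtained in the same spirit, but require tracking iterated insertions generated by the cubic vertex $\tfrac16\tr Z^3$.

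A cleaner, though possibly harder, route is to integrate out the complex matrix $Z$ directly. After setting $s_i=s_i(\Lambda)$ in~\eqref{eq:full refined matrix model,2}, identity~\eqref{eq:note} brings it to the form~\eqref{eq:full refined matrix model}, and, writing the Kontsevich--Penner weight as $\det^N\Lambda/\det^N(\Lambda-H)=\det(\Lambda\otimes\id_N)/\det A$ with $A=(\Lambda-H)\otimes\id_N$, the conjecture becomes equivalent, for every $M$, to the single matrix identity
\[
\frac{1}{(2\pi)^{N^2}}\int_{\Mat_{N,N}(\mbC)}e^{-\frac12\tr Z\oZ^t+\frac16\tr Z^3}\,\det\frac{A+W+B}{A+W-B}\,\frac{dZ}{\det W}=\frac{1}{\det A},
\]
understood in the sense of formal matrix integration, where $B=\id_M\otimes Z$ and $W=\sqrt{\Lambda^2\otimes\id_N-\id_M\otimes\oZ^t}$. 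Here I would expand the determinant ratio together with the cubic Gaussian in Schur functions and apply the orthogonality relation~\cite[eq.~(39)]{Ale11} already used in the paper. By the Feynman expansion following~\eqref{eq:transformation} every coefficient of $\tau^{o,ext}_N$ is a polynomial in $N$, as is every coefficient of $\tau_N$ by its genus expansion, so it is enough to prove this identity for positive integers $N$; the case $N=1$ in~\eqref{eq:conjecture for N=1}, and the verifications in genus $0$ and $1$, then serve both as base cases and as checks on the bookkeeping.

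The main obstacle in either route is the coupling of $H$ and $Z$ through the tensor-product square root $W$ and the cubic vertex $\tr Z^3$. Since $A$ and $W$ fail to commute whenever $[H,\Lambda]\ne0$, the Schur-function or residue manipulations in the integration approach cannot be performed blockwise, and it is precisely the cross terms between $\tr Z^3$ and the determinant ratio that must reorganize into $1/\det A$. In the constraint approach the same coupling is what makes the $W$-constraints---as opposed to the purely quadratic Virasoro constraints---genuinely new, and I expect establishing them to be where the real difficulty lies.
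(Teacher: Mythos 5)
You are attempting to prove what the paper explicitly leaves as a conjecture: the authors do not prove this statement, they only assemble evidence for it (the direct $N=1$ identification with the Kontsevich--Penner integral via the Harish-Chandra--Itzykson--Zuber formula, the string and dilaton equations of Propositions~\ref{proposition:string equation} and~\ref{proposition:dilaton equation}, and the polynomiality and parity in $N$ of the correlators, which settles genus $0$, genus $1$ and $N=-1$). Your proposal does not close the gap either. In the constraint route, everything beyond $L_{-1}$ and $L_0$ --- all higher Virasoro constraints and, crucially, the $W$-constraints, without which $\tau_N$ for $N\neq 0$ is not determined --- is asserted to follow ``in the same spirit'' but never derived. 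That derivation is the entire mathematical content of the conjecture: at order $\lambda_0^{-(2n+1)}$ the classification of local configurations around $C_0$ involves mixed $H$--$Z$ configurations, and whether the resulting insertions close into finitely many Schwinger--Dyson relations matching those of \cite{Ale15b} is exactly the open problem, as you yourself concede.

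Your second route contains a concrete error. The identity you call ``equivalent'' to the conjecture is pointwise in $H$, hence strictly stronger than the conjecture (which is an equality of $H$-integrals) --- and it is in fact false. Take $M=N=1$ and $H=0$, so $A=\lambda$, $W=\sqrt{\lambda^2-\oz}$, $B=z$. Expanding
\begin{equation*}
\det\frac{A+W+z}{A+W-z}=1+2\sum_{k\ge 1}\frac{z^k}{(A+W)^k}
\end{equation*}
and pairing with the covariance $\langle z\oz\rangle=2$ from~\eqref{eq:formula2 for complex}, the $k=1$ term contributes $4\,[\oz^1]\bigl((A+W)^{-1}W^{-1}\bigr)=\tfrac{3}{2\lambda^4}$, while every other term first enters at order $\lambda^{-7}$; so your left-hand side equals $\tfrac{1}{\lambda}+\tfrac{3}{2\lambda^4}+O(\lambda^{-7})\neq\tfrac{1}{\lambda}=\tfrac{1}{\det A}$. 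The conjectural equality can only hold after the $H$-integration, when such terms cancel against cross-terms generated by the cubic $H$-vertex --- which is precisely why the paper's own $N=1$ proof goes through the full HCIZ machinery (the auxiliary eigenvalue $x$, the $y$-integration producing $\delta(\tilde{\Phi}_{M+1,M+1})$) rather than any pointwise-in-$H$ identity, and why the authors only ``expect that a similar argument can be applied'' for general $N$. In short: the base cases you cite are exactly what the paper already establishes, the new steps are either unproven or (in the case of the displayed identity) wrong, and the statement remains open.
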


\subsection{Case $\bf N=1$}
In \cite{Ale15a,Ale15b} the relation between $\tau^{o,ext}$ and $\tau_1$ was established with the help of some properties of the integrable hierarchies. In this section we prove directly that for $N=1$ the integral representation~\eqref{eq:full refined matrix model} for the generating series of the extended refined open intersection numbers indeed coincides with the Kontsevich-Penner matrix integral~\eqref{eq:KPmatrixmod}.

Let
$$
\tilde{Z}_{M}:=\left.\frac{1}{c_{\Lambda,M}}\tau^{o,ext}\right|_{\substack{t_i=t_i(\Lambda),\\s_i=s_i(\Lambda).}}
$$
Then from (\ref{eq:extended matrix}) we have
\begin{eqnarray}\label{matint}
\tilde{Z}_{M}=
\frac{1}{2\pi}\int_{\mcH_M\times\mbC}e^{\frac{1}{6}\tr H^3-\frac{1}{2}\tr H^2\Lambda-\frac{1}{2}|z|^2+\frac{1}{6}z^3}\det\frac{\Lambda+B-H+ z}{\Lambda+B-H-z}
\det\frac{\Lambda }{B}\,dHd^2z,
\end{eqnarray}
where
$$
B:=\sqrt{\Lambda^2-\bar{z}}.
$$

Let us use the identity, valid for arbitrary formal series $f$ of two variables:
$$
\int_{\mbC} d^2z \,e^{-\frac{1}{2}|z|^2} f(\bar{z},z)=\int_{-\infty}^\infty dx \int_{-\infty}^\infty dy \,e^{ixy}f(-2iy,x).
$$
\begin{remark}
This relation can be considered as a simplest example of the more general relation between a complex matrix model and a Hermitian two-matrix model.
\end{remark}
This identity allows us to rewrite (\ref{matint}) as
$$
\tilde{Z}_{M}=\frac{1}{2\pi} \int_{\mcH_M} dH \int_{-\infty}^\infty dx \int_{-\infty}^\infty dy  \, e^{\frac{1}{6}\Tr H^3-\frac{1}{2}\Tr H^2 \Lambda +i x y+\frac{x^3}{6}}\det\frac{\Lambda +A-H+x}{\Lambda +A-H-x} \det\frac{\Lambda}{A},
$$
where
$$
A:=\sqrt{\Lambda^2+2iy}
$$
is a diagonal matrix $A=\diag(a_1,\dots,a_M)$. Let us change the variable of integration
$$
H\mapsto H+\Lambda +A.
$$
Then
$$
\frac{1}{6}H^3-\frac{1}{2}H^2 \Lambda\,\,\,\,\,\,\, \mapsto\,\,\,\,\,\,\, \frac{1}{3!}H^3+\frac{1}{2}H^2A+iyH +\frac{1}{6}(\Lambda+A)^2(A-2\Lambda)
$$
and
\begin{multline}
 \tilde{Z}_{M}=\frac{1}{2\pi}  \int_{\mcH_M} dH \int_{-\infty}^\infty dx \int_{-\infty}^\infty dy\,\times \\
\times \exp\left({\frac{1}{6}\Tr(\Lambda+A)^2(A-2\Lambda)+\frac{1}{6}\Tr H^3+\frac{x^3}{6} +i y (x+\Tr H)+\frac{1}{2}\Tr H^2 A}\right)\det\frac{H-x}{H+x}  \det\frac{\Lambda}{A}\notag.
\end{multline}

The Harish-Chandra-Itzykson-Zuber formula for the unitary matrix integral, dependent on two diagonal matrices
$$
V=\diag(v_1,v_2,\dots,v_M),\,\,\,\,\,\,
W=\diag(w_1,w_2,\dots,w_M)
$$
yields
$$
\int_{{\mathcal U}_M}e^{\tr U V  {\overline U}^t W } dU=\left(\prod_{k=1}^{M-1}k!\right)\frac{\det_{i,j=1}^M e^{v_iw_j}}{\prod_{1<i<j\leq M}(v_i-v_j)(w_i-w_j)}.
$$
We use this formula to integrate out the angular variables in the integral over H. Namely, we diagonalise $H$ as
$$
H=U \diag(h_1,\dots,h_M)   {\overline U}^t,
$$
where $U$ is a unitary $M\times M$ matrix, then
\begin{multline}
 \tilde{Z}_{M}=(2\pi)^\frac{M^2-M-2}{2}  \int_{-\infty}^\infty dx \int_{-\infty}^\infty dy  \,  \det\frac{\Lambda}{A} e^{\frac{1}{6}\Tr(\Lambda+A)^2(A-2\Lambda)}\times\\
\times
\int_{-\infty}^\infty d h_1 \dots \int_{-\infty}^\infty d h_M \prod_{1\leq i <j \leq M} \frac{h_i-h_j}{(h_i+h_j)(a_i-a_j)}
e^{\sum_{i=1}^M (\frac{1}{6}h_i^3+iy h_i+\frac{1}{2}h_i^2 a_i)+\frac{x^3}{6} +i y x}
\prod_{i=1}^M\frac{h_i-x}{h_i+x}.\notag
\end{multline}

We can consider $x$ as an additional eigenvalue of the Hermitian $(M+1)\times(M+1)$ matrix, which we denote it by $\tilde{\Phi}$. For the diagonal $(M+1)\times (M+1)$ matrix
$$
\tilde{A}=\diag (a_1,a_2,\dots,a_M,0),
$$
from the Harish-Chandra-Itzykson-Zuber formula it follows that
\begin{eqnarray*}
  \tilde{Z}_{M}=\frac{ \det \Lambda}{(2\pi)^{M+1}}
  \int_{-\infty}^\infty dy  \,  e^{\frac{1}{6}\Tr(\Lambda+A)^2(A-2\Lambda)}
\int_{\mcH_{M+1}} \,e^{\Tr \left(\frac{\tilde{\Phi}^3}{6}+\frac{\tilde{\Phi}^2\tilde{A}}{2}+iy \tilde{\Phi}\right)}d \tilde{\Phi} .
\end{eqnarray*}

Now we shift
$$
\tilde{\Phi} \mapsto \tilde{\Phi} -\tilde{A}
$$
so that
$$
\Tr\frac{1}{6}(\Lambda+A)^2(A-2\Lambda)+\Tr \left(\frac{\tilde{\Phi}^3}{6}+\frac{\tilde{\Phi}^2\tilde{A}}{2}+iy \tilde{\Phi}\right)\,\,\,\,\,\,\, \mapsto \,\,\,\,\,\,\, \Tr \left(\frac{\tilde{\Phi}^3}{6} -\frac{\tilde{\Lambda}^2\tilde{\Phi}}{2}-\frac{\Lambda^3}{3}\right),
$$
where
$$
\tilde{\Lambda}:=\diag(\lambda_1,\lambda_2,\dots,\lambda_M,\sqrt{-2iy}).
$$
Then
\begin{eqnarray*}
  \tilde{Z}_{M}=\frac{ \det \Lambda}{(2\pi)^{M+1}} \, e^{-\Tr \frac{\Lambda^3}{3}}
 \int_{-\infty}^\infty dy  \,
\int_{\mcH_{M+1}} e^{\Tr \left(\frac{\tilde{\Phi}^3}{6}-\frac{\tilde{\Phi}\tilde{\Lambda}^2}{2}\right)}d \tilde{\Phi} .
\end{eqnarray*}
Since
$$
\Tr\frac{\tilde{\Phi}\tilde{\Lambda}^2}{2}=\frac{1}{2}\sum_{i=1}^M \tilde{\Phi}_{i,i} \lambda_i^2 -i\,y\,\tilde{\Phi}_{M+1,M+1}
$$
we can integrate out $y$:
\begin{eqnarray*}
 \int_{-\infty}^\infty dy\, e^{iy \tilde{\Phi}_{M+1,M+1}}=2\pi \delta (\tilde{\Phi}_{M+1,M+1}),
\end{eqnarray*}
where $\delta$ is the Dirac delta-function.
Thus
\begin{eqnarray*}
  \tilde{Z}_{M}=\frac{\det \Lambda}{(2\pi)^{M}}  \, e^{-\Tr \frac{\Lambda^3}{3}}
\int_{\mcH_{M+1}} \delta (\tilde{\Phi}_{M+1,M+1})\,e^{\Tr \left(\frac{\tilde{\Phi}^3}{6}-\frac{\tilde{\Phi} {\Lambda^*}^2}{2}\right)}d \tilde{\Phi} .
\end{eqnarray*}
Here $\Lambda^*$ is an $(M+1)\times (M+1)$ diagonal matrix
$$
\Lambda^*:=\diag(\lambda_1,\lambda_2,\dots,\lambda_m,0).
$$
Let us change the variable of integration
$$
\tilde{\Phi} \mapsto \tilde{\Phi} -\Lambda^*
$$
so that
\begin{eqnarray*}
\Tr\left(\frac{\tilde{\Phi}^3}{6}-\frac{\tilde{\Phi} {\Lambda^*}^2}{2}\right)\,\,\,\,\,\,\, \mapsto \,\,\,\,\,\,\,  \Tr\left(\frac{\tilde{\Phi}^3}{6} -\frac{\tilde{\Phi}^2\Lambda^*}{2}\right) +\Tr\frac{{\Lambda}^3}{3}
\end{eqnarray*}
and
$$
 \tilde{Z}_{M}=\frac{ \det \Lambda}{(2\pi)^{M}}
\int_{\mcH_{M+1}}\delta (\tilde{\Phi}_{M+1,M+1})\,e^{\Tr \left(\frac{\tilde{\Phi}^3}{6}-\frac{\tilde{\Phi} {\Lambda^*}^2}{2}\right)}d \tilde{\Phi} .
$$

Because of the Dirac delta-function, the last integral reduces to the one over the Hermitian matrices of the form
$$
\tilde{\Phi}= \begin{pmatrix}
H & C \cr
\bar{C}^t & 0 \cr
\end{pmatrix},
$$
where $H$ is an $M\times M$ Hermitian matrix and $C$ is a complex vector. Since
\begin{eqnarray*}
\Tr \tilde{\Phi}^3 =\Tr {H}^3+3\bar{C}^t H C
\end{eqnarray*}
and
\begin{eqnarray*}
\Tr \tilde{\Phi}^2\Lambda^*=\Tr H^2\Lambda + \bar{C}^t \Lambda C
\end{eqnarray*}
we have
\begin{align*}
\tilde{Z}_{M}&=\frac{\det \Lambda}{(2\pi)^{M}}
\int_{\mcH_{M}\times\mbC^M}e^{\Tr \left(\frac{{H}^3}{6}-\frac{{H}^2 {\Lambda}}{2}\right)-\frac{1}{2}\bar{C}^t (\Lambda-{H}) C} {d{H}} \prod_{i=1}^M d C_i \\
&=
\int_{\mcH_M}e^{\frac{1}{6}\tr H^3-\frac{1}{2}\tr H^2\Lambda}\frac{\det\Lambda}{\det(\Lambda-H)}dH.
\end{align*}

\begin{remark}
We expect that a similar argument can be applied for any positive integer $N$.
\end{remark}

\subsection{Further evidence}

\subsubsection{String and dilaton equations}

String and dilaton equations for the Kontsevich-Penner model were derived in \cite{BH12,Ale15a} (In a more general setup of the Generalized Kontsevich Model the string equation in terms of the eigenvalues of the external matrix was derived already in \cite{KMMM93}). They coincide with the equations for the extended refined open partition function, derived in Sections~\ref{subsection:string equation} and~\ref{subsection:dilaton equation}.

\subsubsection{Genus expansion}\label{subsubsection:genus expansion}

Let
\begin{align*}
&F^{KP,N}:=\log\tau_N-\left.F^c\right|_{t_i=(2i+1)!!T_{2i+1}},\\
&\<\theta_{a_1}\cdots\theta_{a_n}\>^{KP,N}:=\left.\frac{\d^n F^{KP,N}}{\d T_{a_1}\cdots\d T_{a_n}}\right|_{T_*=0},\quad n\ge 1,\quad a_1,\ldots,a_n\ge 1.
\end{align*}
Then Conjecture~\ref{conjecture} is equivalent to the equation
\begin{gather}\label{eq:conjecture2}
\<\tau_{a_1}\cdots\tau_{a_l}\sigma_{c_1}\cdots\sigma_{c_k}\>^{o,ext,N}=\frac{\<\theta_{2a_1+1}\cdots\theta_{2a_l+1}\theta_{2c_1+2}\cdots\theta_{2c_k+2}\>^{KP,N}}{\prod_i(2a_i+1)!!\prod_j 2^{c_j+1}(c_j+1)!}.
\end{gather}
Let us insert genus parameters on the both sides of this equation. Let us look at the combinatorial formula~\eqref{eq:full combinatorial formula}. An elementary computation shows that for a graph $G\in\tcR^{ext}_l$ we~have
$$
-\deg\left(\prod_{e\in\Edges(G)}\lambda(e)\right)+\sum_{m\ge 0}(2m+2)\exc_m(G)=3\left(g(G)-1+l+\sum_{m\ge 0}\exc_m(G)\right),
$$
where $\deg$ denotes the degree of a rational function in $\lambda_1,\ldots,\lambda_l$. This implies that a graph $G\in\tcR^{ext}_l$ contributes only to intersection numbers $\<\prod_{i=1}^l\tau_{a_i}\prod_{j=1}^k\sigma_{c_j}\>^{o,ext,N}$ with $\sum (2a_i+1)+\sum (2c_j+2)=3(g(G)-1+l+k)$. For $g\ge 0$ define $\<\prod_{i=1}^l\tau_{a_i}\prod_{j=1}^k\sigma_{c_j}\>^{o,ext,N}_g$ to be equal to $\<\prod_{i=1}^l\tau_{a_i}\prod_{j=1}^k\sigma_{c_j}\>^{o,ext,N}$, if $\sum (2a_i+1)+\sum (2c_j+2)=3(g-1+l+k)$, and to be equal to~$0$ otherwise. Note that for a graph~$G\in\tcR^{ext}_l$ the parity of~$b(G)$ is opposite to the parity of~$g(G)$ and also~$b(G)\le g(G)+1$. Thus,
\begin{gather}\label{eq:parity for extended refined}
\<\prod\tau_{a_i}\prod\sigma_{c_j}\>^{o,ext,N}_g\,\text{is}\,\,
\begin{cases}
\text{an odd polynomial in $N$ of degree $\le g+1$}, &\text{if $g$ is even},\\
\text{an even polynomial in $N$ of degree $\le g+1$}, &\text{if $g$ is odd}.
\end{cases}
\end{gather}
In particular,
\begin{gather}\label{eq:genus 0,1 for extended refined}
\<\prod\tau_{a_i}\prod\sigma_{c_j}\>^{o,ext,N}_g=\<\prod\tau_{a_i}\prod\sigma_{c_j}\>^{o,ext,1}_g N^{g+1},\quad \text{for $g=0,1$}.
\end{gather}

Let us now look at the numbers $\<\theta_{a_1}\cdots\theta_{a_n}\>^{KP,N}$. For $n\ge 1$ denote by $\cR^{KP}_n$ the set of critical ribbon graphs with boundary, but with no boundary marked points and $n$ internal faces together with a bijective labeling $\alpha:\Faces(G)\stackrel{\sim}{\to}[n]$. Doing the Feynman diagram expansion of the Kontsevich-Penner matrix model~\eqref{eq:KPmatrixmod} (see~\cite{Saf16b}), one gets that
\begin{gather}\label{eq:KP combinatorial}
\sum_{a_1,\ldots,a_n\ge 1}\<\theta_{a_1}\cdots\theta_{a_n}\>^{KP,N}\prod_{i=1}^n\frac{1}{a_i \lambda_i^{a_i}}=\sum_{G\in\cR^{KP}_n}\frac{2^{e_I(G)-v_I(G)}}{|\Aut(G)|}N^{b(G)}\prod_{e\in\Edges(G)}\lambda(e),\quad n\ge 1.
\end{gather}
We see that, similarly to the intersection numbers~\eqref{eq:extended refined numbers}, the number $\<\theta_{a_1}\cdots\theta_{a_n}\>^{KP,N}$ is a polynomial in $N$ with rational coefficients. It is easy to see that a graph $G\in\cR^{KP}_n$ contributes only to intersection numbers $\<\theta_{a_1}\cdots\theta_{a_n}\>^{KP,N}$ with $\sum a_i=3(g(G)-1+n)$. So, for a non-negative integer~$g$ we define $\<\theta_{a_1}\cdots\theta_{a_n}\>^{KP,N}_g$ to be equal to $\<\theta_{a_1}\cdots\theta_{a_n}\>^{KP,N}$, if $\sum a_i=3(g-1+n)$, and to be equal to~$0$ otherwise. The combinatorial formula~\eqref{eq:KP combinatorial} immediately implies that
\begin{gather}\label{eq:parity for KP}
\<\prod\theta_{a_i}\>^{KP,N}_g\,\text{is}\,\,
\begin{cases}
\text{an odd polynomial in $N$ of degree $\le g+1$}, &\text{if $g$ is even},\\
\text{an even polynomial in $N$ of degree $\le g+1$}, &\text{if $g$ is odd}.
\end{cases}
\end{gather}
Therefore,
\begin{gather}\label{eq:genus 0,1 for KP}
\<\prod\theta_{a_i}\>^{KP,N}_g=\<\prod\theta_{a_i}\>^{KP,1}_g N^{g+1},\quad \text{for $g=0,1$}.
\end{gather}

Properties~\eqref{eq:parity for extended refined} and~\eqref{eq:parity for KP} agree with the conjectural equation~\eqref{eq:conjecture2}. Also these properties together with equation~\eqref{eq:conjecture for N=1} imply that Conjecture~\ref{conjecture} is true for $N=-1$. Equations~\eqref{eq:genus 0,1 for extended refined} and~\eqref{eq:genus 0,1 for KP} together with~\eqref{eq:conjecture for N=1} imply that the equation
\begin{gather}\label{eq:conjecture3}
\<\tau_{a_1}\cdots\tau_{a_l}\sigma_{c_1}\cdots\sigma_{c_k}\>_g^{o,ext,N}=\frac{\<\theta_{2a_1+1}\cdots\theta_{2a_l+1}\theta_{2c_1+2}\cdots\theta_{2c_k+2}\>_g^{KP,N}}{\prod_i(2a_i+1)!!\prod_j 2^{c_j+1}(c_j+1)!}
\end{gather}
is true for $g=0$ and $g=1$.

We have also checked equation~\eqref{eq:conjecture3} in several cases in genus~$2$.

\end{document}